\documentclass{article}

\usepackage[utf8]{inputenc}
\usepackage[T1]{fontenc}

\usepackage[margin=1.6in]{geometry}

\usepackage{amsthm, amsmath, amssymb}
\usepackage{eucal}

\usepackage{needspace}
\usepackage{float}
\usepackage{thmtools, thm-restate}
\usepackage{enumerate}
\usepackage{hyperref}
\usepackage{graphicx}
\usepackage{authblk} % affiliations
\usepackage{titlesec} % titles modifications 

  % affiliation font size
  % author font size

% preamble common to long and short version

\usepackage{xcolor}
\definecolor{citation}{HTML}{e55039}

\hypersetup{
  pdftitle = {},
  pdfauthor = {TODO},
  colorlinks = true,
  linkcolor = black!10!red, % keep it non-fancy for SIDMA submission
  citecolor = black!10!red
  %citecolor = black!30!green
}

% problem cdot
\newcommand{\pdot}{${\cdot}$} % à améliorer 
% problems
 
\newcommand{\icsenum}{{\textsc{ICS}\pdot\textsc{Enum}}}
\newcommand{\acsenum}{{\textsc{ACS}\pdot\allowbreak{}\textsc{Enum}}}

\newcommand{\acsaenum}{{\textsc{ACS\texttt{+}A}\pdot\allowbreak{}\textsc{Enum}}}
\newcommand{\cgeagen}{{\textsc{CGE\texttt{+}A}\pdot\allowbreak{}\textsc{Gen}}}
\newcommand{\cgeadec}{{\textsc{CGE\texttt{+}A}\pdot\textsc{Dec}}}

\newcommand{\misenum}{{\textsc{MIS}\pdot\textsc{Enum}}}
\newcommand{\mhsenum}{{\textsc{MHS}\pdot\textsc{Enum}}}

\newcommand{\mibgen}{{\textsc{MIB}\pdot\textsc{Gen}}}
\newcommand{\critbase}{{\textsc{CB}\pdot\textsc{Gen}}}

\newcommand{\icsext}{{\textsc{ICS}\pdot\textsc{Ext}}}

\newcommand{\algoa}{\mathsf{A}}
\newcommand{\algob}{\mathsf{B}}

\newcommand{\N}{\mathcal{N}}
\newcommand{\NP}{{\sf NP}}
\newcommand{\XP}{{\sf XP}}
\newcommand{\FPT}{{\sf FPT}}

\newcommand{\F}{\mathcal{F}}

\renewcommand{\H}{\mathcal{H}}

\newcommand{\E}{\mathcal{E}}

 % independents of a matroid
\newcommand{\C}{\mathcal{C}}  % closure system
\renewcommand{\S}{\mathcal{S}}  % closure system

\newcommand{\GC}{\mathrm{{GC}}}

\newcommand{\poly}{\mathrm{poly}}
\newcommand{\crit}{\mathrm{crit}}
\newcommand{\critgen}{\mathrm{critgen}}
\newcommand{\critanc}{\mathrm{critanc}}
\newcommand{\mingen}{\mathrm{mingen}}
\newcommand{\anc}{\mathrm{anc}}
\newcommand{\card}[1]{\vert #1 \vert}  % cardinal

\newcommand{\pow}[1]{\mathcal{P}(#1)}

\newcommand{\pdeg}{\mathrm{pdeg}}
\newcommand{\cdeg}{\mathrm{cdeg}}
\newcommand{\MIS}{\mathrm{MIS}}
\newcommand{\MHS}{\mathrm{MHS}}
\newcommand{\irr}{\mathrm{irr}}
\newcommand{\ex}{\mathrm{ex}}

% problem environment
\usepackage{tabularx}
\usepackage{environ}
\usepackage{makecell}
\makeatletter
\newcommand{\problemtitle}[1]{\gdef\@problemtitle{\normalsize{}#1}}% 
\newcommand{\problemparameter}[1]{\gdef\@problemparameter{\normalsize{}#1}}% Store problem title
\newcommand{\probleminput}[1]{\gdef\@probleminput{\normalsize{}#1}}% Store problem input
\newcommand{\problemquestion}[1]{\gdef\@problemquestion{\normalsize{}#1}}% Store problem question
\NewEnviron{problemgen}{
\problemtitle{}\problemparameter{}\probleminput{}\problemquestion{}% Default input is empty
\BODY% Parse input
\par\addvspace{.5\baselineskip}
\noindent
\begin{tabularx}{\textwidth}{@{\hspace{\parindent}} l X c}
    \multicolumn{2}{@{\hspace{\parindent}}l}{\@problemtitle} \\% Title
    % \textbf{Parameter:} & \@problemparameter \\% Input
    \textbf{Input:} & \@probleminput \\% Input
    \textbf{Output:} & \@problemquestion% Output
  \end{tabularx}
  \par\addvspace{.5\baselineskip}
}

\NewEnviron{problemdec}{
\problemtitle{}\probleminput{}\problemquestion{}% Default input is empty
  \BODY% Parse input
  \par\addvspace{.5\baselineskip}
  \noindent
  \begin{tabularx}{\textwidth}{@{\hspace{\parindent}} l X c}
    \multicolumn{2}{@{\hspace{\parindent}}l}{\@problemtitle} \\% Title
    \textbf{Input:} & \@probleminput \\% Input
    \textbf{Question:} & \@problemquestion% Output
  \end{tabularx}
  \par\addvspace{.5\baselineskip}
}

\usepackage{tikz}
\usetikzlibrary{shapes}

% \input{draft-macros}

% other packages 
\usepackage{thm-restate}

% newtheorem is built-in, declaretheorem is defined in thmtools (and generalizes newtheorem)
% it allows to make environments restatable
\declaretheorem[parent=section, name=Theorem, style=plain]{theorem}
\declaretheorem[sibling=theorem, name=Lemma, style=plain]{lemma}
\declaretheorem[sibling=theorem, name=Corollary, style=plain]{corollary}
\declaretheorem[sibling=theorem, name=Proposition, style=plain]{proposition}
\declaretheorem[sibling=theorem, name=Question, style=plain]{question}

\declaretheorem[sibling=theorem, name=Example, style=definition]{example}
\declaretheorem[sibling=theorem, name=Observation, style=definition]{observation}

\declaretheorem[sibling=theorem, name=Remark, style=remark]{remark}

\newcommand{\myparagraph}[1]{\paragraph{#1}}

% scale for figures
% \def\FIGmultigraph{0.8}  % fig:multigraph

% narrow conference-like abstract
\renewenvironment{abstract}
{\small\vspace{-1em}
\begin{center}
\bfseries\abstractname\vspace{-.5em}\vspace{0pt}
\end{center}
\list{}{
\setlength{\leftmargin}{0.6in}%
\setlength{\rightmargin}{\leftmargin}}%
\item\relax}
{\endlist}

\usepackage{lineno}
% \linenumbers

% \usepackage{ifthen}
% \newboolean{long-version}
% % \setboolean{long-version}{false}  
% \setboolean{long-version}{true}
% \newcommand{\iflongelse}[2]{\ifthenelse{\boolean{long-version}}{{\color{red}{#1}}}{{\color{red}#2}}}
% \newcommand{\iflongelse}[2]{\ifthenelse{\boolean{long-version}}{{#1}}{#2}}

\newif\iflongversion
\longversiontrue
% \newcommand{\iflongelse}[2]{\iflongversion{\color{red}#1}\else{\color{blue}#2}\fi}
% colors are not OK they create extra space for proofs
% let's make a longproof environment

% \newenvironment{longproof}[1]{\begin{proof}[\color{red}Proof.]\color{red}#1}{\end{proof}} % with color to verify the changes

\title{%
Translating between the representations\\ of an acyclic convex geometry of bounded degree\thanks{An extended abstract of this work has been accepted for publication at the 36th International Symposium on Algorithms and Computation (ISAAC 2025). The authors have been supported by the ANR project PARADUAL (ANR-24-CE48-0610-01).}}

% \title{%
% Enumerating the irreducible closed sets of an\\ acyclic implicational base of bounded degree\thanks{The authors have been supported by the ANR project PARADUAL (ANR-24-CE48-0610-01).}}

\author[1]{Oscar Defrain}
\author[1,2]{Arthur Ohana}
\author{Simon Vilmin}

\affil[1]{Aix-Marseille Université, CNRS, LIS, Marseille, France.}
\affil[2]{Université Grenoble Alpes, France}

%\date{April 2025} 

\begin{document}

\maketitle

\vspace{-.3cm}

\begin{abstract}

We consider the problem of translating between irreducible closed sets and implicational bases in closure systems.
To date, the complexity status of this problem is widely open, and it is further known to generalize the notorious hypergraph dualization problem, even in the case of acyclic convex geometries, i.e., closure systems admitting an acyclic implicational base.
This paper studies this case with a focus on the degree, which corresponds to the maximal number of implications in which an element occurs.
We show that the problem is tractable for bounded values of this parameter, even when relaxed to the notions of premise- and conclusion-degree.
Our algorithms rely on structural properties of acyclic convex geometries and involve various techniques from algorithmic enumeration such as solution graph traversal, saturation techniques, and a sequential approach leveraging from acyclicity.
They are shown to perform in incremental-polynomial time.
Finally, we complete these results by showing that our running times cannot be improved to polynomial delay using the standard framework of flashlight search.

\vskip5pt\noindent{}{\bf Keywords:} algorithmic enumeration, closure systems, acyclic convex geometries, implicational bases, irreducible closed sets, incremental-poly\-nomial, saturation algorithms, solution graph traversal, flashlight search.
\end{abstract}

\section{Introduction}

Finite closure systems 
% are particular set systems that % trop réducteur je trouve
appear in disguise in several areas of mathematics and computer science, including database theory~\cite[Chapter 13]{mannila1992design}, Horn logic~\cite[Chapter 6]{crama2011boolean}, or lattice theory~\cite[Chapter 2]{davey2002introduction}, to mention but a few.

A closure system is a family of subsets of some finite ground set being closed by intersection and containing the ground set.
The study of the various representations of closure systems and the algorithmic task of translating between them have received lots of attention in these respective fields, and have been the topic of the Dagstuhl Seminar 14201~\cite{dagstuhl2014}.  
The translation task we study in this paper is most importantly known for being a considerable generalization of the notorious and widely open hypergraph dualization problem, while also not being known to be intractable, contrary to analogous other natural generalizations such as lattice dualization~\cite{kavvadias2000generating, babin2017dualization,defrain2020dualization}.
We redirect the reader to the recent works~\cite{distel2011complexity,babin2013computing,beaudou2017algorithms,defrain2021translating,nourine2023hierarchical} as well as to the surveys~\cite{wild2017joy,bertet2018lattices, rudolph2017succinctness} for more information on the topic.

Among the several representations of closure systems, two of them occupy a central role, namely, \emph{irreducible closed sets} and \emph{implicational bases}, whose formal definitions are postponed to Section~\ref{sec:preliminaries}.
Informally, irreducible closed sets are particular sets of the closure system, while implicational bases are sets of implications of the form $A \to B$, where $A$, the premise, and $B$, the conclusion, are subsets of the ground set.
A closure system has a unique collection of irreducible closed sets, but it can be represented by several equivalent implicational bases.
As noted in~\cite{khardon1995translating}, these two representations are generally incomparable, in size and usefulness. 
Notably, there exist instances where the minimum cardinality of an implicational base is exponential in the number of irreducible closed sets, and vice versa. 
Moreover, the computational complexity of some algorithmic tasks such as reasoning, abduction, or recognizing lattice properties can differ significantly depending on the representation at hand~\cite{kautz1993reasoning,babin2013computing,mannila1992design}. 
These observations motivate the problem of translating between the two representations, which has two sides: enumerate the irreducible closed sets from an implicational base, and produce a compact implicational base from the collection of irreducible closed sets.
These problems are respectively denoted by \icsenum{} and \mibgen{} throughout the paper and formally defined in the next section.

Because of the possible exponential gap between the sizes of the two representations, input-sensitive complexity is not a meaningful efficiency criterion when analyzing the performance of an algorithm solving \icsenum{} or \mibgen{}.
Instead, the framework of algorithmic enumeration in which an algorithm has to list without repetitions a set of solutions, is usually adopted.
The complexity of an algorithm is then analyzed from the perspective of \emph{output-sensitive complexity}, which estimates the running time of the algorithm using both input and output sizes.
In that context, an algorithm is considered tractable when it runs in \emph{output-polynomial} time, that is, if its execution time is bounded by a polynomial in the combined sizes of the input and the output.
This notion can be further constrained to guarantee some regularity in the enumeration.
Namely, we say that an algorithm runs in \emph{incremental-polynomial time} if it outputs the $i^\text{th}$ solution in a time which is bounded by a polynomial in the input size plus $i$.
It is said to run with \emph{polynomial delay} if the time spent before the first output, after the last output, and between consecutive outputs is bounded by a polynomial in the input size only.
We refer the reader to~\cite{johnson1988generating,strozecki2019survey} for more details on enumeration complexity. 

In this work, we investigate \emph{acyclic convex geometries}, a particular class of closure systems.
They are the convex geometries---closure systems with the anti-exchange property---that admit an implicational base with acyclic implication graph.
Acyclic convex geometries have been extensively studied~\cite{hammer1995quasi,wild1994theory,boros2009subclass,santocanale2014lattices,defrain2021translating}.
Even in this class though, the two problems \icsenum{} and \mibgen{} have been shown in \cite{defrain2021translating} to be harder than distributive lattices dualization, a generalization of hypergraph dualization.
On the other hand, the authors in \cite{defrain2021translating} show that if, in addition to acyclicity, the implication graph admits a rank function, then the problem can be solved using hypergraph dualization.
Using this result, they derive a tractable algorithm for so-called ranked convex geometries admitting an implicational base of bounded premise size.
It should be noted that, even under these restrictions, the problem generalizes the one of enumerating all maximal independent sets of a graph (in fact of hypergraphs of bounded edge size) a central and notorious problem in enumeration~\cite{tsukiyama1977new,johnson1988generating}.
These results have then been extended to a more general setting in~\cite{nourine2023hierarchical}.
Let us finally note that, outside of acyclic convex geometries, the translation task is known to be tractable in modular or $k$-meet-semidistributive lattices~\cite{wild2000optimal, beaudou2017algorithms}, or in closure spaces of bounded Caratheodory number~\cite{wild2017joy}.

In this paper, we continue the line of research of characterizing tractable cases of acyclic convex geometries for \icsenum{} and \mibgen{}, typically, by identifying for which parameters $k$ we can obtain tractable algorithms for these problems for fixed values of $k$.
We note that in this research, 
% performing in quasi-polynomial time should already be considered a success, 
finding an algorithm running in quasi-polynomial time should already be considered a success,
since no sub-exponential time algorithms are known to date.
We focus on the degree, which is defined as the maximum number of implications in which an element participates, and its two obvious relaxations which are the premise- and conclusion-degree; these parameters are defined in Section~\ref{sec:preliminaries}.
The principal message of this work is that bounding one of these two relaxed notions of degree suffices to provide not only quasi-polynomial but polynomial-time bounds for the translation.
More specifically, as the first of our two main theorems, we obtain the following, whose quantitative formulation is postponed to Sections~\ref{sec:conclusion-degree} and \ref{sec:premise-degree}.

\begin{restatable}{theorem}{restatethmicsenum}
\label{thm:icsenum}
    There is an incremental-polynomial time algorithm enumerating the irreducible closed sets of an acyclic convex geometry given by an acyclic implicational base of bounded premise- or conclusion-degree.
\end{restatable}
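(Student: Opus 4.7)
The plan is to prove Theorem~\ref{thm:icsenum} by treating the two degree parameters separately, as each calls for a different algorithmic strategy. In both cases, acyclicity of the implicational base $\Sigma$ provides a topological ordering $x_1, \dots, x_n$ of the ground set in which every implication $A \to b$ places $b$ strictly after every element of $A$; this ordering is the backbone of the analyses.

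For bounded conclusion-degree, I would use a solution graph traversal approach. First, I define a supergraph on the set of irreducible closed sets in which two irreducibles are adjacent whenever one can be obtained from the other by a local ``swap'' that rewrites only a bounded portion of $\Sigma$. Two properties then need to be established: (i) the supergraph is connected, and (ii) the neighbors of any irreducible can be listed in time polynomial in the input. The bound on conclusion-degree enters in (ii), since the number of implications that can force a given element is constant, which in turn limits the branching at each swap. A BFS from an easily computed initial irreducible, equipped with a dictionary discarding repetitions, then lists all solutions in incremental-polynomial time.

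For bounded premise-degree, the plan is to exploit acyclicity more directly via a sequential approach. I process the ground set in topological order and, for each prefix $\{x_1, \dots, x_i\}$, maintain a family of partial witnesses that extend to irreducible closed sets. When $x_{i+1}$ is appended, only implications whose premise contains $x_{i+1}$ become active; the premise-degree bound keeps this count constant, so each partial witness branches into a bounded number of successors. A saturation step identifies partial witnesses that collapse onto the same full irreducible, and amortization against solutions already produced yields the incremental-polynomial guarantee.

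The main obstacle in both approaches is the combinatorial explosion that implications classically induce: bounded degree alone does not forbid long dependency chains with exponentially many extensions. The hardest technical step I anticipate is the connectivity proof in the conclusion-degree case, which requires designing a canonical transformation between arbitrary irreducibles through bounded-size swaps. In the premise-degree case, the subtle point is charging the work of extending partial witnesses against the full solutions they eventually complete, which again relies crucially on the acyclicity of $\Sigma$.
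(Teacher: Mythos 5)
Your proposal swaps the paper's two techniques and, in both cases, misses the key reduction that makes them work. The paper first reduces \icsenum{} to the auxiliary problem \acsaenum{} via the top-down procedure (Theorem~\ref{thm:acsa-to-ics}): process elements in topological order and, at step $i$, enumerate $\irr(x_i)$ \emph{given $\irr(y)$ for all ancestors $y$ of $x_i$}. That extra information is essential: it is what lets both of the paper's sub-algorithms construct candidate irreducibles as intersections of \emph{irreducible selections} over a small hypergraph of premises. Your proposal never singles out this decomposition; instead it tries a solution-graph traversal on all of $\irr(\C)$ for bounded conclusion-degree and a prefix-based witness scheme for bounded premise-degree, and both lose the lever that the ancestors' solutions provide.

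Concretely, for bounded \emph{conclusion}-degree the paper does not use solution graph traversal at all. It exploits Lemma~\ref{lem:conclusion-intersection-characterization}: every $M\in\irr(x)$ is $(\bigcap\S)\setminus\{x\}$ for some minimal transversal $T$ of the hypergraph $\H_x$ of premises implying $x$ and an irreducible selection $\S\in\S(T)$ over ancestors. Since $|\H_x|\leq\cdeg(\Sigma)$, the transversals have bounded size and can be brute-forced, and the selections come from the ancestor families already computed. Your ``local swap'' supergraph would need a connectivity proof, which you yourself flag as the hard part; the paper sidesteps that entirely here. Conversely, for bounded \emph{premise}-degree, solution graph traversal is exactly the paper's tool: the relevant hypergraph $\HMy$ has at most $\pdeg(y)$ edges, which bounds the branching of the neighboring function $\N(M)$, and Lemmas~\ref{lem:intersection-progress} and \ref{lem:sol:neighbor-proximity} give strong connectivity. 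Your prefix-based ``partial witness'' scheme also rests on a misreading of the topological order: when you append $x_{i+1}$ to a prefix, it is implications with $x_{i+1}$ in the \emph{conclusion} (not the premise) whose premises now lie entirely in the prefix and hence become active, so the relevant bound would be conclusion-degree, not premise-degree; and even then, without the irreducible-selection machinery there is no control on the number of partial witnesses, so the ``saturation'' and ``amortization'' steps remain unjustified. In short: the top-down reduction to \acsaenum{} is the missing backbone, the parameter--technique pairing should be reversed, and the crucial structural lemmas (transversals of $\H_x$, intersections of irreducible selections) are absent from the proposal.
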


Our algorithms for acyclic convex geometries of bounded degree rely on several steps which combine structure and enumeration techniques.
We give a rough description which highlights the techniques used, and refer to the appropriate sections for more details.

For generating irreducible closed sets from an acyclic implicational base, we use the fact that in a convex geometry, each irreducible closed set is \emph{attached} to a unique element of the ground set; this is detailed in Section~\ref{sec:preliminaries}.
Then, leveraging from acyclicity, our algorithm relies on a sequential procedure which consists of enumerating, incrementally and in a top-down fashion according to a topological ordering $x_1,\dots,x_n$ of the elements of the ground set, all the irreducible closed sets attached to $x_i$.
This step can be seen as enumerating the irreducible closed sets attached to $x_i$ with additional information as input, namely, the irreducible closed sets of ancestor elements.
Finally, it remains to solve this latter task, which we can do either by brute forcing the minimal transversals of a carefully designed hypergraph in case of bounded conclusion-degree, or performing a solution graph traversal in case of bounded premise-degree.
In the end, the resulting algorithms run in incremental-polynomial time, where the dependence in the number of solutions arises from the sequential top-down procedure.

We then turn to the opposite translation task of generating a compact implicational base given the irreducible closed sets of an acyclic convex geometry. 
To do so, we rely on the concept of critical generators~\cite{korte1983shelling, wild1994theory, hammer1995quasi}.
These are known to form an acyclic implicational base of minimum cardinality (in fact, minimum among those with conclusions of size one, or minimum once aggregated, a point that is discussed in Section~\ref{sec:acyclic}) in the class of acyclic geometries.
This implicational base, referred to as the critical base, thus constitutes a candidate of choice to be generated from irreducible closed sets.
We obtain the following as our second main theorem.

\begin{restatable}{theorem}{restatethmmibgen}
\label{thm:critgen}
    There is a polynomial-time algorithm constructing the critical base of an acyclic convex geometry of bounded premise- or conclusion-degree given by its irreducible closed sets.
\end{restatable}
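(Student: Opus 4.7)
The plan is to build the critical base one implication at a time, exploiting the fact that the irreducible closed sets provide a polynomial-time closure oracle: for any set $S$, its closure $\sigma(S)$ is simply the intersection of all input irreducible closed sets that contain $S$. This oracle also allows us to recover, for each element $x$, the irreducible closed sets attached to $x$, and hence a topological order compatible with the acyclic implication structure underlying the critical base.

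A first observation is that under either bounded premise- or conclusion-degree $d$, the critical base has at most $O(dn)$ implications, where $n$ is the size of the ground set. For bounded conclusion-degree this is immediate, since the critical base has singleton conclusions and each element is the conclusion of at most $d$ implications. For bounded premise-degree, every implication contributes at least one element to the multiset $\{(x, A) : x \in A, A\to \{y\} \text{ in the base}\}$ (plus at most one implication with empty premise), and each element appears in at most $d$ premises, yielding the same bound. This polynomial bound on the output size is what makes polynomial-time construction plausible at all.

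The algorithm then processes the elements in a topological order $x_1,\dots,x_n$ compatible with the acyclic structure. For each $x_i$, it produces all critical generators with conclusion $x_i$ using the closure oracle together with the portion of the base already constructed for the ancestors of $x_i$. Concretely, candidate premises are derived from the irreducible closed sets that do not contain $x_i$ (the sets we must ``separate'' from $x_i$), then greedily minimized while preserving $x_i \in \sigma(\cdot)$; the criticality condition from \cite{korte1983shelling, wild1994theory, hammer1995quasi} is then checked against the already-computed fragment of the base. Under bounded conclusion-degree, at most $d$ critical generators need to be discovered per element; under bounded premise-degree, the multiplicity bound plays the analogous role by limiting how often each element can reappear across candidate premises.

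The main obstacle will be turning this sketch into a procedure that enumerates \emph{exactly} the critical generators at $x_i$, neither missing any nor producing spurious duplicates. The subtlety is that criticality of $(A,\{x_i\})$ is not merely a matter of $A$ being a minimal generator of $x_i$, but involves a global minimality condition that ties $A$ to the rest of the base; verifying it naively would require access to the whole base. The acyclic processing order, the bounded degree, and the closure oracle together should allow this condition to be checked locally and in polynomial time, but establishing this reduction rigorously—together with its sensitivity to whether one bounds the premise- or the conclusion-degree—is the technical heart of the proof.
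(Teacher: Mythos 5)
Your high-level frame matches the paper's: a top-down pass over a topological order of the ground set (recoverable from $\irr(\C)$), the observation that bounded (premise- or conclusion-) degree yields a polynomial bound $O(d\cdot|X|)$ on the number of critical generators, and the use of the irreducible closed sets as a closure oracle. All of these are present in the paper and correctly stated in your proposal.

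However, there is a genuine gap exactly where you flag it: the inner loop that is supposed to produce, for a fixed $x_i$, the \emph{exact} set of critical generators with conclusion $x_i$. Your sketch—take irreducible closed sets avoiding $x_i$, greedily minimize, test criticality—does not yield a polynomial-time procedure. The number of minimal generators of $x_i$ (duals of $\irr(x_i)$) can be exponential, and you give no mechanism for reaching only the critical ones nor a stopping criterion; ``checking criticality against the already-computed fragment of the base'' is precisely the step that is unclear, since criticality is a global minimality condition over $\C$. The paper's solution to this is a \emph{saturation} procedure that is entirely absent from your proposal: maintain a partial family $\F\subseteq\critgen(x_i)$, form the acyclic IB $\Sigma'$ consisting of $\{A\to x_i : A\in\F\}$ together with the critical implications of the ancestors, and run the incremental-polynomial algorithm for \acsaenum{} (from Theorem~\ref{thm:icsenum}) on the closure system $\C'$ defined by $\Sigma'$. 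Since $\C\subseteq\C'$, either $\irr_{\C'}(x_i)\subseteq\irr_{\C}(x_i)$, in which case $\F=\critgen(x_i)$ (Lemma~\ref{lem:gen-crits-finished}), or some $M\in\irr_{\C'}(x_i)\setminus\irr_\C(x_i)$ is found within the first $|\irr_\C(x_i)|+1$ outputs and a new critical generator $A\subseteq M$ can be extracted in polynomial time via Proposition~\ref{prop:critical-poly} (Lemma~\ref{lem:extract-gen-from-irr}). Two supporting facts are also needed and are not in your proposal: that the aggregation $\Sigma'^*$ still has bounded degree (Corollary~\ref{cor:crit-degree} plus $\Sigma'\subseteq\Sigma_\crit$), so the \acsaenum{} algorithm applies; and that $\irr_\C(y)=\irr_{\C'}(y)$ for every ancestor $y$ of $x_i$ (Lemma~\ref{lem:same-irreducibles-ancestor}), so the ancestor data fed to \acsaenum{} is the correct one for $\C'$. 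Without the saturation idea and these two lemmas the inner step does not go through, so the proposal as written does not establish the theorem.
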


This second algorithm also relies on the top-down approach 
except that, this time, we generate all the critical generators of $x_i$ at each step.
However, since the implicational base is not given, a particular care is devoted to arguing of the choice of the ordering $x_1,\dots,x_n$ and the fact that it can be computed from the irreducible closed sets only.
Then, the enumeration of critical generators is conducted
using a saturation algorithm relying on the existence of an algorithm for the sequential enumeration of irreducible closed sets we obtained in order to prove Theorem~\ref{thm:icsenum}.
We note that in this direction, our algorithm is much more efficient thanks to the fact that the output has polynomial cardinality.

In order to prove Theorems~\ref{thm:icsenum} and \ref{thm:critgen}, we build upon a number of existing properties of acyclic convex geometries coming from different fields.
This results in a substantial presentation of acyclic convex geometries which aims to be self-contained and accessible.
Besides, we also extend this presentation with a study of the degree in this class.

The remainder of the paper is organized as follows. 
Section~\ref{sec:preliminaries} introduces the concepts and notations used throughout the paper.
Section~\ref{sec:acyclic} continues the state of the art via a focus on acyclic convex geometries, and include additional observations that may be of independent interest.
Section~\ref{sec:top-down} presents the top-down approach that we use as the key ingredient in our algorithms.
Sections~\ref{sec:conclusion-degree} and \ref{sec:premise-degree} are devoted to proving Theorem~\ref{thm:icsenum} in the respective cases of bounded conclusion- and premise-degree.
Theorem~\ref{thm:critgen} is proved in Section~\ref{sec:mibgen}.
Finally, we discuss possible generalizations, limitations, and open directions in Section~\ref{sec:discussion}.

\section{Preliminaries} \label{sec:preliminaries}

All the objects we consider in this paper are finite.
If $X$ is a set, $\pow{X}$ is its powerset.
The size of $X$ is denoted $\card{X}$.
A \emph{set system}, or \emph{hypergraph}, over a ground set $X$ is a pair $(X, \F)$ where $\F$ is a collection of subsets of $X$, that is $\F \subseteq \pow{X}$.
Its \emph{size} is $|X|+|\F|$; note that the binary length of any standard encoding of a hypergraph is polynomially related to its size.
Thus, for simplicity, we shall consider this measure for families as input size in the different problems we will consider.
If $X$ is clear from the context, we may identify $\F$ with $(X, \F)$.
In examples, if there is no ambiguity, we may write a set as the concatenation of its elements, for instance $123$ instead of $\{1, 2, 3\}$. 
Similarly, to avoid cumbersome notations, we sometimes withdraw brackets when dealing with singletons, e.g., we may write $\phi(x)$ instead of $\phi(\{x\})$.

\myparagraph{Independent and hitting sets, hypergraph parameters.}
Let $\H = (X, \mathcal{E})$ be a hypergraph.
We call \emph{vertices} of $\H$ the elements in $X$, and \emph{edges} the sets in $\mathcal{E}$.
We say that $\H$ is \emph{Sperner} if any two of its edges are pairwise incomparable by inclusion.
An \emph{independent set} of $\H$ is a set $I \subseteq X$ that contains no edges of $\H$, i.e., $E \nsubseteq I$ for every $E$ in $\mathcal{E}$.
The family of inclusion-wise maximal independent sets of $\H$ is denoted $\MIS(\H)$.
Formally:
\[ 
\MIS(\H) := \max_\subseteq \{I : \text{$I \subseteq X$, $E \nsubseteq I$ for every $E$ in $\mathcal{E}$} \}.
\]
Dually, a \emph{hitting set}, or \emph{transversal}, of $\H$ is a subset $T$ of $X$ that intersects each edge of $\H$.
The collection of inclusion-wise minimal hitting sets of $\H$ is called $\MHS(\H)$:
\[ 
\MHS(\H) := \min_\subseteq \{T : \text{$T \subseteq X$, $T \cap E \neq \emptyset$ for every $E$ in $\mathcal{E}$}\}.
\]
The hypergraph $\MHS(\H)$ is usually called the \emph{dual hypergraph} of $\H$.
Observe that $I \subseteq X$ is independent if and only if $T := X \setminus I$ is a hitting set.
Consequently, $\MHS(\H) = \{X \setminus I : I \in \MIS(\H)\}$. 
Moreover, if $\H$ is not Sperner, then its minimal hitting sets (resp.\ maximal independent sets) are precisely those of the hypergraph formed by the inclusion-wise minimal edges of $\H$.
The two equivalent generation problems associated to maximal independent sets and minimal hitting sets read as follows:
\begin{problemgen}
\problemtitle{Maximal Independent Sets Enumeration (\misenum)}
\probleminput{A (Sperner) hypergraph $\H = (X, \mathcal{E})$.}
\problemquestion{The family $\MIS(\H)$.}
\end{problemgen}
\begin{problemgen}
\problemtitle{Minimal Hitting Sets Enumeration (\mhsenum)}
\probleminput{A (Sperner) hypergraph $\H = (X, \mathcal{E})$.}
\problemquestion{The family $\MHS(\H)$.}    
\end{problemgen}
To date, the best algorithm for these tasks runs in output-quasi-polynomial time and is due to Fredman and Khachiyan~\cite{fredman1996complexity}.
However, a number of output-polynomial time algorithms have been proposed for particular cases, especially when bounding structural parameters~\cite{eiter2008computational}.
Among these parameters, we will be interested in the \emph{dimension}, which is the size of the largest edge, or the \emph{dual dimension}, being the dimension of the dual hypergraph.

\myparagraph{Closure systems, closure spaces.} A \emph{closure system} is a set system $(X, \C)$ that satisfies $X \in \C$, and $C \cap C' \in \C$ whenever $C, C'\in \C$.
The sets in $\C$ are called \emph{closed (sets)}.
A \emph{closure space} is a pair $(X, \phi)$ where $\phi \colon \pow{X} \to \pow{X}$ is a mapping that satisfies for every $A, B \subseteq X$: (1) $A \subseteq \phi(A)$, (2) $\phi(A) \subseteq \phi(B)$ if $A \subseteq B$, and (3) $\phi(\phi(A)) = \phi(A)$.
The map $\phi$ is a \emph{closure operator}.
Closure spaces and closure systems are in one-to-one correspondence.
More precisely, if $(X, \phi)$ is a closure space, the family $\C := \{C : C \subseteq X,\ \phi(C) = C\}$ defines a closure system.
Dually, if $(X, \C)$ is a closure system, the map $\phi$ defined by $\phi(A) := \min_{\subseteq} \{C : C \in \C,\ A \subseteq C\} = \bigcap \{C \colon C \in \C,\ A \subseteq C\}$ is a closure operator whose associated closure system is precisely $(X, \C)$.
In this paper, if $(X, \phi)$ is a closure space, we call $(X, \C)$ its \emph{corresponding} closure system, and vice-versa. 
Moreover, we always consider without loss of generality that $\emptyset \in \C$, a standard assumption.

Let $(X, \C)$ be a closure system.
The poset $(\C, \subseteq)$ of closed sets ordered by inclusion is a \emph{(closure) lattice} where $C \land C' = C \cap C'$ and $C \lor C' = \phi(C \cup C')$.
A \emph{predecessor} of a closed set $C$ is an inclusion-wise maximal closed set $F$ properly contained in $C$, i.e., $F \subset C$ and for every $C' \in \C$, $F \subseteq C' \subseteq C$ implies $C' = F$ or $C' = C$.
\emph{Successors} are defined dually.
Let $M \in C$, $M \neq X$. 
The closed set $M$ is \emph{irreducible} if it is not the intersection of other distinct closed sets, that is, $M = C \cap C'$ for some $C, C' \in \C$ only if $C = M$ or $C' = M$.
Irreducible closed sets are in fact the members of $(\C, \subseteq)$ with a unique successor in the poset.
We denote by $\irr(\C)$ the family of irreducible closed sets of $(X, \C)$.
For every $C \in \C$, it is known that $C = \bigcap \{M : M \in \irr(\C),\ C \subseteq M\}$.
Moreover, $\irr(\C)$ is the unique minimal subset of $\C$ from which $\C$ can be reconstructed using intersections.
Let $x \in X$.
We denote by $\irr(x)$ the inclusion-wise maximal closed sets not containing $x$, i.e.,
\[ \irr(x) := \max_{\subseteq} \{M : M \in \C,\, x \notin M\}. \]
Observe that $\irr(x) \subseteq \irr(\C)$ for every $x \in X$.
An irreducible closed set $M$ in $\irr(x)$ is said to be \emph{attached} to $x$.
In fact, it is well known that $\irr(\C) = \bigcup_{x \in X} \irr(x)$.
A \emph{(non-trivial) minimal generator} of $x$ is an inclusion-wise minimal subset $A$ of $X \setminus \{x\}$ satisfying $x \in \phi(A)$.
The family of minimal generators of $x$ is denoted $\mathrm{mingen}(x)$.
Again, it is well-known~\cite{mannila1992design} that $\irr(x)$ and $\mathrm{mingen}(x)$, seen as hypergraphs over $X$, are dual to one another, that is:
\begin{align*}
\irr(x) & = \MIS(\mathrm{mingen}(x) \cup \{\{x\}\}),\\ 
\mathrm{mingen}(x) & = \MHS(\{X \setminus (M \cup \{x\}) : M \in \irr(x)\}).
\end{align*} 
An example of a closure system and its irreducible closed sets is depicted in Figure \ref{fig:prelim-example}.
Let $A \subseteq X$.
An \emph{extreme point} of $A$ is an element $x \in A$ such that $x \notin \phi(A \setminus \{x\})$.
We denote by $\ex(A)$ the extreme points of $A$.
In particular, if $C \in \C$ and $x \in \ex(C)$, then $C \setminus \{x\} \in \C$.
Note that for every $C, C' \in \C$ such that $C \subseteq C'$, the extreme points of $C'$ that belong to $C$ must also be extreme points of $C$, i.e., $\ex(C') \cap C \subseteq \ex(C)$.
Indeed, $x \in \ex(C')$ entails $C' \setminus \{x\} \in \C$ and hence $C \cap C' \setminus \{x\} = C \setminus \{x\} \in \C$, that is, $x \in \ex(C)$.
Dually, if $x \notin \ex(C)$ for some $C \in \C$, then $x \notin \ex(C')$ for every $C' \in \C$ such that $C \subseteq C'$.
If $C \in \C$, a \emph{spanning set} of $C$ is a subset $A$ of $X$ satisfying $\phi(A) = C$.
\begin{figure}[ht!]
    \centering
    \includegraphics[page=1, scale=0.8]{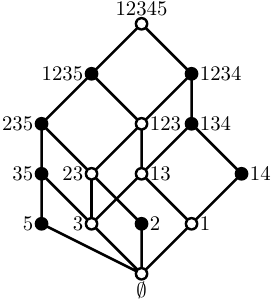}
    \caption{A closure system over $X = \{1, 2, 3, 4, 5\}$ represented via the Hasse diagram of its closure lattice. Black dots are irreducible closed sets; they are the ones with a unique successor. We have, for instance, $\phi(25) = 235$ and $\phi(4) = 14$, $\irr(2) = \{35, 134\}$ and $\mathrm{mingen}(3) = \{25, 15, 24, 45\}$. A spanning set of $1235$ is $135$. The extreme points of $135$ are $1$ and $5$.}
    \label{fig:prelim-example}
\end{figure}

\myparagraph{Convex geometries.} 
A closure system $(X, \C)$ is a \emph{convex geometry} if $\emptyset \in \C$ and for every closed set $C$ distinct from $X$ there exists $x \notin C$ such that $C \cup \{x\}$ is closed.
The closure system depicted in Figure \ref{fig:prelim-example} is a convex geometry.
Equivalently, $(X, \C)$ is a convex geometry if the corresponding closure operator $\phi$ has the \emph{anti-exchange} property: for every closed set $C$ and each distinct $x, y \notin C$, $x \in \phi(C \cup \{y\})$ entails $y \notin \phi(C \cup \{x\})$.
Convex geometries have been rediscovered in several fields and, as such, enjoy several characterizations~\cite{monjardet1985use}.
We summarize some of them in the subsequent statement:
\begin{proposition}[see e.g.,~{\cite[Theorems 2.1, 2.4]{edelman1985theory}}] \label{prop:cg-meet-partition}
Let $(X, \C)$ be a closure system such that $\emptyset \in \C$.
The following are equivalent:
\begin{enumerate}
    \item $(X, \C)$ is a convex geometry
    \item the associated closure operator $\phi$ has the anti-exchange property
    \item each irreducible closed set is attached to a unique element, i.e., $\{\irr(x) : x \in X\}$ is a partition of $\irr(\C)$
    \item each closed set $C$ has a unique minimal spanning set, which is $\ex(C)$
\end{enumerate}
\end{proposition}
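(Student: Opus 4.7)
The plan is to route all four conditions through (2), establishing $(1)\Leftrightarrow(2)$, $(2)\Leftrightarrow(3)$, and $(2)\Leftrightarrow(4)$. The anti-exchange formulation is the most convenient pivot because it directly controls the behavior of $\phi$, which in turn governs cover relations in the closure lattice (needed for (1)), attachments of irreducibles (needed for (3)), and spanning sets (needed for (4)).

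For $(1)\Rightarrow(2)$, I would assume a failure of anti-exchange witnessed by a closed set $C$ and distinct $x, y \notin C$ with $x \in \phi(C \cup \{y\})$ and $y \in \phi(C \cup \{x\})$. Using (1) iteratively, I would build a chain $C = C_0 \subsetneq C_1 \subsetneq \cdots \subsetneq C_m = X$ in which each $C_{i+1}$ is obtained from $C_i$ by adjoining a single element. Let $k$ be the first index at which $C_k$ meets $\{x, y\}$; since $C_k \setminus C_{k-1}$ is a singleton, it is $x$ or $y$. Say $C_k = C_{k-1} \cup \{y\}$ with $x \notin C_{k-1}$. Then $C_k \supseteq C \cup \{y\}$ is closed, hence contains $\phi(C \cup \{y\}) \ni x$, forcing $x = y$, a contradiction. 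For $(2)\Rightarrow(1)$, given $C \neq X$, pick any successor $C'$ of $C$ in the closure lattice. If $|C' \setminus C| \geq 2$, two distinct elements $x, y \in C' \setminus C$ both satisfy $\phi(C \cup \{x\}) = \phi(C \cup \{y\}) = C'$ by minimality of the cover, violating anti-exchange. Hence $C' = C \cup \{x\}$ is closed.

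For $(2)\Rightarrow(3)$, let $M$ be irreducible with unique successor $M'$; for every $x \in M' \setminus M$ one has $\phi(M \cup \{x\}) = M'$, and anti-exchange forces $|M' \setminus M| = 1$, yielding the unique element of attachment. For $(3)\Rightarrow(2)$, I would argue contrapositively. Given a failure $(C, x, y)$ of anti-exchange, extend $C$ to a maximal closed set $M_x$ avoiding $x$, so $M_x \in \irr(x)$; the assumption $x \in \phi(C \cup \{y\})$ forces $y \notin M_x$ as well. Further extend $M_x$ to a maximal closed set $M_y$ avoiding $y$; the symmetric assumption $y \in \phi(C \cup \{x\})$ forces $M_y$ to avoid $x$, and maximality of $M_x$ then gives $M_y = M_x$. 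This places $M_x$ in both $\irr(x)$ and $\irr(y)$, contradicting the partition property (3).

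For $(2)\Rightarrow(4)$, let $A$ be any inclusion-wise minimal spanning set of $C$. The inclusion $\ex(C) \subseteq A$ follows from the definitions: if $z \in \ex(C) \setminus A$, then $A \subseteq C \setminus \{z\}$, so $\phi(A) \subseteq \phi(C \setminus \{z\}) \not\ni z$, contradicting $\phi(A) = C$. For the reverse inclusion, suppose $z \in A \setminus \ex(C)$, so $z \in \phi(C \setminus \{z\})$ and $C = \phi(C \setminus \{z\})$. Setting $D_0 := \phi(A \setminus \{z\}) \subseteq C$, I would show $z \in D_0$ by contradiction: if $z \notin D_0$, the inclusion $A \subseteq D_0 \cup \{z\}$ yields $\phi(D_0 \cup \{z\}) = C$, so any $y \in C \setminus D_0$ satisfies $y \in \phi(D_0 \cup \{z\})$, and anti-exchange gives $z \notin \phi(D_0 \cup \{y\})$. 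Iterating produces a strictly increasing chain $D_0 \subsetneq D_1 \subsetneq \cdots$ of closed subsets of $C$, all missing $z$, which is impossible in the finite lattice. Hence $z \in \phi(A \setminus \{z\})$, so $A \setminus \{z\}$ still spans $C$, contradicting minimality. This gives $A = \ex(C)$, and since minimal spanning sets always exist, $\ex(C)$ is the unique one. Finally, for $(4)\Rightarrow(2)$, suppose $(C, x, y)$ witnesses a failure of anti-exchange and set $D := \phi(C \cup \{x, y\}) = \phi(C \cup \{x\}) = \phi(C \cup \{y\})$. Then $\ex(D)$ is contained in every spanning set of $D$, hence $\ex(D) \subseteq (C \cup \{x\}) \cap (C \cup \{y\}) = C$, yielding $D = \phi(\ex(D)) \subseteq C$ and contradicting $x \in D \setminus C$. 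The main delicacy I anticipate is the chain argument in $(2)\Rightarrow(4)$, which requires carefully tracking how anti-exchange preserves $z$ outside each $D_i$ at every iteration.
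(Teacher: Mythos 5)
The paper does not supply its own proof of this proposition; it cites Edelman and Jamison (Theorems 2.1 and 2.4 of~\cite{edelman1985theory}) and uses the equivalences as known. So there is no in-paper argument to compare against, and the question is simply whether your proof is correct. It is. Pivoting every implication through the anti-exchange formulation is a natural and efficient choice, and each of your six implications goes through.

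Two small details are left implicit but are straightforward to fill in, and neither is a genuine gap. In $(2)\Rightarrow(3)$, to conclude that $M \in \irr(z)$ for $\{z\} = M' \setminus M$, you are implicitly using that every closed set strictly containing the irreducible $M$ must contain its unique successor $M'$; this follows from irreducibility (otherwise intersecting with $M'$ would exhibit $M$ as the intersection of two strictly larger closed sets, or a closed set strictly between $M$ and $M'$), but it is worth stating since it is the step that upgrades \emph{unique successor} to \emph{maximal avoiding $z$}. In $(2)\Rightarrow(4)$, when iterating the chain you must pick $y \in C \setminus D_i$ with $y \neq z$ so that anti-exchange is applicable; such a $y$ exists at every stage because $D_i = C \setminus \{z\}$ would make $C \setminus \{z\}$ closed, contradicting $z \notin \ex(C)$ --- which is also exactly what guarantees the chain cannot terminate except by exceeding $|C|$, giving the desired contradiction. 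With these two clarifications the argument is complete and self-contained.
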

Let $T\subseteq X$.
We call \emph{irreducible selection} for $T$ a family of sets obtained by choosing, for each $t\in T$, one $M\in \irr(t)$.
Note that since we are dealing with convex geometries, and by Proposition~\ref{prop:cg-meet-partition}, the closed sets $M$ are all distinct and the resulting family has the same cardinal as $T$.
In the following, we note $\S(T)$ the family of all irreducible selections for a given set $T$. 
We will use irreducible selections as a way to forbid elements, as stated in the following observation. %the following property of an irreducible selection.

\begin{observation}\label{obs:selection-intersection}
    For any $T\subseteq X$ and any selection $\S\in \S(T)$,
    the intersection $\bigcap \S$ of all sets in $\S$ is a closed set disjoint from $T$. %i.e., it contains no $t \in T$.
\end{observation}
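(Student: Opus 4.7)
The plan is to establish the two claims separately, both of which follow directly from the definitions recalled just above. Let $\S = \{M_t : t \in T\}$ with $M_t \in \irr(t)$ for each $t \in T$.

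First I would argue closedness. By the very definition of a closure system, $(X, \C)$ is closed under intersections of any subfamily of $\C$. Each $M_t$ lies in $\irr(\C) \subseteq \C$, hence $\bigcap \S = \bigcap_{t \in T} M_t$ is closed. Note that the case $T = \emptyset$ poses no issue since the empty intersection equals $X \in \C$, and disjointness from $T = \emptyset$ is trivial.

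Second, I would show disjointness from $T$. Fix any $t \in T$. Since $M_t \in \irr(t)$, the definition $\irr(t) := \max_\subseteq \{M \in \C : t \notin M\}$ gives $t \notin M_t$. As $M_t$ is one of the sets whose intersection forms $\bigcap \S$, we conclude $t \notin \bigcap \S$. This holds for every $t \in T$, so $T \cap \bigcap \S = \emptyset$.

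There is no real obstacle here; the statement is essentially a bookkeeping remark designed to package a recurrent construction (selecting, for each forbidden element, an irreducible closed set witnessing its exclusion). The only subtlety worth flagging in the write-up is the degenerate case $T = \emptyset$, which is handled by the convention that an empty intersection equals the ambient set $X$.
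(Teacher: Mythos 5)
Your proof is correct and takes the only natural route: closedness follows because $\C$ is closed under intersection and each $M_t \in \irr(t) \subseteq \C$, while disjointness from $T$ follows because $t \notin M_t$ by the definition of $\irr(t)$. The paper itself leaves this observation unproved as an immediate consequence of the definitions, which is exactly what your argument spells out; the remark on the degenerate $T=\emptyset$ case (empty intersection equals $X$) is a fine touch, even if the paper does not bother with it.
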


We conclude the preliminaries with critical generators.
These have first been defined in terms of circuits of antimatroids (complements of convex geometries) within the context of combinatorial optimization~\cite{korte1983shelling, korte2012greedoids, dietrich1987circuit} but they also have been studied from the perspective of closure systems~\cite{wild1994theory, nakamura2013prime, wild2017joy, yoshikawa2017representation, defrain2021translating}.
They will play a crucial role in this paper. 
Recall that $(X, \C)$ is a convex geometry.
Let $x \in X$ and let $A \in \mathrm{mingen}(x)$.
The minimal generator $A$ is a \emph{critical generator} of $x$ if $\phi(A) \setminus \{a, x\} \in \C$ for every $a \in A$. 
For $x \in X$, we call $\critgen(x)$ the set of critical generators of $x$. 
Using the properties of extreme points previously mentioned, we can rephrase the definition of critical generators in a way that will be more convenient to use in our approach:
\begin{proposition} \label{prop:carac-crit}
The set $A$ is a critical generator of $x$ if and only if $A = \ex(\phi(A))$ and $\phi(A) \in \min_\subseteq \{C : C \in \C,\ x \in C,\ x \notin \ex(C)\}$.
\end{proposition}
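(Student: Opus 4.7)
The plan is to prove both implications separately, using two ingredients from the preliminaries: closure systems are closed under intersection, and in a convex geometry each closed set $C$ has $\ex(C)$ as its unique minimal spanning set (Proposition~\ref{prop:cg-meet-partition}).

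For the forward direction, I would assume $A \in \critgen(x)$ and first argue $A = \ex(\phi(A))$: since $A$ is a minimal generator of $x$, no $A' \subsetneq A$ can satisfy $\phi(A') = \phi(A)$ (this would already put $x$ in $\phi(A')$), so $A$ is a minimal spanning set of $\phi(A)$ and hence coincides with $\ex(\phi(A))$. For the minimality of $\phi(A)$ in $\{C \in \C : x \in C,\ x \notin \ex(C)\}$, I would reason by contradiction: if some $C \in \C$ satisfied $x \in C$, $x \notin \ex(C)$ and $C \subsetneq \phi(A)$, then $A \not\subseteq C$ (else $\phi(A) \subseteq C$), so I can pick $a \in A \setminus C$. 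The critical generator identity $\phi(A) \setminus \{a, x\} \in \C$ and the intersection closure of $\C$ would then give $C \cap (\phi(A) \setminus \{a, x\}) \in \C$, and since $a \notin C \subseteq \phi(A)$, this intersection equals $C \setminus \{x\}$, contradicting $x \notin \ex(C)$.

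For the reverse direction, I would assume $A = \ex(\phi(A))$ and that $\phi(A)$ is $\subseteq$-minimal in $\{C \in \C : x \in C,\ x \notin \ex(C)\}$. Since $x \notin \ex(\phi(A)) = A$ while $x \in \phi(A)$, the set $A$ lies in $X \setminus \{x\}$ and generates $x$. To upgrade this to minimality as a generator, I would suppose $A' \subsetneq A$ with $x \in \phi(A')$: then $\phi(A') \subsetneq \phi(A)$ as $A$ is the unique minimal spanning set of $\phi(A)$, so minimality of $\phi(A)$ forces $x \in \ex(\phi(A'))$, i.e.\ $\phi(A') \setminus \{x\} \in \C$; but $A' \subseteq \phi(A') \setminus \{x\}$ would then entail $\phi(A') \subseteq \phi(A') \setminus \{x\}$, a contradiction. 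Finally, the critical identity $\phi(A) \setminus \{a, x\} \in \C$ for each $a \in A$ follows by combining $a \in \ex(\phi(A))$ (which gives $\phi(A) \setminus \{a\} \in \C$) with the minimality of $\phi(A)$ applied to $\phi(A) \setminus \{a\}$, which forces $x \in \ex(\phi(A) \setminus \{a\})$.

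The main obstacle is structural bookkeeping rather than computation: one must juggle three distinct minimality notions (minimal generator of $x$, minimal spanning set of $\phi(A)$, and minimality of $\phi(A)$ among closed sets containing $x$ non-extremely). The pivotal move in the forward direction is to select $a \in A \setminus C$ so that the intersection with $\phi(A) \setminus \{a, x\}$ shrinks $C$ by exactly $\{x\}$; symmetrically, in the reverse direction, applying the minimality of $\phi(A)$ to the slightly smaller closed set $\phi(A) \setminus \{a\}$ is exactly what is needed to upgrade extremity of $a$ into the full critical generator property.
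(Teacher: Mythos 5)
Your proof is correct and follows the same two-directional decomposition as the paper's, with the same two structural ingredients (the unique minimal spanning set property and extreme points), but the technical details differ in both halves. In the direction critical generator $\Rightarrow$ characterization, to establish the minimality of $\phi(A)$ you argue by contradiction and intersect a hypothetical smaller $C$ with $\phi(A) \setminus \{a,x\}$ (for $a \in A \setminus C$) to produce $C \setminus \{x\} \in \C$; the paper instead observes that the predecessors of $\phi(A)$ in the closure lattice are exactly the sets $\phi(A) \setminus \{a\}$ for $a \in A$, places $C$ below one of them, and invokes the downward inheritance $\ex(C') \cap C \subseteq \ex(C)$ established in the preliminaries. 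In the converse direction, to show $A \in \mingen(x)$ you apply the minimality of $\phi(A)$ to obtain $x \in \ex(\phi(A'))$ for any $A' \subsetneq A$ with $x \in \phi(A')$ and then derive a contradiction from $A' \subseteq \phi(A') \setminus \{x\}$; the paper instead argues $x \notin \phi(A \setminus \{a\})$ directly, using $\ex(\phi(A\setminus\{a\})) = A \setminus \{a\}$ together with the upward propagation of non-extremity from $\phi(A\setminus\{a\})$ to $\phi(A)\setminus\{a\}$. Both routes are sound; yours is somewhat more self-contained, leaning only on closure under intersection and the unique minimal spanning set characterization, while the paper's is more compact because it reuses two small lattice-theoretic facts it has already set up.
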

\begin{proof}
We start with the if part.
By assumption, for every $a \in A$, $x \in \ex(\phi(A) \setminus \{a\})$, i.e., $\phi(A) \setminus \{a, x\} \in \C$ for each $a \in A$.
Hence, we only need to show that $A \in \mingen(x)$.
Observe first that $x \notin A$ by definition of $A$.
Now, let $a \in A$.
We have $\phi(A \setminus \{a\}) \subseteq \phi(A) \setminus \{a\} \in \C$.
Moreover, since $A$ is the unique minimal spanning set of $\phi(A)$ by assumption, any subset of $A$ must be the unique minimal spanning set of its closure, so in particular $\ex(\phi(A \setminus \{a\})) = A \setminus \{a\}$.
Thus, $x \in \phi(A \setminus \{a\})$ would imply that $x \notin \ex(\phi(A \setminus \{a\}))$, as $x \notin A$, 
which in turn yields $x \notin \ex(\phi(A) \setminus \{a\})$, a contradiction to the minimality of $\phi(A)$.
We deduce $x \notin \phi(A \setminus \{a\})$ and hence that $A \in \mingen(x)$.

As for the only if part, $A$ is a minimal generator of $x$, hence it must be the unique minimal spanning set of $\phi(A)$.
Since $(X, \C)$ is a convex geometry, $A = \phi(\ex(A))$ holds by Proposition~\ref{prop:cg-meet-partition}.
Thus, the predecessors of $\phi(A)$ are precisely the sets $\phi(A) \setminus \{a\}$ for $a \in A$. 
Hence, for any $C \in \C$ such that $C \subset \phi(A)$, there is $a \in A$ such that $C \subseteq \phi(A) \setminus \{a\}$.
As $x \in \ex(\phi(A) \setminus \{a\})$, $x \in C$ entails $x \in \ex(C)$.
We conclude that $\phi(A) \in \min_\subseteq \{C : C \in \C,\ x \in C,\ x \notin \ex(C)\}$.
\end{proof}
Proposition~\ref{prop:carac-crit} suggests an algorithm to compute a critical generator of an element $x$ given a set that generates $x$.

\begin{proposition} \label{prop:critical-poly}
Let $(X, \C)$ be a convex geometry, $x \in X$ and $Y \subseteq X$ such that $x \notin Y$ and $x \in \phi(Y)$.
A critical generator $A$ of $x$ such that $\phi(A) \subseteq \phi(Y)$ can be computed with a number of calls to $\phi$ being polynomial in $\card{X}$.
\end{proposition}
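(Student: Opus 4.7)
The plan is to construct $A$ iteratively by descending in the closure lattice, starting from $C_0 := \phi(Y)$ and using the characterization of critical generators given by Proposition~\ref{prop:carac-crit}. I will maintain the invariant that $C$ is a closed set with $x \in C$, $x \notin \ex(C)$, and $C \subseteq \phi(Y)$, and terminate when $C$ is inclusion-minimal with these properties, returning $A := \ex(C)$.

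First I would verify that $C_0 = \phi(Y)$ satisfies the invariant: we have $x \in C_0$ by hypothesis, and since $Y \subseteq C_0 \setminus \{x\}$ (as $x \notin Y$), monotonicity of $\phi$ yields $x \in \phi(Y) \subseteq \phi(C_0 \setminus \{x\})$, which is precisely $x \notin \ex(C_0)$. Each iteration would then proceed as follows: compute $\ex(C)$ by testing, for every $c \in C$, whether $c \in \phi(C \setminus \{c\})$; then, for each $e \in \ex(C)$ with $e \neq x$, check whether $x \in \phi(C \setminus \{e, x\})$, i.e., whether $x$ remains non-extreme in the predecessor $C \setminus \{e\}$. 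Recall that by Proposition~\ref{prop:cg-meet-partition} the predecessors of $C$ in the closure lattice are exactly the sets $C \setminus \{e\}$ for $e \in \ex(C)$, so in fact $C \setminus \{e\}$ is closed and this descent stays inside $\C$. If such an $e$ is found, update $C \leftarrow C \setminus \{e\}$; otherwise stop and output $A := \ex(C)$.

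The main obstacle I anticipate is to argue that this purely local descent cannot get stuck strictly above a global minimum, so that $C$ at termination truly belongs to $\min_\subseteq \{D \in \C : x \in D,\ x \notin \ex(D)\}$. Suppose, for contradiction, that some $C' \in \C$ satisfies $x \in C'$, $x \notin \ex(C')$ and $C' \subsetneq C$ upon termination. Since $C' \subsetneq C$, there must exist a predecessor $C \setminus \{e\}$ of $C$ with $C' \subseteq C \setminus \{e\}$, and necessarily $e \neq x$ because $x \in C'$. Using the property recalled in the preliminaries that $x \notin \ex(\cdot)$ is preserved when passing from a closed set to any closed superset, $x \notin \ex(C')$ together with $C' \subseteq C \setminus \{e\}$ forces $x \notin \ex(C \setminus \{e\})$, contradicting the stopping condition. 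Hence $C$ is indeed minimal, and by Proposition~\ref{prop:carac-crit} the set $A := \ex(C)$ is a critical generator of $x$ with $\phi(A) = C \subseteq \phi(Y)$.

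Finally, for the complexity: each iteration strictly decreases $|C|$, so the loop terminates after at most $|X|$ iterations, and each iteration performs $O(|X|)$ evaluations of $\phi$ in order to identify the extreme points of $C$ and then test the candidate predecessors. The total number of calls to $\phi$ is therefore polynomial in $|X|$, which establishes the claim.
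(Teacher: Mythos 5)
Your proposal is correct and takes essentially the same approach as the paper: greedily descend in the closure lattice from $\phi(Y)$ through predecessors $C\setminus\{e\}$ with $e\in\ex(C)$, maintaining the invariant $x\notin\ex(C)$, and output $\ex(C)$ once no predecessor preserves this property, with the same quadratic bound on the number of calls to $\phi$. Your explicit contradiction argument for why the local descent cannot get stuck above a global minimizer (via upward preservation of non-extremality) is the same upward-preservation observation that the paper records, just spelled out a bit more carefully.
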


\begin{proof}
Notice that $x \in \phi(Y)$ but $x \notin \ex(\phi(Y))$ by definition.
Hence, there exists $C \in \C$ such that $C \subseteq \phi(Y)$ and $C \in \min_\subseteq \{C : C \in \C,\ x \in C,\ x \notin \ex(C)\}$.
Moreover, for each such $C \subseteq \phi(Y)$ and each $C' \in \C$ such that $C \subseteq C' \subseteq \phi(Y)$, $x \notin \ex(C')$ must hold.

We describe an algorithm to identify a critical generator $A$ of $x$ included in $\phi(Y)$.
As a preprocessing step, we compute $\phi(Y)$.
Then, given a closed set $C$, we compute its extreme points and its predecessors, that are precisely the sets $\{C \setminus \{y\} : y \in \ex(C)\}$, and we identify the 
%lexicographically first 
element $y \in \ex(C)$ with the smallest index such that $x \notin \ex(C \setminus \{y\})$.
If no such $y$ exists, we output $\ex(C)$.
Otherwise, we repeat the step with $C \setminus \{y\}$ as input.
By Proposition~\ref{prop:carac-crit}, and since we check all the predecessors of the input closed set at each step, this procedure will correctly output a critical generator $A$ of $x$ included in $\phi(Y)$.

Computing $\phi(Y)$ requires one call to $\phi$.
For a closed set $C$, finding $\ex(C)$ and checking whether $x \in \ex(C \setminus \{y\})$ for each $y \in \ex(C)$ requires at most $2 \cdot \card{X}$ calls to $\phi$.
Since at each step we remove one element from $C$, we will repeat these operations at most $\card{X}$ times.
Hence, the whole algorithm requires at most $1 + 2 \cdot \card{X}^2$ calls to $\phi$.
\end{proof}

\myparagraph{Implicational bases.}
An \emph{implication} over $X$ is a statement $A \to B$ where $A, B$ are subsets of $X$.
In $A \to B$, $A$ is the \emph{premise} and $B$ the \emph{conclusion}.
If $B$ is a singleton, that is $B = \{b\}$, the implication $A \to b$ is \emph{(right-)unit}.
An \emph{implicational base} (IB) is a pair $(X, \Sigma)$ where $\Sigma$ is a collection of implications over $X$.
The IB $(X, \Sigma)$ is \emph{(right-)unit} if it consists in unit implications.
The \emph{aggregation} of an IB $(X, \Sigma)$ is the IB $(X, \Sigma^*)$ resulting from the unification of the implications with equal premises, i.e., $\Sigma^* := \left\{A \to B : B = \bigcup \{B' : A \to B' \in \Sigma\}\right\}$.
Dually, the \emph{unit-expansion} is obtained by replacing each implication $A \to B \in \Sigma$ by the set of implications $\{A \to b : b \in B\}$.
Given an IB $(X, \Sigma)$, $\card{\Sigma}$ is the cardinal of $\Sigma$, that is the number of implications in $\Sigma$.
The \emph{size} of the IB is the value $|X|+|\Sigma|$.
The \emph{degree} of an element $x\in X$, denoted $\deg(x)$, is the number of implications of $\Sigma$ in which $x$ appears. 
We refine this notion by defining the \emph{premise-degree} of $x$ as $\pdeg(x):=|\{A : A \to B \in \Sigma,\, x\in A\}|$, and its \emph{conclusion-degree} as $\cdeg(x):=|\{A\to B : A \to B \in \Sigma,\, x \in B\}|$.
Then $\deg(x)\leq \pdeg(x)+\cdeg(x)$. 
Equality may not hold in general because an element can belong to both premise and the conclusion of an implication.
In this paper though, we avoid this degenerate case by always considering the premise and conclusion of an implication to be disjoint, which can be ensured in polynomial-time without loss of generality with respect to the closure system.
Moreover, we will mainly use the \emph{degree}, \emph{premise-degree}, and \emph{conclusion-degree} of $\Sigma$ defined by
\begin{align*}
    \pdeg(\Sigma) &:= \max_{x\in X} \pdeg(x),\\
    \cdeg(\Sigma) &:= \max_{x\in X}\cdeg(x),\\
    \deg(\Sigma) &:= \max_{x\in X} \deg(x).
\end{align*}
Note that $\deg(\Sigma)\leq \pdeg(\Sigma)+\cdeg(\Sigma)$ for any $\Sigma$.

\begin{example} \label{ex:prelim-IB}
Let $X = \{1, 2, 3, 4, 5\}$ and consider the IB $(X, \Sigma)$ given by:
\[ 
\Sigma = \{4 \to 1, 25 \to 3, 12 \to 3, 15 \to 23\}.
\]
We have $\pdeg(2) = 2$, $\cdeg(2) = 1$, $\pdeg(\Sigma) = \pdeg(2) = \pdeg(1) = 2$, and $\cdeg(\Sigma) = \cdeg(3) = 3$. 
\end{example}

We now relate IBs to closure systems.
Let $(X, \Sigma)$ be an IB.
A set $C \subseteq X$ is \emph{closed} for $(X, \Sigma)$ if for every implication $A \to B \in \Sigma$, $A \subseteq C$ implies $B \subseteq C$.
The pair $(X, \C)$ where $\C$ is the family of closed sets of $(X, \Sigma)$ is a closure system.
An IB $(X, \Sigma)$ thus induces a closure operator $\phi(\cdot)$ that can be computed in polynomial time using the \emph{forward chaining} algorithm.
This procedure, which will be also used in some of our proofs, starts from a set $Y$ and produces a sequence of sets $Y =: Y_0 \subseteq Y_1 \subseteq \dots \subseteq Y_k := \phi(Y)$ such that $Y_i := Y_{i-1} \cup \bigcup \{B : \text{$A \to B \in \Sigma$, $A \subseteq Y_{i-1}$, $B \nsubseteq Y_{i-1}$} \}$ for $1 \leq i \leq k$.
On the other hand, any closure system $(X, \C)$ can be represented by several \emph{equivalent} IBs.
An implication $A \to B$ is \emph{valid} or \emph{holds} in $(X, \C)$ if for all $C \in \C$, $A \subseteq C$ implies $B \subseteq C$.
Equivalently, $A \to B$ holds in $(X, \C)$ if and only if $B \subseteq \phi(A)$.
For instance, $\Sigma = \{A \to \phi(A) : A \subseteq X\}$ constitutes an IB for $(X, \C)$ as well as $\Sigma' = \{C \cup \{x\} \to \phi(C \cup \{x\}) : C \in \C,\, x \notin C\}$ (recall that $\emptyset \in \C$ by assumption).
Moreover an IB is always equivalent to its aggregation, and to its unit-expansion.
For example, the IB of Example~\ref{ex:prelim-IB} is an IB of the closure system of Figure~\ref{fig:prelim-example}.

The fact that a closure system $(X, \C)$ can be represented by several IBs has lead to the study of a broad range of different IBs and different measures on IBs, as witnessed by two surveys on the topic~\cite{wild2017joy, bertet2018lattices}.
Among possible IBs, the canonical base~\cite{guigues1986familles}, the canonical direct base~\cite{bertet2010multiple, wild1994theory} and the $D$-base~\cite{adaricheva2013ordered} have attracted much attention.
In this paper, we will be interested instead in some measures that allow to compare two equivalent IBs.
Namely, we say that an IB $(X, \Sigma)$, or simply $\Sigma$, is \emph{minimum} if for any IB $(X, \Sigma')$ equivalent to $(X, \Sigma)$, $\card{\Sigma} \leq \card{\Sigma'}$.
Similarly, if $(X, \Sigma)$ is unit, we say that it is \emph{unit-minimum} if it is cardinality-minimum among equivalent unit IBs.
If $(X, \Sigma)$ is an arbitrary IB, then a minimum IB $(X, \Sigma')$ can be obtained in polynomial time, see e.g.,~\cite{ maier1983theory, shock1986computing, guigues1986familles,wild1994theory}.
On the other hand, finding an equivalent unit-minimum IB is computationally hard~\cite{ausiello1986minimal}.
Thus, in general, computing a unit-minimum IB is harder than computing a minimum one.

The \emph{degree} of a closure system is the least integer $k$ such that it admits an implicational bases of degree $k$. The premise- and conclusion-degree are defined analogously.

\myparagraph{Translation task.} 
To conclude the preliminaries, we define the two enumeration problems we study in this paper.
These are the problems of translating between an IB and irreducible closed sets, two representations of a closure system.
Formally, the problems read as follows:

\begin{problemgen}
\problemtitle{Irreducible Closed Sets Enumeration (\icsenum{})}
\probleminput{An implicational base $(X, \Sigma)$.}
\problemquestion{The irreducible closed sets of the associated closure system.}
\end{problemgen}

\begin{problemgen}
\problemtitle{Unit-Minimum Implicational Base Generation (\mibgen{})}
\probleminput{A ground set $X$, the irreducible closed sets $\irr(\C)$ of a closure system $(X, \C)$.}
\problemquestion{A unit-minimum implicational base $(X, \Sigma)$ of $(X, \C)$.}
\end{problemgen}

We stress the fact that in both problems, only an implicit (and possibly compact) description of the closure system is provided. 
Namely, for \mibgen{}, only $X$ and $\irr(\C)$ are part of the input, not $\C$, and a unit-minimum implicational base is to be computed. 
Analogously for \icsenum{}, only $(X, \Sigma)$ is given, not $\C$.
Moreover, we note that in the literature, variants of \mibgen{} have been considered where the goal is to compute a minimum (not unit-minimum) IB.
However, and as discussed previously, our formulation is harder; see also~\cite{defrain2021translating}. 
It is also known that these problems are polynomially equivalent in general; see~\cite{khardon1995translating} for more details.
Finally, we note that in both problems the output may be of exponential size with respect to the input size as suggested by Example~\ref{ex:exponential} below.
Hence, these problems are studied through the lens of output-sensitive complexity.
\begin{example} \label{ex:exponential}
Let $X = \{a_1, \dots, a_k\} \cup \{b_1, \dots, b_k\} \cup \{x\}$, and $\Sigma_1 = \{a_i b_i \to x : 1 \leq i \leq k\}$. Denoting $(X, \C_1)$ the associated closure system, we have $\irr(\C_1) = \{X \setminus \{y\} : y \in X,\ y \neq x \} \cup \{ \{c_1, \dots, c_k\} : c_i \in \{a_i, b_i\},\ 1 \leq i \leq k\}$.
Therefore, $\card{\irr(\C_1)} = 2^k + 2k$ while $\card{\Sigma_1} = k$.

Dually, consider the closure system $(X, \C_2)$ with irreducible closed sets $\irr(\C_2) = \{X \setminus \{y\} : y \in X,\ x \neq y\} \cup \{X \setminus \{a_i, b_i, x\} : 1 \leq i \leq k\}$.
Then, a unit-minimum IB $(X, \Sigma_2)$ of $(X, \C_2)$ is given by $\Sigma_2 = \{ c_1 \dots c_k \to z : c_i \in \{a_i, b_i\}, \ 1 \leq i \leq k\}$.
We have $\card{\irr(\C_2)} = 3k$ while $\card{\Sigma_2} = 2^k$. 
The two IBs are pictured in Figure~\ref{fig:exponential}.
\begin{figure}[th!]
    \centering
    \includegraphics[scale=0.9]{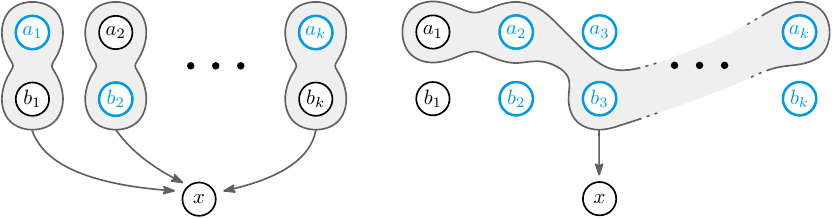}
    \caption{The implicational bases of Example \ref{ex:exponential} with $\Sigma_1$ on the left and $\Sigma_2$ on the right.
    For clarity we only give some of the implications, drawn in grey.
    In both cases, an irreducible closed set attached to $x$ is highlighted in blue. }
    \label{fig:exponential}
\end{figure}
\end{example}

As mentioned in the introduction, the two problems have been widely studied, but despite being harder than \misenum{} (hence \mhsenum{}), their complexity remains unknown to date.
In our journey to solve \icsenum{}, we will use the related problem of generating the family of irreducible closed sets attached to a given element:

\begin{problemgen}
\problemtitle{Attached Irreducible Closed Sets Enumeration (\acsenum{})}
\probleminput{An implicational base $(X, \Sigma)$ and $x \in X$.}
\problemquestion{The family $\irr(x)$ of irreducible closed sets attached to $x$.}
\end{problemgen}

It is well known that an algorithm solving \acsenum{} in output-polynomial time can be applied to each element $x$ to produce all irreducible closed sets of a closure system in output-polynomial time.
In this process, at most a linear number of repetitions are to be handled, and these repetitions do not occur in convex geometries by Proposition~\ref{prop:cg-meet-partition}.
Unfortunately, it was shown by Kavvadias et al.~\cite{kavvadias2000generating} in the language of Horn CNFs that the problem \acsenum{} admits no output-polynomial time algorithm in general unless $\textsf{P} = \textsf{NP}$.
Further results regarding \mibgen{}, \icsenum{} and \acsenum{} will be discussed in the next section, which is devoted to acyclic convex geometries, the class of closure systems at the core of our contribution.

\section{Acyclic convex geometries} \label{sec:acyclic}

In this section we define and give the main properties of the class of convex geometries we study in this paper: \emph{acyclic convex geometries}.
They have been studied under various names such as poset-type convex geometries \cite{santocanale2014lattices}, $G$-geometries \cite{wild1994theory}, or as acyclic Horn functions from the perspective of Horn CNFs \cite{hammer1995quasi, boros2009subclass}.

To define acyclic convex geometries, we will use IBs, thus following the line of propositional logic~\cite{hammer1995quasi}.
To do so, we need the concept of implication graph.
Let $(X, \Sigma)$ be an IB.
For the time being, $(X, \C)$ needs not be a convex geometry.
The \emph{implication graph} \cite{hammer1995quasi, boros2009subclass} of $(X, \Sigma)$ is the directed graph $G(\Sigma) = (X, E)$ where there is an arc $ab$ in $E$ if there exists $A \to B$ in $\Sigma$ such that $a \in A$ and $b \in B$.
The IB $(X, \Sigma)$ is called \emph{acyclic} if $G(\Sigma)$ is.
In this case, $G(\Sigma)$ can be seen as a poset, and naturally induces an \emph{ancestor} relation over $X$ defined for all $x \in X$ as follows:
\begin{align*}
\anc(x) & := \{y : \text{$y \in X$, $x \neq y$, and there exists a directed path from $y$ to $x$ in $G(\Sigma)$}\}. 
\end{align*}
Equivalently, $\anc(x)$ is the set of in-neighbors of $x$ in the transitive closure of $G(\Sigma)$.
The descendant relation can be defined dually using out-neighbors.
An element $x$ of $X$ is thus maximal if it has no ancestors, and minimal if it has no descendants.
As an example, Figure \ref{fig:prelim-graph} illustrates the implication graph of the IB given in Example \ref{ex:prelim-IB}.

We note that the notions of premise- and conclusion-degree differ from the in- and out-degree of the implication graph. 
Indeed, on the first hand, a single implication in $\Sigma$ may lead to several arcs in $G(\Sigma)$.
On the other hand, two implications with a same conclusion $y$ and sharing a same element $x$ in their premises yield a single arc $(x,y)$ in $G(\Sigma)$.

A closure system that admits an acyclic IB is \emph{acyclic} and it is a well-known fact that acyclic closure systems are convex geometries~\cite{wild1994theory, santocanale2014lattices}.
\begin{figure}[ht!]
    \centering
    \includegraphics[page=2, scale=1.1]{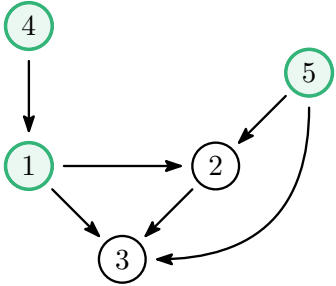}
    \caption{The implication graph $G(\Sigma)$ of the IB $(X, \Sigma)$ of Example~\ref{ex:prelim-IB}, associated to the closure system in Figure \ref{fig:prelim-example}.
    This directed graph is acyclic, which makes the closure system an acyclic convex geometry.
    The ancestors of $2$, highlighted in green, are $1, 4, 5$, i.e., $\anc(2) = \{1, 4, 5 \}$. 
    Maximal elements are $4$, $5$ and the unique minimal element is $3$.
    A \emph{topological order}, as we will use later, would be $4, 1, 5, 2, 3$ in this example.}
    \label{fig:prelim-graph}
\end{figure}

At first sight, acyclicity seems to be a property of IBs more than a property of the underlying closure system, as when taking two distinct IBs of a closure system $(X, \C)$, one may be acyclic and not the other. See Example~\ref{ex:acyclic-not-acyclic-IB}.

\begin{example}\label{ex:acyclic-not-acyclic-IB}
Let $X = \{1, 2, 3, 4\}$ and consider the two IBs $(X, \Sigma_1$), $(X, \Sigma_2)$ where:
\begin{itemize}
    \item $\Sigma_1 = \{1 \to 4, 12 \to 3, 23 \to 4\}$
    \item $\Sigma_2 = \{1 \to 4, 124 \to 3, 23 \to 4\}$
\end{itemize}
The two implicational bases are equivalent, i.e., they define the same closure system.
However, $G(\Sigma_1)$ is acyclic, while $G(\Sigma_2)$ contains a cycle with arcs $34$ and $43$.
\end{example}

Consequently, one might expect that acyclicity does not especially reflects in closed sets more than the fact that the closure system is a convex geometry.
This however is not true. 
To see this, consider the implicational base of minimal generators $(X, \Sigma_\mingen)$ of a closure system $(X, \C)$, where $\Sigma_\mingen = \{A \to b : b \in X,\, A \in \mingen(b)\}$, which has been studied under various names~\cite{bertet2010multiple}.
The implication graph $G(\Sigma_\mingen)$, where there is an arc from $a$ to $b$ if some $A \in \mingen(b)$ contains $a$, turns out to be equal to the (inverse of the) $\delta$-relation of lattice theory~\cite{monjardet1997dependence,bertet2010multiple} where $b \delta a$ holds in $(X, \C)$ if there exists $M \in \irr(b)$ such that $a \notin M$.
This equality is also a direct consequence of the fact that $\irr(b)$ and $\mingen(b)$ define dual hypergraphs, as explained in Section~\ref{sec:preliminaries}.
As was shown by Hammer and Kogan~\cite{hammer1995quasi}, $G(\Sigma_\mingen)$, or equivalently $\delta$, characterizes the acyclicity of a closure system.

\begin{lemma}[{\cite[Lemma 5.2]{hammer1995quasi}}] \label{lem:mingen-acyc}
A closure system $(X, \C)$ is acyclic if and only if the implication graph $G(\Sigma_\mingen)$ of its implicational base of minimal generators $(X, \Sigma_\mingen)$ is acyclic.
\end{lemma}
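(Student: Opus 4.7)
The plan is to treat the two implications separately, the backward one being immediate and the forward one requiring an analysis of how forward chaining propagates through an acyclic implication graph.

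For the easy direction, suppose $G(\Sigma_\mingen)$ is acyclic. Since $(X, \Sigma_\mingen)$ is itself an implicational base of $(X, \C)$ (because $A \to b$ holds in $(X, \C)$ whenever $A \in \mingen(b)$, and any closed set $C$ containing a premise $A$ of an implication of $\Sigma_\mingen$ for some $b \notin A$ must contain $b$ by definition of $\mingen(b)$), the closure system $(X, \C)$ admits an acyclic IB and is therefore acyclic by definition.

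For the non-trivial direction, suppose $(X, \C)$ is acyclic, and fix an acyclic IB $(X, \Sigma)$ of $(X, \C)$. My goal is to show that each arc $(a, b)$ of $G(\Sigma_\mingen)$ corresponds to a directed path from $a$ to $b$ in $G(\Sigma)$. Since the transitive closure of $G(\Sigma)$ remains acyclic, this will transfer acyclicity from $G(\Sigma)$ to $G(\Sigma_\mingen)$. Recall that an arc $(a, b)$ in $G(\Sigma_\mingen)$ means there exists $A \in \mingen(b)$ with $a \in A$, so $b \in \phi(A)$ but $b \notin \phi(A \setminus \{a\})$.

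The key intermediate claim I would establish is the following propagation statement: for any $Y \subseteq X$ and $a \in Y$,
\[
\phi(Y) \;\subseteq\; \phi(Y \setminus \{a\}) \,\cup\, \{a\} \,\cup\, \mathrm{desc}(a),
\]
where $\mathrm{desc}(a)$ is the set of descendants of $a$ in $G(\Sigma)$. I would prove this by applying forward chaining to $\phi(Y \setminus \{a\}) \cup \{a\}$ and tracking, step by step, the elements that enter. Any newly added element $z$ is produced by firing some implication $A' \to B'$ with $z \in B'$ and $A'$ contained in the current set. If $A' \subseteq \phi(Y \setminus \{a\})$ then $z$ would already belong to the (closed) set $\phi(Y \setminus \{a\})$, a contradiction; hence $A'$ contains some $a' \in \{a\} \cup \mathrm{desc}(a)$, which creates an arc $a' \to z$ in $G(\Sigma)$ and places $z$ in $\mathrm{desc}(a)$. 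Induction on the chaining step closes the claim.

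Applying the claim to our arc: since $b \in \phi(A) \setminus \phi(A \setminus \{a\})$ and $b \neq a$, we get $b \in \mathrm{desc}(a)$, that is, a directed path from $a$ to $b$ in $G(\Sigma)$. This establishes the containment of $G(\Sigma_\mingen)$ in the transitive closure of $G(\Sigma)$, from which the acyclicity of $G(\Sigma_\mingen)$ follows. The main obstacle is the intermediate propagation claim: the tricky point is to carry out the forward-chaining induction in a way that cleanly excludes the case where a newly added element is generated purely from $\phi(Y \setminus \{a\})$, and to rule out the degenerate situation where premises and conclusions of implications overlap (which the paper's convention of disjoint premises and conclusions is precisely there to avoid).
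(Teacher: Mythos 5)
The paper does not prove this lemma itself: it is imported from Hammer and Kogan's work (their Lemma 5.2), so there is no in-paper proof to compare against. Judged on its own, your argument is correct. The easy direction is standard (though note that your parenthetical only checks that the implications of $\Sigma_\mingen$ are \emph{valid}; that $\Sigma_\mingen$ is moreover \emph{complete}, i.e.\ its closed sets are exactly $\C$, is the well-known fact that the implications with minimal-generator premises form the canonical direct basis, and is worth a one-line justification: if $C$ is closed for $\Sigma_\mingen$ but not for $\phi$, pick $b\in\phi(C)\setminus C$ and a $\subseteq$-minimal $A\subseteq C$ with $b\in\phi(A)$; then $A\in\mingen(b)$ and $A\to b\in\Sigma_\mingen$ is violated). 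The interesting direction rests on your propagation claim
\[
\phi(Y)\ \subseteq\ \phi(Y\setminus\{a\})\ \cup\ \{a\}\ \cup\ \mathrm{desc}(a),
\]
which is correct and whose inductive forward-chaining proof is sound: any step that introduces a genuinely new element $z$ uses an implication whose premise cannot lie entirely inside the closed set $\phi(Y\setminus\{a\})$, hence contains some $a'\in\{a\}\cup\mathrm{desc}(a)$, and the arc $a'\to z$ in $G(\Sigma)$ places $z$ in $\mathrm{desc}(a)$. Applied with $Y=A\in\mingen(b)$ and $a\in A$ (so $a\neq b$ and $b\notin\phi(A\setminus\{a\})$ by minimality of $A$), it yields $b\in\mathrm{desc}(a)$, hence every arc of $G(\Sigma_\mingen)$ lies in the transitive closure of the acyclic graph $G(\Sigma)$, and $G(\Sigma_\mingen)$ is acyclic. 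This is in fact a slightly stronger statement than the lemma requires, since it directly establishes the fact quoted later in the paper (from Boros et al.) that any acyclic IB built from minimal generators has the same transitive closure as $G(\Sigma_\mingen)$.
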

Lemma~\ref{lem:mingen-acyc} has interesting consequences.
First, given the correspondence between $G(\Sigma_\mingen)$ and the $\delta$-relation using irreducible closed sets, it shows that acyclicity indeed reflects in (irreducible) closed sets and can be tested in polynomial time given $\irr(\C)$.
Furthermore, it follows that $G(\Sigma_\mingen)$, and hence its associated ancestor relation, can be recovered in polynomial time from $\irr(\C)$ too.
To do so, one just has to check, for all $a, b \in X$, whether there exists $M \in \irr(b)$ such that $a \notin M$.
As another consequence, any subset $\Sigma$ of $\Sigma_\mingen$ produces an acyclic IB.
This holds in particular for IBs $(X, \Sigma)$ of $(X, \C)$ that satisfies $\Sigma \subseteq \Sigma_\mingen$.
Since such an IB $(X, \Sigma)$ can always be obtained from an arbitrary IB of $(X, \C)$ in polynomial time, checking whether $(X, \C)$ is indeed acyclic can be done in polynomial time given an arbitrary IB (see Corollary 5.3 in \cite{hammer1995quasi}).
Moreover, Boros et al.~\cite[Theorem 17, Corollary 18]{boros1998horn} show that any IB $(X, \Sigma)$ of $(X, \C)$ that satisfies $\Sigma \subseteq \Sigma_\mingen$ enjoys the same ancestor relation as $(X, \Sigma_\mingen)$.
In other words, $G(\Sigma)$ and $G(\Sigma_\mingen)$ have the same transitive closure.
We thus directly obtain a statement that will be used in a few paragraphs.
\begin{theorem} \label{thm:anc-poly}
Let $(X, \C)$ be an acyclic convex geometry and let $(X, \Sigma)$ be an implicational base of $(X, \C)$ such that $\Sigma \subseteq \Sigma_\mingen$.
Then, the ancestor relation associated to $\Sigma$ can be computed in polynomial time when given $X$ and $\irr(\C)$.
\end{theorem}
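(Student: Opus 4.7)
The plan is to leverage the two preceding observations made in the excerpt: first, that the minimal generator hypergraph $\mingen(b)$ is dual to $\irr(b)$ and hence computable from $\irr(\C)$; second, that by Boros et al. any $\Sigma \subseteq \Sigma_\mingen$ that defines $(X, \C)$ has the same transitive closure in its implication graph as $\Sigma_\mingen$. Combining these will reduce computing $\anc$ for $\Sigma$ to computing the transitive closure of a graph we can explicitly build from $\irr(\C)$.

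Concretely, I would proceed as follows. First, observe that the ancestor relation $\anc$ of an acyclic IB depends only on the transitive closure of the implication graph, by its very definition as the set of elements reaching $x$ through a directed path. Hence, since $\Sigma \subseteq \Sigma_\mingen$ and these share a transitive closure by \cite{boros1998horn}, the ancestor relation of $(X,\Sigma)$ equals the ancestor relation of $(X,\Sigma_\mingen)$. It therefore suffices to build $G(\Sigma_\mingen)$ from $\irr(\C)$ and compute its transitive closure, which can be done in time polynomial in $|X|$.

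To build $G(\Sigma_\mingen)$ from $\irr(\C)$, I would use the equivalence recalled just before Lemma~\ref{lem:mingen-acyc}: there is an arc $ab$ in $G(\Sigma_\mingen)$ if and only if some $A \in \mingen(b)$ contains $a$, which, by the duality $\mingen(b) = \MHS(\{X \setminus (M \cup \{b\}) : M \in \irr(b)\})$, holds exactly when there exists $M \in \irr(b)$ with $a \notin M$ (and $a \neq b$). Thus, for every ordered pair $(a,b) \in X \times X$ with $a \neq b$, a single scan through $\irr(b)$ decides whether $ab$ is an arc. This takes $O(|X|^2 \cdot |\irr(\C)|)$ time, which is polynomial in the input size. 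A standard transitive closure computation on $G(\Sigma_\mingen)$ then yields $\anc(x)$ for each $x \in X$.

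There is no real obstacle here: the theorem is essentially a bookkeeping consequence of Lemma~\ref{lem:mingen-acyc} together with the invariance of the transitive closure under $\Sigma \subseteq \Sigma_\mingen$ from \cite{boros1998horn}. The only point that deserves care is to make explicit that the ancestor relation, being defined through directed paths in $G(\Sigma)$, is an invariant of the transitive closure and hence shared by all such $\Sigma$; this guarantees that it is legitimate to compute it from $G(\Sigma_\mingen)$, which is the only implication graph we have direct access to from $\irr(\C)$.
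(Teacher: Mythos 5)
Your proposal is correct and follows exactly the argument the paper uses: the theorem is stated in the text as a direct consequence of (i) the duality $\mingen(b) = \MHS(\{X \setminus (M \cup \{b\}) : M \in \irr(b)\})$, which lets one read off the arcs of $G(\Sigma_\mingen)$ by scanning $\irr(b)$ for each pair $(a,b)$, and (ii) the Boros et al.\ result that every $\Sigma \subseteq \Sigma_\mingen$ defining $(X,\C)$ has the same transitive closure (hence the same ancestor relation) as $\Sigma_\mingen$. The paper does not give a displayed proof, and your write-up supplies precisely the bookkeeping the authors leave implicit.
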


Among the minimal generators of $(X, \C)$, we now focus on critical generators, for they play a crucial role in acyclic convex geometries.
If $(X, \C)$ is an acyclic convex geometry, its \emph{critical base} $(X, \Sigma)$ of $(X, \C)$ is defined as follows:
\[
\Sigma_{\crit} := \{A \to b : \text{$b \in X$ and $A$ is a critical generator of $b$}\}.
\]
As a preliminary remark, $(X, \Sigma_\crit)$ must be acyclic by Lemma~\ref{lem:mingen-acyc}, since critical generators are minimal generators.
A nice result of Hammer and Kogan~\cite{hammer1995quasi}\footnote{In their terminology, an implication $A \to b$ where $A$ is a critical generator of $b$ is called and \emph{essential implicate} of the corresponding Horn CNF.} shows that not only $(X, \Sigma_\crit)$ is indeed a valid IB of $(X, \C)$, which is not true for convex geometries in general~\cite{wild1994theory, nakamura2013prime, adaricheva2013ordered}, 
but also that it is unit-minimum and can be obtained in polynomial time from any other IB of $(X, \C)$.
More precisely, they show:
\begin{theorem}[Corollaries 5.6, 5.7 in~\cite{hammer1995quasi}] \label{thm:crit-optim}
Let $(X, \C)$ be an acyclic convex geometry.
Then, its critical base $(X, \Sigma_\crit)$ is unit-minimum and its aggregation is minimum.
Moreover, it can be computed in polynomial time from any implicational base of $(X, \C)$.
\end{theorem}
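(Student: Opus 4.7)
The plan is to establish the four claims in order: (i) $(X,\Sigma_{\crit})$ is a valid IB of $(X,\C)$, (ii) it is unit-minimum, (iii) its aggregation is minimum, and (iv) it is computable in polynomial time from any IB of $(X,\C)$.

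For (i), one direction is immediate: each $A\to b\in\Sigma_{\crit}$ satisfies $b\in\phi(A)$ since critical generators are minimal, so every closed set of $(X,\C)$ is closed under $\Sigma_{\crit}$. For the converse, I would proceed by contradiction: suppose $Y\subseteq X$ is closed under $\Sigma_{\crit}$ with $Y\subsetneq\phi(Y)$. By Lemma~\ref{lem:mingen-acyc}, $G(\Sigma_{\mingen})$ is acyclic, so I can pick $b\in\phi(Y)\setminus Y$ having no strict ancestor in $G(\Sigma_{\mingen})$ inside $\phi(Y)\setminus Y$. Taking $A$ to be any inclusion-minimal subset of $Y$ with $b\in\phi(A)$ gives $A\in\mingen(b)$, and Proposition~\ref{prop:critical-poly} applied to $(b,A)$ yields $A^\star\in\critgen(b)$ with $\phi(A^\star)\subseteq\phi(A)\subseteq\phi(Y)$. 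Since $A^\star$ is itself a minimal generator of $b$, its elements are in-neighbors and thus strict ancestors of $b$ in $G(\Sigma_{\mingen})$; by the choice of $b$ they must all lie in $Y$, so $A^\star\subseteq Y$, and the critical implication $A^\star\to b\in\Sigma_{\crit}$ forces $b\in Y$, contradicting the choice of $b$. This acyclicity-driven step is the main technical point of~(i).

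For (ii), let $\Sigma$ be any equivalent unit IB; I would construct an injection $\Sigma_{\crit}\hookrightarrow\Sigma$. Fix $A\to b\in\Sigma_{\crit}$. Since $A\subseteq\phi(A)\setminus\{b\}$ while $b\in\phi(A)$, the set $\phi(A)\setminus\{b\}$ is not closed, and its $\phi$-closure adds exactly $\{b\}$; hence forward chaining with $\Sigma$ fires some implication $A'\to b\in\Sigma$ with $A'\subseteq\phi(A)\setminus\{b\}$. The heart of the argument is to show that the map $(A\to b)\mapsto(A'\to b)$ is (or can be made) injective. I would argue, using the minimality of $\phi(A)$ among closed sets containing $b$ as a non-extreme point (Proposition~\ref{prop:carac-crit}) and refining the choice of $A'$ when needed, that $\phi(A')=\phi(A)$; two distinct $A_1,A_2\in\critgen(b)$ mapped to a common $A'$ would then satisfy $\phi(A_1)=\phi(A_2)=\phi(A')$, whence $A_1=\ex(\phi(A_1))=\ex(\phi(A_2))=A_2$ by Proposition~\ref{prop:cg-meet-partition}. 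Injectivity across different conclusions is immediate since the map preserves the conclusion. This yields $|\Sigma_{\crit}|\le|\Sigma|$, and I expect the verification that $\phi(A')=\phi(A)$ can always be arranged---handling the degenerate case where $b$ could be extreme in $\phi(A')$---to be the main obstacle of the whole proof.

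For (iii), the aggregation of $\Sigma_{\crit}$ has one implication per distinct premise, and the injectivity argument of~(ii) goes through on premises alone: distinct critical premises $A_1\ne A_2$ cannot map to a common $A'$ since that would again force $A_1=\ex(\phi(A'))=A_2$. Unit-expanding any equivalent IB $\Sigma$ preserves its set of premises, so the aggregation of $\Sigma$ must contain at least as many implications as the aggregation of $\Sigma_{\crit}$.

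For (iv), given any IB $(X,\Sigma_0)$ of $(X,\C)$, I would unit-expand it to $\Sigma_0^{u}$ and, for each implication $A\to b\in\Sigma_0^{u}$, invoke Proposition~\ref{prop:critical-poly} on input $(b,A)$ to obtain a critical generator $A^\star\in\critgen(b)$ with $\phi(A^\star)\subseteq\phi(A)$. After deduplication, the collected implications form $\Sigma_{\crit}$: the inclusion $\subseteq$ holds by construction, while for the reverse, given any $A^\star\in\critgen(b)$, forward chaining with $\Sigma_0^{u}$ starting from $A^\star$ eventually derives $b$, so some $A\to b\in\Sigma_0^{u}$ satisfies $A\subseteq\phi(A^\star)$; the call on $(b,A)$ returns some $A^{\star\star}\in\critgen(b)$ with $\phi(A^{\star\star})\subseteq\phi(A)\subseteq\phi(A^\star)$, which by the minimality of $\phi(A^\star)$ in Proposition~\ref{prop:carac-crit} forces $\phi(A^{\star\star})=\phi(A^\star)$, and hence $A^{\star\star}=A^\star$ by uniqueness of extreme-point spanning sets. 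Each call is polynomial by Proposition~\ref{prop:critical-poly}, and $|\Sigma_0^{u}|$ is polynomial in $|\Sigma_0|$, yielding a polynomial-time procedure.
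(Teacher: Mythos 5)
The paper does not prove Theorem~\ref{thm:crit-optim}; it cites it verbatim as Corollaries~5.6 and~5.7 of Hammer and Kogan~\cite{hammer1995quasi}. So there is no paper proof to compare against, and your proposal should be judged on its own merits, with the understanding that it fills in an argument the paper delegates to a reference.

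Your proof is correct, and it is well organized. Part~(i) (validity of $\Sigma_\crit$) is the piece the paper only alludes to, and your argument---pick $b\in\phi(Y)\setminus Y$ having no $G(\Sigma_\mingen)$-ancestor in $\phi(Y)\setminus Y$, extract a minimal generator $A\subseteq Y$ of $b$, drop to a critical generator $A^\star$ via Proposition~\ref{prop:critical-poly}, then observe $A^\star\subseteq Y$ by the choice of $b$---is sound: it uses acyclicity (Lemma~\ref{lem:mingen-acyc}) exactly where it is needed. Parts~(ii) and~(iii) are in fact the same forward-chaining argument the paper proves in Lemma~\ref{lem:crit-pseudo}, specialized to unit IBs and combined with a counting/injectivity step; the paper proves that lemma for a different purpose (Lemma~\ref{lem:crit-degree} on degree measures) but does not feed it back into a proof of Theorem~\ref{thm:crit-optim}, so you have in effect reconstructed the Hammer--Kogan argument independently, in a way that also subsumes Lemma~\ref{lem:crit-pseudo}. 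Part~(iv) is also correct, and the uniqueness step $\phi(A^{\star\star})=\phi(A^\star)\Rightarrow A^{\star\star}=A^\star$ via Proposition~\ref{prop:cg-meet-partition} is exactly what makes the deduplication complete.

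One remark: the hesitation in part~(ii) about ``refining the choice of $A'$'' and a ``degenerate case where $b$ could be extreme in $\phi(A')$'' is unnecessary. Since premises and conclusions are disjoint (as the paper assumes throughout), $b\notin A'$, hence $A'\subseteq\phi(A')\setminus\{b\}$, so $\phi(\phi(A')\setminus\{b\})=\phi(A')\ni b$ and $\phi(A')\setminus\{b\}$ is never closed. Thus $b\notin\ex(\phi(A'))$ automatically, and the minimality from Proposition~\ref{prop:carac-crit} gives $\phi(A')=\phi(A)$ for the very first $A'$ produced by the forward chaining, with no refinement needed. You should state this directly rather than flag it as the ``main obstacle''; as written the reader cannot tell whether you have actually verified the key step or are merely hoping it goes through.
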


As a consequence, the critical base of an acyclic convex geometry constitutes a valid output to the \mibgen{} problem. 
Therefore, our approach to \mibgen{} will consist in solving the following problem:
\begin{problemgen}
\problemtitle{Critical Base Generation (\critbase{})}
\probleminput{A ground set $X$, and the irreducible closed sets of an acyclic convex geometry $(X, \C)$.}
\problemquestion{The critical base of $(X, \C)$.}
\end{problemgen}
Given that $(X, \Sigma_\crit)$ is indeed a valid acyclic IB of $(X, \C)$ and $\Sigma_\crit \subseteq \Sigma_\mingen$, Theorem \ref{thm:anc-poly} 
%and \ref{thm:crit-optim} 
automatically yields the following corollary that we will use in Section~\ref{sec:mibgen}.
\begin{corollary} \label{cor:anc-poly}
% Let $(X, \C)$ be an acyclic convex geometry with critical base $(X, \Sigma_\crit)$.
The ancestor relation of the critical base of an acyclic convex geometry $(X,\C)$ can be computed in polynomial time from $X$ and $\irr(\C)$.
\end{corollary}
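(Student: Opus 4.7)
The plan is to apply Theorem~\ref{thm:anc-poly} directly, with $\Sigma := \Sigma_\crit$. For this, I only need to verify the two hypotheses of that theorem: namely, that $(X, \Sigma_\crit)$ is an implicational base of $(X, \C)$, and that $\Sigma_\crit \subseteq \Sigma_\mingen$. The first point is supplied by Theorem~\ref{thm:crit-optim}, which asserts that the critical base of an acyclic convex geometry is indeed a valid (unit-minimum) IB of $(X, \C)$. The second point is immediate from the definitions given in Section~\ref{sec:preliminaries}: a critical generator of $b$ is, by construction, a minimal generator of $b$ satisfying an additional extremality condition, so every implication $A \to b \in \Sigma_\crit$ with $A \in \critgen(b)$ belongs to $\Sigma_\mingen$.

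Once these two observations are in place, Theorem~\ref{thm:anc-poly} applies verbatim and delivers the claimed polynomial-time procedure computing the ancestor relation of $(X, \Sigma_\crit)$ from $X$ and $\irr(\C)$. I do not expect any genuine obstacle here: the corollary is a specialization of the theorem to the critical base, and all the difficulty is absorbed by the earlier results, in particular Lemma~\ref{lem:mingen-acyc} (which ensures acyclicity of $(X, \Sigma_\mingen)$ and hence of any subbase) and the result of Boros et al.\ invoked in the discussion preceding Theorem~\ref{thm:anc-poly} (which guarantees that every IB contained in $\Sigma_\mingen$ shares the same transitive closure in its implication graph, hence the same ancestor relation).
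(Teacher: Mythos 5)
Your proposal is correct and matches the paper's argument exactly: the paper also derives the corollary by applying Theorem~\ref{thm:anc-poly} to $\Sigma = \Sigma_\crit$, noting that $\Sigma_\crit$ is a valid IB by Theorem~\ref{thm:crit-optim} and that $\Sigma_\crit \subseteq \Sigma_\mingen$ since critical generators are minimal generators.
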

In the rest of the paper, we will call $\critanc(x)$ the ancestor relation associated to $(X, \Sigma_\crit)$.
As a last property of critical generators, we argue that on top of being unit-minimum, the critical base of $(X, \C)$ is also minimum in terms of the different measures we introduced in Section~\ref{sec:preliminaries}: $\cdeg$, $\pdeg$, and $\deg$.
To do so, we will rely on a result stating that in a convex geometry $(X, \C)$, not necessarily acyclic, any critical generator is a subset of a premise in any IB of $(X, \C)$.
This is a consequence of \cite[Theorem 4.8]{korte1983shelling}.
It is also directly stated in \cite[Lemma 3.10]{yoshikawa2017representation}, as well as in \cite[Proposition 4.5]{defrain2021translating} but restricted to acyclic convex geometries.
In view of our claim on degrees though, we need to prove a slightly stronger property stating that these premise and critical generator share the same closure.
We give the statement below and we include for self-containment a proof which differs from the ones presented in~\cite{korte1983shelling,yoshikawa2017representation, defrain2021translating}.

\begin{lemma} \label{lem:crit-pseudo}
Let $(X, \C)$ be a convex geometry and let $b \in X$.
For any implicational base $(X, \Sigma)$ of $(X, \C)$ and any critical generator $A$ of $b$, there exists an implication $D \to B$ in $\Sigma$ such that $A \subseteq D$, $b \in B$ and $\phi(A) = \phi(D)$.
\end{lemma}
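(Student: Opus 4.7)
The plan is to find the desired implication by running forward chaining from $C \setminus \{b\}$ where $C := \phi(A)$, and to exploit the critical-generator property of $A$ via Proposition~\ref{prop:carac-crit} to show that the first implication firing must have the right closure. First, I would set $C := \phi(A)$ and recall from Proposition~\ref{prop:carac-crit} that $A = \ex(C)$ and that $C$ is a minimal closed set containing $b$ with $b \notin \ex(C)$. In particular, $C \setminus \{b\}$ is not closed, and since $C \setminus \{b\} \subseteq C$ forces $\phi(C \setminus \{b\}) \in \{C \setminus \{b\}, C\}$, we must have $\phi(C \setminus \{b\}) = C$. Running forward chaining on $C \setminus \{b\}$ with respect to $\Sigma$ must therefore trigger at the very first step some implication $D \to B \in \Sigma$ with $D \subseteq C \setminus \{b\}$ and $b \in B$.

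The main obstacle, and the heart of the argument, is to upgrade this implication to one satisfying $\phi(D) = \phi(A)$. The plan here is a dichotomy: since $D \subseteq C$ gives $\phi(D) \subseteq C$ and $b \in B \subseteq \phi(D)$, the minimality clause of Proposition~\ref{prop:carac-crit} applied to the closed set $\phi(D)$ forces either $\phi(D) = C$, which is what we want, or $b \in \ex(\phi(D))$ together with $\phi(D) \subsetneq C$.

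To rule out the second case, I would use that $(X, \C)$ is a convex geometry, so predecessors of $C$ are exactly the sets $C \setminus \{a\}$ with $a \in \ex(C) = A$. A strict inclusion $\phi(D) \subsetneq C$ then yields $\phi(D) \subseteq C \setminus \{a\}$ for some $a \in A$, and combining with $D \subseteq C \setminus \{b\}$ gives $D \subseteq C \setminus \{a, b\}$. Because $A$ is a critical generator of $b$, the set $C \setminus \{a, b\} = \phi(A) \setminus \{a, b\}$ is closed, so $\phi(D) \subseteq C \setminus \{a, b\}$, contradicting $b \in \phi(D)$.

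Having established $\phi(D) = C = \phi(A)$, the inclusion $A \subseteq D$ is immediate: since $(X, \C)$ is a convex geometry, $A = \ex(C)$ is contained in every spanning set of $C$ (any $a \in \ex(C)$ missing from a spanning set $Y$ would force $Y \subseteq C \setminus \{a\} \in \C$, hence $\phi(Y) \subseteq C \setminus \{a\} \not\ni a$, contradicting $\phi(Y) = C$). As $D$ is a spanning set of $C$, it contains $A$, completing the proof.
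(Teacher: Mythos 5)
Your proof is correct and follows essentially the same approach as the paper: locate the implication $D\to B$ by forward chaining, invoke the minimality characterization from Proposition~\ref{prop:carac-crit} to force $\phi(D)=\phi(A)$, and conclude $A\subseteq D$ since $A=\ex(\phi(A))$ is contained in every spanning set. The only cosmetic differences are that you start forward chaining from $\phi(A)\setminus\{b\}$ rather than from $A$, and you rule out $\phi(D)\subsetneq\phi(A)$ via a predecessor/critical-generator argument, whereas the paper gets there more directly by observing that $b\notin D$ already forces $b\notin\ex(\phi(D))$.
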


\begin{proof}
Let $(X, \Sigma)$ be an IB of $(X, \C)$, let $b \in X$ and $A$ a critical generator of $b$.
Let $A = A_0 \subseteq A_1 \subseteq \dots \subseteq A_k = \phi(A)$ be the sequence of sets obtained by applying the forward chaining on $A$ with $(X, \Sigma)$.
Since $A$ is a critical generator of $b$, $b \notin A$ but $b \in \phi(A)$.
Therefore, there exists some $1 \leq i \leq k$ such that $b \notin A_{i-1}$ but $b \in A_i$.
Hence, there exists an implication $D \to B \in \Sigma$ such that $b \in B$ and $D \subseteq A_{i-1}$.
Since $D \subseteq A_{i - 1}$ and $\phi(A_{i-1}) = \phi(A)$, we deduce $\phi(D) \subseteq \phi(A)$.
Now, since $A$ is a critical generator of $b$, we have that $\phi(A) \in \min_\subseteq \{C : C \in \C,\, b \in C,\, b \notin \ex(C)\}$ by Proposition~\ref{prop:carac-crit}.
On the other hand, we have $b \in \phi(D)$ but $b \notin \ex(\phi(D))$.
We deduce from the minimality of $\phi(A)$ that $\phi(A) = \phi(D)$.
As $(X, \C)$ is a convex geometry, the unique minimal spanning set of $\phi(D)$ is $\ex(\phi(D))=\ex(\phi(A))=A$ by Propositions \ref{prop:cg-meet-partition} and \ref{prop:carac-crit}.
Hence, $A \subseteq D$ holds.

Consequently, $D \to B$ is an implication of $\Sigma$ such that $A \subseteq D$, $\phi(D) = \phi(A)$ and $b \in B$ as desired.
This concludes the proof.
\end{proof}

We can now proceed to show that the critical base, once aggregated, minimizes all the three measures about degrees.

\begin{lemma} \label{lem:crit-degree}
Let $(X, \C)$ be an acyclic convex geometry with aggregated critical base $(X, \Sigma_\crit^*)$.
For any $\mu \in \{\cdeg, \pdeg, \deg\}$, we have:
\[ 
    \mu(\Sigma_\crit^*) = \min \{\mu(\Sigma) : \text{$(X, \Sigma)$ is an implicational base of $(X, \C)$}\}. 
\] 

\end{lemma}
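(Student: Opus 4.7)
The plan is to fix an arbitrary implicational base $(X, \Sigma)$ of $(X, \C)$ and to argue, for each $\mu \in \{\cdeg, \pdeg, \deg\}$ and each element $x \in X$, that $\mu_{\Sigma_\crit^*}(x) \leq \mu_\Sigma(x)$. Since $(X, \Sigma_\crit^*)$ is itself an implicational base of $(X, \C)$ by Theorem~\ref{thm:crit-optim}, the statement will follow by taking the maximum over $x$.

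The central ingredient is Lemma~\ref{lem:crit-pseudo}, which assigns to each critical generator $A$ of $b$ an implication $D_A \to B_A$ in $\Sigma$ with $A \subseteq D_A$, $b \in B_A$, and $\phi(D_A) = \phi(A)$. The crucial point is that the map $A \mapsto D_A$ is injective on the set of all critical generators (of any elements): if $A \neq A'$, then $A = \ex(\phi(A))$ and $A' = \ex(\phi(A'))$ by Proposition~\ref{prop:carac-crit}, so $\phi(A) \neq \phi(A')$, and consequently $\phi(D_A) \neq \phi(D_{A'})$; in particular $D_A \neq D_{A'}$, so these are distinct implications of $\Sigma$.

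For $\mu = \cdeg$, fix $b \in X$. Because $\Sigma_\crit^*$ is aggregated by premise and distinct critical generators of $b$ give distinct premises, $\cdeg_{\Sigma_\crit^*}(b) = |\critgen(b)|$. The injectivity above produces $|\critgen(b)|$ distinct implications of $\Sigma$ with $b$ in their conclusion, hence $\cdeg_\Sigma(b) \geq \cdeg_{\Sigma_\crit^*}(b)$. For $\mu = \pdeg$, fix $x \in X$: $\pdeg_{\Sigma_\crit^*}(x)$ is the number of distinct critical generators (of any element) containing $x$, and each such $A$ yields an implication of $\Sigma$ whose premise $D_A \supseteq A$ contains $x$, with distinct premises for distinct $A$, giving $\pdeg_\Sigma(x) \geq \pdeg_{\Sigma_\crit^*}(x)$.

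For $\mu = \deg$, note that since premises and conclusions are assumed disjoint throughout the paper, $\deg_{\Sigma_\crit^*}(x) = \pdeg_{\Sigma_\crit^*}(x) + \cdeg_{\Sigma_\crit^*}(x)$, and similarly for $\Sigma$. Let $P_x$ denote the distinct critical generators containing $x$ and $C_x := \critgen(x)$; these are disjoint, because if $A \in \critgen(x)$ then $x \notin A$. The injective map above sends $P_x$ to distinct implications of $\Sigma$ with $x$ in their premise, and $C_x$ to distinct implications of $\Sigma$ with $x$ in their conclusion; these two families of implications are themselves disjoint by the premise-conclusion disjointness assumption applied inside $\Sigma$. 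Hence $\Sigma$ contains at least $|P_x| + |C_x| = \deg_{\Sigma_\crit^*}(x)$ distinct implications in which $x$ occurs, giving $\deg_\Sigma(x) \geq \deg_{\Sigma_\crit^*}(x)$. There is no real obstacle: the only mild subtlety is the bookkeeping in the degree case, which is handled cleanly by the global disjointness convention on premises and conclusions.
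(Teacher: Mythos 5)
Your proposal is correct and takes essentially the same approach as the paper: for each $x$ you invoke Lemma~\ref{lem:crit-pseudo} to inject critical generators into implications of an arbitrary base $\Sigma$ with matching closures, and deduce the per-element inequalities $\mu_{\Sigma_\crit^*}(x) \le \mu_\Sigma(x)$. The only cosmetic difference is in the $\deg$ case, where the paper derives it from the additive identity $\deg = \pdeg + \cdeg$ and the two inequalities already proved, while you recount the distinct implications of $\Sigma$ directly; both rest on the same disjointness convention and yield the same bound.
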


\begin{proof}
We show the statement for each of the three measures.
We start with the equality regarding $\cdeg$, the conclusion-degree.
Let $(X, \Sigma)$ be an arbitrary IB of $(X, \C)$.
We show that $\cdeg(\Sigma_\crit^*) \leq \cdeg(\Sigma)$.
By definition of $\cdeg$, it is sufficient to argue that for each $b \in X$, $\cdeg_{\Sigma_\crit^*}(b) \leq \cdeg_\Sigma(b)$.
Let $b \in X$ and let $A_1, \dots, A_k$ be the critical generators of $b$.
Then $\cdeg_{\Sigma_\crit^*}(b) = k$.
By Lemma~\ref{lem:crit-pseudo}, for each $1 \leq i \leq k$ there exists an implication $D_i \to B_i$ in $\Sigma$ such that $b \in B_i$ and $\phi(A_i) = \phi(D_i)$.
By Propositions~\ref{prop:cg-meet-partition} and \ref{prop:carac-crit}, each $A_i$ is the unique minimal spanning set of $\phi(A_i)$.
%Since $(X, \C)$ is a convex geometry, each closed set has a unique minimal spanning set.
Hence, $\phi(A_i) \neq \phi(A_j)$ for every $1 \leq i, j \leq k$ such that $i \neq j$.
Since for each $1 \leq i \leq k$, $\phi(D_i) = \phi(A_i)$, we deduce $\phi(D_i) \neq \phi(D_j)$ for each $j \neq i$.
Consequently, all implications $D_1 \to B_1, \dots, D_i \to B_i$ must be distinct, all the while satisfying $b \in B_i$ for each $1 \leq i \leq k$.
We deduce $k = \cdeg_{\Sigma_\crit^*}(b) \leq \cdeg_\Sigma(b)$.

We proceed to analyze $\pdeg$, the premise-degree.
Again, we need only to prove that for each $a \in X$, $\pdeg_{\Sigma_\crit^*}(a) \leq \pdeg_\Sigma(a)$.
Let $a \in X$ and let $A_1, \dots, A_k$ be the (distinct) critical generators to which $a$ belongs.
We obtain $\pdeg_{\Sigma_\crit^*}(a) = k$ because $\Sigma_\crit^*$ is aggregated.
Applying Lemma~\ref{lem:crit-pseudo} once more, there exists $D_1 \to B_1, \dots, D_k \to B_k$ in $\Sigma$ such that, for each $1 \leq i \leq k$, $a \in A_i \subseteq D_i$ and $\phi(A_i) = \phi(D_i)$.
Similarly to $\cdeg$, we obtain $\phi(D_i) \neq \phi(D_j)$ for all $1 \leq i, j \leq k$ such that $i \neq j$.
Thus, all implications $D_1 \to B_1, \dots, D_k \to B_k$ are distinct, with $a \in D_i$ for each $1 \leq i \leq k$.
We deduce $k = \pdeg_{\Sigma_\crit^*}(a) \leq \pdeg_\Sigma(a)$ for all $a \in X$, and $\pdeg(\Sigma_\crit^*) \leq \pdeg(\Sigma)$ follows.

As for the degree $\deg$, we have $\deg_{\Sigma_\crit^*}(x) = \cdeg_{\Sigma_\crit^*}(x) + \pdeg_{\Sigma_\crit^*}(x)$ for each $x \in X$ as $x$ cannot belong to one of its critical generators by definition.
According to the previous arguments, we thus have $\deg_{\Sigma_\crit^*}(x) \leq \deg_{\Sigma}(x)$ for each $x$.
So $\deg(\Sigma_\crit^*) \leq \deg(\Sigma)$ also holds, which concludes the proof.
\end{proof}

Moreover, observe that $\cdeg(\Sigma_\crit) = \cdeg(\Sigma_\crit^*)$ so the critical base, which is not aggregated, is enough for the conclusion degree. 
We directly deduce the following for acyclic convex geometries.

\begin{corollary} \label{cor:crit-degree}
An acyclic convex geometry has bounded (premise- or conclusion-) degree if and only if its aggregated critical base has bounded (premise- or conclusion-) degree.
\end{corollary}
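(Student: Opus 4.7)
The plan is to derive this corollary as an immediate consequence of Lemma~\ref{lem:crit-degree}, read through the definition of a closure system's degree. Recall from Section~\ref{sec:preliminaries} that the premise-degree (respectively conclusion-degree) of an acyclic convex geometry $(X, \C)$ is defined as the minimum of $\pdeg(\Sigma)$ (respectively $\cdeg(\Sigma)$) taken over all implicational bases $(X, \Sigma)$ of $(X, \C)$. So the statement amounts to saying that, for $\mu \in \{\pdeg, \cdeg\}$, the quantity $\mu(\Sigma_\crit^*)$ is bounded if and only if this minimum is bounded.

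For the forward implication, I would start from any IB $(X, \Sigma)$ of $(X, \C)$ witnessing the bound on $\mu$ at the closure-system level, and apply Lemma~\ref{lem:crit-degree} directly to obtain $\mu(\Sigma_\crit^*) \leq \mu(\Sigma)$, which yields the desired bound on the aggregated critical base. For the converse, I would invoke Theorem~\ref{thm:crit-optim} to recall that $(X, \Sigma_\crit^*)$ is itself a valid implicational base of $(X, \C)$; any bound on $\mu(\Sigma_\crit^*)$ is then automatically a bound on the minimum defining the premise- or conclusion-degree of $(X, \C)$.

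No real obstacle is expected: the corollary is essentially a reformulation of Lemma~\ref{lem:crit-degree}, transferred to the closure-system level and restricted to the two relaxed notions of degree that will be used throughout the remainder of the paper. The only minor point worth mentioning is that the statement concerns the \emph{aggregated} critical base; for $\cdeg$ this distinction is immaterial thanks to the observation $\cdeg(\Sigma_\crit) = \cdeg(\Sigma_\crit^*)$ noted just before the corollary, while for $\pdeg$ the aggregation is genuinely used, since grouping implications by common premise is precisely what makes Lemma~\ref{lem:crit-degree} yield the matching lower bound on premise-degree.
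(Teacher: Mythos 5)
Your proposal is correct and matches the paper's (implicit) reasoning: the paper states the corollary as a direct consequence of Lemma~\ref{lem:crit-degree}, which literally says that $\mu(\Sigma_\crit^*)$ equals the minimum of $\mu(\Sigma)$ over all implicational bases of $(X,\C)$, i.e.\ the degree of the closure system itself, so boundedness of one is immediately equivalent to boundedness of the other. Your two-directional unpacking and the side remark about why aggregation matters for $\pdeg$ but not $\cdeg$ are both accurate and consistent with the text preceding the corollary.
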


To conclude this section, we give further insights on the translation tasks within acyclic convex geometries.
First, thanks to Theorem~\ref{thm:crit-optim}, one can always assume to be given the critical base as an input to \icsenum{} in this context.
More importantly, the critical base is, by Theorem~\ref{thm:crit-optim}, a correct solution to \mibgen{}, a fact that we will rely upon in Section~\ref{sec:mibgen}.

It is shown in \cite[Theorem~3.1]{defrain2021translating} that in acyclic convex geometries, all three problems \icsenum{}, \acsenum{} and \mibgen{} are harder than distributive lattice dualization, a generalization of \misenum{} and \mhsenum{}.
Here, we briefly discuss a simpler reduction that we will later adapt to obtain further hardness results for acyclic IBs with bounded degrees.
Let $\H = (V, \E)$ be a Sperner hypergraph as an input to \misenum{}.
Let $X := V \cup \{z\}$ where $z$ is a new gadget element and let $\Sigma := \{E \to z : E \in \E\}$.
The IB $(X, \Sigma)$ is our input to \icsenum{} and can be computed in polynomial time from $\H$.
It is acyclic, and furthermore it has height $2$, that is, $G(\Sigma)$ may be seen as a poset of height $2$.
The reduction is illustrated on an example in Figure \ref{fig:mis-to-ics}.
We proceed to describe the irreducible closed sets of $(X, \C)$, the closure system associated to $(X, \Sigma)$.
First, each element $v \in V$ is attached to a unique irreducible being $X \setminus \{v\}$, as $v$ is the conclusion of no implications.
As for $z$, by definition of $(X, \Sigma)$, $M \subseteq X$ belongs to $\irr(z)$ if and only if $M$ is inclusion-wise maximal for not including any premise of $\Sigma$, i.e., if and only if $M \in \MIS(\H)$.
Therefore, $\irr(\C) = \MIS(\H) \cup \{X \setminus \{v\} : v \in V\}$.
Given that $\irr(\C)$ has only a linear number of irreducible closed sets not in $\MIS(\H)$, which can be identified in polynomial time, an algorithm for \icsenum{} solves \misenum{} with the same time complexity.
Similarly, an algorithm solving \acsenum{} given $z$ would directly solve \misenum{}.

\begin{figure}[ht!]
    \centering
    \includegraphics[scale=1.0]{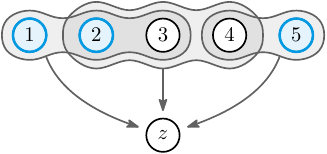}
    \caption{An illustration of the reduction from \misenum{} to \icsenum{} for Theorem~\ref{thm:ics-mib-height2-hardness}. The initial hypergraph is $\H = (\{1, 2, 3, 4, 5\}, \{123, 234, 45\})$ and the corresponding IB $(X, \Sigma)$ has implications $123 \to z, 234 \to z, 45 \to z$, drawn in grey.
    The set $125$, highlighted in blue, is a maximal independent set of $\H$ and hence an irreducible closed set attached to $z$.}
    \label{fig:mis-to-ics}
\end{figure}

For the \mibgen{} problem we instead use $\H$ has an input to \mhsenum{}.
We consider the closure system with $X := V \cup \{z\}$ and $\irr(\C) := \{V \setminus E : E \in \E\} \cup \{X \setminus \{v\} : v \in V\}$.
Note that $\irr(\C)$ can be built in polynomial time.
The unique unit-minimum IB associated to $(X, \C)$ is then $(X, \Sigma)$ with $\Sigma = \{T \to z : T \in \MHS(\H)\}$.
It is also acyclic and has height 2.

We conclude these observations with the following.

\begin{theorem}\label{thm:ics-mib-height2-hardness}
In acyclic convex geometries, the problems \icsenum{}, \acsenum{} and \mibgen{} are harder than \misenum{} (hence \mhsenum{}) even for acyclic convex geometries admitting an implicational base of height $2$.
\end{theorem}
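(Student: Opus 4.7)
The plan is to formalize the two reductions already outlined in the preceding paragraphs, namely one from \misenum{} handling both \icsenum{} and \acsenum{}, and a dual one from \mhsenum{} handling \mibgen{}. In both constructions the resulting implication graph will consist of arcs going from ground elements to a single gadget $z$, which immediately certifies the height-$2$ (and acyclicity) requirement.

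For \icsenum{} and \acsenum{}, I would start from a Sperner hypergraph $\H = (V, \E)$ and take $X := V \cup \{z\}$, $\Sigma := \{E \to z : E \in \E\}$. First I would observe that $G(\Sigma)$ is bipartite with arcs only from $V$ to $z$, hence acyclic, of height $2$, and that the associated closure system is therefore an acyclic convex geometry. Then I would determine $\irr(\C)$: each $v \in V$ appears in no conclusion of $\Sigma$, so $X \setminus \{v\}$ is closed and is the unique irreducible attached to $v$; and a set $M \subseteq X$ with $z \notin M$ is closed if and only if $M$ contains no $E \in \E$, so $\irr(z) = \MIS(\H)$. Hence $\irr(\C) = \MIS(\H) \cup \{X \setminus \{v\} : v \in V\}$, and stripping the $|V|$ sets of the second kind from the output of an algorithm for \icsenum{} (or directly from \acsenum{} called on the gadget $z$) recovers $\MIS(\H)$ in the same complexity class, modulo polynomial overhead in $|V|$.

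For \mibgen{} I would go in the opposite direction, from an input $\H=(V,\E)$ of \mhsenum{}. Take again $X := V \cup \{z\}$ and this time define the input family $\F := \{V \setminus E : E \in \E\} \cup \{X \setminus \{v\} : v \in V\}$, which is computable in polynomial time. I would argue that $\F$ is exactly $\irr(\C)$ of some acyclic convex geometry $(X,\C)$ by checking that the unique unit-minimum IB consistent with it is $\Sigma_\crit = \{T \to z : T \in \MHS(\H)\}$: indeed the sets $X\setminus\{v\}$ force $v$ to participate in no nontrivial implication as conclusion, and the sets $V\setminus E$ force $z$ to be implied precisely by the minimal hitting sets of $\E$; by Theorem~\ref{thm:crit-optim}, this IB is unit-minimum and acyclic of height $2$, and every element of $\MHS(\H)$ appears as a premise in it. A call to \mibgen{} therefore yields $\MHS(\H)$ by reading off the premises of the implications output.

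The only real subtlety, and what I would treat most carefully, is the verification of the second reduction: I need to confirm that $\F$ defined above is genuinely the family of irreducibles of its closure-system-by-intersection, that the resulting closure operator is a convex geometry (in particular that every element has $\irr$ nonempty and that $\{\irr(x) : x \in X\}$ partitions $\F$ per Proposition~\ref{prop:cg-meet-partition}), and that the critical base matches $\{T \to z : T \in \MHS(\H)\}$. Once this is established, the three enumeration problems inherit the complexity lower bound of \misenum{}/\mhsenum{} while the underlying geometries exhibit height-$2$ acyclic implicational bases, which yields the claimed hardness.
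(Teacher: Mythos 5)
Your proposal reproduces the paper's argument: the same $X = V \cup \{z\}$ with $\Sigma = \{E \to z : E \in \E\}$ for \icsenum{} and \acsenum{}, and the same $\irr(\C) = \{V \setminus E : E \in \E\} \cup \{X \setminus \{v\} : v \in V\}$ with unit-minimum base $\{T \to z : T \in \MHS(\H)\}$ for \mibgen{}, with the height-$2$ acyclicity observed in both cases. The extra verification you flag for the second reduction (that $\F$ is indeed $\irr(\C)$ and that the critical base is as claimed) is sound and fills in details the paper leaves implicit, but the route is the same.
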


By a small change in the construction of Theorem~\ref{thm:ics-mib-height2-hardness}, it can be shown that the result still holds for \acsenum{} in the case of bounded conclusion-degree.
Given $\H = (V, \E)$ with $\E = \{E_1, \dots E_m\}$, we put $X = V\cup\{z_i : 1\leq i\leq m\}$ where $z_1, \dots, z_m$ are new gadget elements and $ \Sigma = \{E_i \to z_i : 1\leq i\leq m\} \cup \{z_i \to z_{i+1} : 1\leq i <m\}$.
Note that $\Sigma$ has conclusion-degree 2.
The element $z_m$ will be the element as input of \acsenum{} along with $(X, \Sigma)$.
Again, we illustrate the reduction in Figure~\ref{fig:conclusion-hard}.
In the closure system associated to $(X, \Sigma)$, we have $\irr(z_m)= \MIS(\H)$.
We obtain the subsequent theorem.

\begin{theorem}\label{thm:conclusion-hard}
    The problem \acsenum{} is \misenum{}-hard even for acyclic implicational bases of conclusion-degree at most 2.
\end{theorem}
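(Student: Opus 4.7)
The approach is to adapt the reduction from \misenum{} to \acsenum{} sketched before Theorem~\ref{thm:ics-mib-height2-hardness}, where a single sink element $z$ is replaced by a chain of $m$ fresh elements $z_1,\dots,z_m$. This rearrangement shifts the ``collecting'' role from one high-conclusion-degree element to a sequence of elements, each with conclusion-degree at most $2$, while preserving the bijection with $\MIS(\H)$.

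First I would verify the structural properties of the constructed IB $(X,\Sigma)$. Acyclicity of $G(\Sigma)$ is immediate: every arc goes from some $v\in V$ to some $z_i$, or from $z_i$ to $z_{i+1}$, so $V$ forms an antichain of sources and $z_1,\dots,z_m$ a directed path. Hence the associated closure system $(X,\C)$ is an acyclic convex geometry. For the conclusion-degree, each $z_i$ with $i\geq 2$ appears as conclusion only in $E_i\to z_i$ and $z_{i-1}\to z_i$, $z_1$ only in $E_1\to z_1$, and no $v\in V$ appears as conclusion. Thus $\cdeg(\Sigma)\leq 2$.

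The heart of the argument is to show $\irr(z_m)=\MIS(\H)$. Given $M\in \irr(z_m)$, applying forward chaining on the implications $z_i\to z_{i+1}$ shows that any $z_i\in M$ would force $z_m\in M$; since $z_m\notin M$, we obtain $M\cap\{z_1,\dots,z_m\}=\emptyset$. Then no edge $E_i$ can be contained in $M$, as this would trigger $z_i$ and hence $z_m$. So $M\subseteq V$ is independent in $\H$, and its maximality among closed sets avoiding $z_m$ coincides with maximality as an independent set, since any $v \in V\setminus M$ with $M\cup\{v\}$ still independent yields a closed set avoiding $z_m$. Conversely, any $I\in\MIS(\H)$, viewed as a subset of $X$, is closed and avoids every $z_i$; adding any vertex $v\in V\setminus I$ creates some $E_i\subseteq I\cup\{v\}$ (which triggers $z_m$), and adding any $z_i$ triggers $z_m$ through the chain, so $I$ is maximal among closed sets avoiding $z_m$.

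The reduction is computable in polynomial time and $|\irr(z_m)|=|\MIS(\H)|$, so any output-polynomial algorithm for \acsenum{} on instances of conclusion-degree at most $2$ immediately yields one for \misenum{}. I do not anticipate a real obstacle here; the only subtle point is confirming that the chain gadget does not introduce spurious irreducible closed sets attached to $z_m$, and the forward-chaining argument handles this cleanly.
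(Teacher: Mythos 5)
Your proposal is correct and follows essentially the same route as the paper: you build the same chain gadget $z_1,\dots,z_m$ with $\Sigma=\{E_i\to z_i\}\cup\{z_i\to z_{i+1}\}$, observe conclusion-degree at most~$2$, and establish $\irr(z_m)=\MIS(\H)$. The only difference is that you spell out the forward-chaining and maximality verifications that the paper leaves implicit, which is fine.
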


\begin{figure}[ht!]
    \centering
    \includegraphics[page=2, scale=1.0]{figures/mis-to-ics.pdf}
    \caption{The reduction from \misenum{} to \acsenum{} with conlusion-degree at most $2$, applied to the hypergraph of Figure~\ref{fig:mis-to-ics}. 
    The implications are now $123 \to z_1, 234 \to z_2, 45 \to z_3, z_1 \to z_2, z_2 \to z_3$.
    With $z_1 \to z_2$, $z_2 \to z_3$, none of $z_1, z_2$, highlighted in red, cannot belong to any irreducible closed set attached to $z_3$.
    As compared to the reduction of Theorem~\ref{thm:ics-mib-height2-hardness}, irreducible closed sets of $z_3$ (formerly $z$) are unchanged.}
    \label{fig:conclusion-hard}
\end{figure}

Notice that this construction does not apply to \icsenum{}, as the $z_i$'s have non-trivial attached irreducible closed sets.
Indeed, for a given $i$, $\irr(z_i)$ is in a one-to-one correspondence with the maximal independent sets of the hypergraph consisting of the first $i$ edges $E_1, \dots, E_i$. 
Hence, the size of $\irr(\C)$ might be super-polynomial in the size of $\MIS(\H)$. 
Therefore, the question of the complexity of \icsenum{} for bounded conclusion-degree arises and will be dealt with in the following section.

\section{Top-down procedures}\label{sec:top-down}

We describe two sequential procedures 
to solve \icsenum{} and \critbase{} in acyclic convex geometries.
Intuitively, the first procedure consists in enumerating, incrementally and in a top-down fashion according to a topological ordering $x_1,\dots,x_n$ of the elements of the implicational base, all the irreducible closed sets attached to $x_i$.
The second one proceeds analogously and lists all critical generators of $x_i$ at each step. 

\subsection{Irreducible closed sets}\label{subsec:top-down-irr}
 
Let us focus on irreducible closed sets first.
Recall that the ancestors of an element $x$ in an acyclic implicational base are the elements that participate in a path of implications to~$x$, and which can be identified in polynomial time; see Section~\ref{sec:acyclic} for a formal definition.
We will be interested in solving the following version of \acsenum{} in which we are given ancestors' irreducible closed sets as additional information. 

\begin{problemgen}
  \problemtitle{Attached Closed Sets Enumeration with Ancestors' solutions (\acsaenum)}
  \probleminput{An acyclic implicational base $(X,\Sigma)$, an element ${x\in X}$, and the family $\irr(y)$ for every ancestor $y$ of $x$ in $\Sigma$.}
  \problemquestion{The set $\irr(x)$.}
\end{problemgen}

Clearly, an algorithm for \acsenum{} can be used to solve \acsaenum{}, while the reverse may not be true. 
We motivate this latter problem by showing that we have an incremental-polynomial time algorithm for \icsenum{} whenever we have one for \acsaenum.
Intuitively, our procedure will iteratively use an algorithm for \acsaenum{} on the elements of the ground set, in order, according to an arbitrary topological ordering of the implicational base.
Note that the base case corresponds to computing the irreducible closed set of the maximal elements $x$, i.e., those verifying $\irr(x)=\{X\setminus\{x\}\}$.
In terms of difficulty, \acsaenum{} may thus be seen as standing as an intermediate problem between \acsenum{} and \icsenum{} in acyclic convex geometries.

\begin{theorem}\label{thm:acsa-to-ics}
    Let $f\colon \mathbb{N}^2\to \mathbb{N}$ be a function.
    Let us assume that there is an algorithm that, given an instance of \acsaenum{} of size $N$, produces its $i^\text{th}$ solution in ${f(N,i)}$-time.
    Then there is an algorithm that, given an acyclic instance of \icsenum{} of size $N'$, produces its $j^\text{th}$ solution in $O(\poly(N') + N'\cdot f(N'+j,j))$ time.
\end{theorem}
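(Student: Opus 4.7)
The plan is to process the elements of $X$ in a topological order $x_1, \dots, x_n$ of the implication graph $G(\Sigma)$, which exists and is computable in polynomial time since $\Sigma$ is acyclic. For each $x_i$ in turn, we invoke the assumed algorithm for \acsaenum{} on input $(X, \Sigma)$, $x_i$, and the family $\{\irr(y) : y \in \anc(x_i)\}$. This input is well-defined because, by the topological order, every ancestor $y$ of $x_i$ has been fully processed beforehand and its family $\irr(y)$ is already stored. Each irreducible closed set returned by the subroutine is output immediately. By Proposition~\ref{prop:cg-meet-partition}, the collection $\{\irr(x) : x \in X\}$ partitions $\irr(\C)$, which guarantees both that the output is complete and that no irreducible closed set is emitted twice across the $n$ successive invocations.

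For the time analysis, fix the moment just after the $j$-th overall solution is emitted. At most $n \leq N'$ \acsaenum{} invocations have been launched by then, and each has produced at most $j$ solutions. The input to the invocation associated with some $x_l$ consists of $(X, \Sigma)$, the element $x_l$, and the families $\irr(y)$ for $y \in \anc(x_l)$. With the paper's convention that the size of a set family counts its elements and its sets rather than their bit-length (see Section~\ref{sec:preliminaries}), the ancestors' families collectively contribute at most $j$ to this size, since the total number of stored irreducible closed sets across all processed ancestors is bounded by the number of solutions produced so far. Hence every invocation has input size at most $N' + j$. Assuming $f$ nondecreasing in both arguments, without loss of generality, each invocation has used at most $f(N' + j, j)$ time. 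Summing over the $O(N')$ invocations and adding the $\poly(N')$ overhead for the initial topological sort, the one-time precomputation of the transitive closure of $G(\Sigma)$ (so that ancestors are retrieved in polynomial time), and the bookkeeping between calls, yields the claimed $O(\poly(N') + N' \cdot f(N' + j, j))$ bound.

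The only subtlety worth emphasizing is this size accounting: the additive $j$ in the first argument of $f$ is justified precisely because the paper measures the size of a family of sets by counting elements and sets rather than total bit-length. Under a bit-complexity convention, each of the $j$ stored sets could contribute up to $|X|$ bits to the input, and the bound would degrade to $f\bigl(N' \cdot (j + 1), j\bigr)$ in place of $f(N' + j, j)$. Since the counting convention is fixed in Section~\ref{sec:preliminaries}, no issue arises, and the rest of the argument is a direct assembly of the ingredients above.
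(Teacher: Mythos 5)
Your proposal is correct and follows essentially the same top-down sequential strategy as the paper's own proof: process elements in a topological order of $G(\Sigma)$, invoke the \acsaenum{} subroutine with the already-computed ancestors' families, use Proposition~\ref{prop:cg-meet-partition} for completeness and non-repetition, and bound the time by $N'\cdot f(N'+j,j)$ by observing that every invocation's input size is at most $N'+j$ under the paper's size convention. The only cosmetic difference is that the paper handles maximal elements as an explicit base case (computing $\irr(a)=\{X\setminus\{a\}\}$ directly) whereas you fold them into the general \acsaenum{} call; your observation about the counting versus bit-length convention is a correct and useful clarification of why the additive (rather than multiplicative) $j$ appears.
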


\begin{proof}
    Let us assume the existence of an algorithm $\algoa$ for \acsaenum{} as required by the statement.
    Without loss of generality and for convenience in the rest of the proof, let us assume that $f$ is increasing.
    We describe a sequential procedure $\algob$ which produces the $j^\text{th}$ solution of \icsenum{} within the desired time bound.

    Let ${(X,\Sigma)}$ be an acyclic instance of \icsenum{}.
    Recall that, for convenience, we view the ancestor relation induced by $\Sigma$ as a partial order on $X$, and denote sources of $G(\Sigma)$ as maximal elements.
    First, we identify each maximal element $a$ of $X$, compute $\irr(a) = \{X \setminus \{a\}\}$ in $\poly(|X|, \card{\Sigma})$ time, and mark these $a$ as handled.
    Every such set is output and stored in memory.
    Then, iteratively, given a maximal element $x\in X$ among those that are not marked, we identify its ancestors in $\poly(|X|,|\Sigma|)$ time, output and store $\irr(x)$ using $\algoa$, and mark~$x$.
    The procedure stops when every element of $X$ has been marked.
    In other words, in this procedure, the elements $x$ are selected with respect to a topological ordering of $G(\Sigma)$.
    
    Note that we can indeed use $\algoa$ in the last step of our procedure since the sets $\irr(a)$ for all ancestors $a$ of $x$ have been inductively computed by the choice of~$x$.
    Furthermore, as by Proposition~\ref{prop:cg-meet-partition} the sets $\irr(a)$ and $\irr(b)$ are disjoint for any two distinct $a,b\in X$, there is no repetition in the outputs of this procedure: each set output by $\algoa$ is a distinct new solution of~$\algob$.

    We now turn to the complexity analysis.
    Observe that, except for the executions of~$\algoa$, the different steps of $\algob$ take $\poly(N')$ time for $N'=|X|+|\Sigma|$.
    Let us now focus on the time spent by $\algob$ in order to output the $j^\text{th}$ solution in case it is not the irreducible closed set of a maximal element, i.e., when it has been provided by an instance of $\algoa$.
    By assumption, this instance produces the $i^\text{th}$ irreducible closed set attached to $x$, each time being a new solution of $\algob$, in a time which is bounded by $f(N,i)$, for $N$ being the sum of the sizes of $X$, $\Sigma$ and $\bigcup_{a\in \anc(x)}\irr(a)$.
    Note that the latter family has already been output by $\algob$ at this stage and can be recovered in $\poly(j)$ time from the memory.
    Thus, and since $f$ is increasing, this is bounded by $f(|X|+|\Sigma|+j, j)$ for $j$ being the total number of solutions output by $\algob$ so far.
    And finally, $\algoa$ will be call only once for each $x$ of the ground set.
    We conclude to the desired time bound by replacing $|X|+|\Sigma|$ by $N'$.
\end{proof}

We note that an analogous statement to the one in Theorem~\ref{thm:acsa-to-ics} holds by relaxing the required and obtained time bounds to output-polynomial times.
On the other hand, improving the complexity of an algorithm for \acsaenum{} to polynomial delay (or even input-polynomial as we will provide later for the case of bounded conclusion-degree) does not generally ensure polynomial delay (or incremental-linear) for \icsenum{}, since the size of an instance of \acsaenum{} grows with the number of already obtained solutions for \icsenum{}, which typically becomes exponential in the size of the implicational base.

\subsection{Critical generators}\label{subsec:top-down-crit}

We now turn on enumerating the critical generators of an acyclic convex geometry.
Analogously as for irreducible closed sets, we proceed in a top-down fashion according to a topological ordering of the ground set, this time related to the critical base we wish to compute. 
Only it should be argued that such an ordering can be efficiently computed from the irreducible closed sets, i.e., without the knowledge of the critical base and its implication graph, a fact that follows from Corollary~\ref{cor:anc-poly}.

More formally, we are interested in solving the following enumeration problem, which is an analogue of \acsaenum{} to critical generators. 

\begin{problemgen}
  \problemtitle{Critical Generators of an Element with Ancestors' solutions (\cgeagen)}
  \probleminput{A ground set $X$, the irreducible closed sets $\irr(\C)$ of an acyclic convex geometry $(X,\C)$, $x\in X$, and the set $\critgen(y)$ for every $y\in \critanc(x)$.}
  \problemquestion{The family $\critgen(x)$.}
\end{problemgen}

We stress the fact that $\C$ is not given in this problem; only $X$ and $\irr(\C)$ are given as a representation. 
Note that $\critgen(x)$ is empty for maximal elements of the $\critanc$ relation, which corresponds to the base case of our procedure.
Moreover, sets lying in $\critgen(y)$ and $\critgen(y')$ for distinct elements $y,y'\in X$ yield distinct implications of the (unit) critical base of $\C$.
Then, following a direct analogue of the proof of Theorem~\ref{thm:acsa-to-ics} 
we obtain the following result.

\begin{theorem}\label{thm:cgeagen-to-mib}
    Let $f\colon \mathbb{N}^2\to \mathbb{N}$ be a function.
    Let us assume that there is an algorithm that, given any instance of \cgeagen{} of size $N$, produces its $i^\text{th}$ solution in ${f(N,i)}$-time.
    Then there is an algorithm that, given any acyclic instance of \critbase{} of size $N'$, produces its $j^\text{th}$ solution in $O(\poly(N') + N'\cdot f(N'+j\cdot |X|,j))$ time.
\end{theorem}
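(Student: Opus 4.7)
The plan is to mimic the proof of Theorem~\ref{thm:acsa-to-ics} by processing the ground set $X$ in a topological order with respect to the critical-base ancestor relation, only that, this time, the ordering has to be obtained directly from $\irr(\C)$, since the critical base is exactly what we are trying to produce. Let $\algoa$ be an algorithm for \cgeagen{} as required and let $(X, \irr(\C))$ be an input instance of \critbase{}. We first invoke Corollary~\ref{cor:anc-poly} to compute the relation $\critanc$ in polynomial time in $N' = |X| + |\irr(\C)|$. Since $(X, \Sigma_\crit)$ is acyclic, $\critanc$ is a partial order, from which we extract a linear extension $x_1, \dots, x_n$.

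The algorithm $\algob$ then proceeds iteratively on $x_1, \dots, x_n$. For each $x_i$, the set $\critanc(x_i)$ is contained in $\{x_1, \dots, x_{i-1}\}$, so by induction the family $\critgen(y)$ has already been computed and stored for every $y \in \critanc(x_i)$. We can therefore feed $\algoa$ with $X$, $\irr(\C)$, $x_i$, and the previously stored $\critgen(y)$'s, and enumerate $\critgen(x_i)$. Each critical generator $A$ output by $\algoa$ is then emitted by $\algob$ as the implication $A \to x_i$. When all $x_i$'s have been processed, $\algob$ has produced exactly $\Sigma_\crit$, with no repetitions since a critical generator of $x_i$ cannot be a critical generator of any $x_j$ for $j \neq i$ (the conclusion element differs, and even the premises do not coincide as the same premise with distinct conclusions would contradict the definition of critical generators via Proposition~\ref{prop:carac-crit}).

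For the complexity, everything outside calls to $\algoa$ runs in $\poly(N')$. When $\algoa$ is invoked for $x_i$ to produce the next implication of $\algob$, say the $j^\text{th}$ so far, the size of the instance of \cgeagen{} fed to $\algoa$ is bounded by $|X| + |\irr(\C)|$ plus the total size of the already computed $\critgen(y)$'s for $y \in \critanc(x_i)$. Since each of the $j$ critical generators output so far has size at most $|X|$, this total is at most $N' + j \cdot |X|$. By assumption (and monotonicity of $f$, which we may assume without loss of generality), the time to produce the next critical generator of $x_i$ is bounded by $f(N' + j \cdot |X|, j)$. As $\algoa$ is called once per element of $X$, aggregating the polynomial overhead and the $\algoa$-time yields the announced $O(\poly(N') + N' \cdot f(N' + j \cdot |X|, j))$ bound to produce the $j^\text{th}$ solution of \critbase{}.

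The only genuinely non-routine point, compared to Theorem~\ref{thm:acsa-to-ics}, is that we do not have $\Sigma_\crit$ in hand to read off a topological order, and instead must rely on Corollary~\ref{cor:anc-poly} to recover $\critanc$ from $\irr(\C)$; everything else is a direct adaptation of the top-down skeleton already used for \acsaenum{}.
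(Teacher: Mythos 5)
Your proof follows the same top-down skeleton the paper invokes (the paper only states that the argument is "a direct analogue of the proof of Theorem~\ref{thm:acsa-to-ics}", using Corollary~\ref{cor:anc-poly} to recover $\critanc$ from $\irr(\C)$ first), and your reconstruction of that analogue, including the justification for the $N'+j\cdot|X|$ term, is correct. There is one small overclaim worth flagging: you assert that a set $A$ cannot simultaneously be a critical generator of two distinct elements $x_i$ and $x_j$, citing Proposition~\ref{prop:carac-crit}, but this is false in general. For instance, in the acyclic convex geometry on $X=\{1,2,3,4\}$ given by $\Sigma=\{12\to 3,\ 12\to 4\}$, the set $12$ satisfies $\ex(\phi(12))=12$ and $\phi(12)=1234$ is the unique (hence minimal) closed set $C$ with $3\in C\setminus\ex(C)$, and likewise for $4$; so $12\in\critgen(3)\cap\critgen(4)$. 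This is harmless for your argument, and indeed the paper is careful to only claim that sets in $\critgen(y)$ and $\critgen(y')$ for $y\neq y'$ give distinct \emph{implications} of the unit critical base (since the conclusions differ), which is all that is needed to avoid repetitions. Your first reason ("the conclusion element differs") is the correct one; the parenthetical about premises should be dropped.
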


\subsection{Limitations}

Let us end this section by pointing some limitation to the top-down approach.
Note that for implicational bases (or critical base) of height at most two, the inputs of \acsaenum{} and \cgeagen{} do not contain any valuable additional information compared to the inputs of \icsenum{} and \critbase{}.
Indeed, as mentioned earlier, for such bases, each maximal element $a$ has a unique, polynomially characterized attached irreducible closed set $X \setminus \{a\}$, due to the fact that $a$ has no minimal generators, and in particular no critical generators.
Nevertheless, these instances remain \misenum{}-hard by Theorem~\ref{thm:ics-mib-height2-hardness}.
Thus, in order to expect tractable algorithms using this strategy, one should most probably restrict the closure system to particular cases or bound structural parameters. 

\section{Bounded conclusion-degree}\label{sec:conclusion-degree}

We show that the top-down approach successfully provides an incremental-polynomial time algorithm for \icsenum{} whenever the input implicational base has bounded conclu\-sion-degree, proving one of the two cases of Theorem~\ref{thm:icsenum}.

In the following, let us consider an instance of \acsaenum{}, i.e., we fix some acyclic implicational base $(X,\Sigma)$, an element $x\in X$, and assume that we are given $\irr(y)$ for all ancestors $y$ of $x$ in $\Sigma$.
Let us consider the hypergraph \emph{of premises of $x$} defined by
\begin{align*}
    \E_x &:= \{A: A\rightarrow B \in \Sigma, \ x\in B\},\ \text{and}\\
    \H_x &:= \left(\bigcup \E_x, \E_x\right)\! . 
\end{align*}
The intuition behind the following lemma is that, in order to get an irreducible closed set attached to $x$, one should remove at least one element $a$ in the premise of each implication having $x$ in the conclusion, which can be done by considering the intersection with an irreducible closed set attached to $a$.
Recall that $\S(T)$ denotes the family of irreducible selections; see Section~\ref{sec:preliminaries}.

\begin{lemma}\label{lem:conclusion-intersection-characterization}
For every ${M\in \irr(x)}$ there exists a minimal transversal $T$ of $\H_x$ and an irreducible selection $\S\in \S(T)$ such that 
\[
    M = \left(\bigcap \S \right) \setminus \{x\}.
\]
\end{lemma}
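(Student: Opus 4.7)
My plan is to construct $T$ and $\S$ explicitly and argue that their intersection equals $M \cup \{x\}$, from which the statement follows after removing $x$.

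For the transversal, observe that since $M$ is closed and $x \notin M$, no premise $A$ of an implication $A \to B \in \Sigma$ with $x \in B$ can satisfy $A \subseteq M$---otherwise $x \in B \subseteq M$, a contradiction. As premises are disjoint from their conclusions, each $A \in \E_x$ meets $X \setminus (M \cup \{x\})$, so this set is a transversal of $\H_x$. Let $T$ be any minimal transversal of $\H_x$ contained in $X \setminus (M \cup \{x\})$. For every $t \in T$, since $t \notin M$ and $M$ is closed, $M$ is contained in some maximal closed set $M_t$ avoiding $t$, i.e.\ some $M_t \in \irr(t)$ (Proposition~\ref{prop:cg-meet-partition}). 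Set $\S := \{M_t : t \in T\} \in \S(T)$. Because in a convex geometry $M$ is attached to $x$ alone, and $t \neq x$, we have $M \subsetneq M_t$; the maximality of $M$ among closed sets avoiding $x$ then forces $x \in M_t$. Hence $M \cup \{x\} \subseteq \bigcap \S$.

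The heart of the proof is the reverse inclusion $\bigcap \S \subseteq M \cup \{x\}$. I would argue by contradiction: suppose there exists $y \in \bigcap \S$ with $y \notin M \cup \{x\}$. By maximality of $M$, the closure $\phi(M \cup \{y\})$ strictly contains $M$ and therefore contains $x$. Consider the forward-chaining computation of $\phi(M \cup \{y\})$ and focus on the step at which $x$ is first produced: some implication $A \to B \in \Sigma$ with $x \in B$ fires, with $A \subseteq \phi(M \cup \{y\})$ already present, so $A \in \E_x$. As $T$ is a transversal of $\H_x$, some $t \in T$ lies in $A$, and thus in $\phi(M \cup \{y\})$. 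But $M \cup \{y\} \subseteq M_t$ (using $y \in \bigcap \S \subseteq M_t$), and $M_t$ is closed, so $\phi(M \cup \{y\}) \subseteq M_t$; hence $t \in M_t$, contradicting $M_t \in \irr(t)$.

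The main obstacle I anticipate is isolating the right contradiction: the crucial idea is that any hypothetical extra element $y \in \bigcap \S$ would propagate through forward chaining until it triggers an implication in $\E_x$ whose premise $T$ must hit, creating the forbidden containment $t \in M_t$. The remaining ingredients---existence of the transversal, existence of $M_t \supseteq M$, and the inclusion $M \cup \{x\} \subseteq \bigcap \S$---follow directly from Proposition~\ref{prop:cg-meet-partition} and the defining properties of $M \in \irr(x)$.
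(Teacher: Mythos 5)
Your proof is correct, and the construction of $T$ and $\S$ is essentially the paper's: you build a transversal of $\H_x$ from elements outside $M$, pick $M_t\in\irr(t)$ containing $M$ for each $t\in T$, and let $\S=(M_t)_{t\in T}$. The divergence is in how you conclude. The paper observes that $\bigcap\S$ avoids $T$ (Observation~\ref{obs:selection-intersection}) and hence contains no premise of $\E_x$, from which it deduces in one step that $\bigl(\bigcap\S\bigr)\setminus\{x\}$ is closed; since this closed set avoids $x$ and contains $M$, maximality of $M\in\irr(x)$ forces equality. You instead establish the two inclusions $M\cup\{x\}\subseteq\bigcap\S$ (a fact the paper never needs, though it incidentally shows that $M\cup\{x\}$ is the unique closed successor of $M$) and $\bigcap\S\subseteq M\cup\{x\}$, the latter by a forward-chaining contradiction: an extra $y\in\bigcap\S$ would propagate to $x$ through some $A\in\E_x$, forcing $t\in M_t$ for some $t\in T$. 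Both arguments rest on the same two facts --- $\bigcap\S$ avoids $T$ and $T$ hits every edge of $\H_x$ --- but the paper's closedness-plus-maximality finish is shorter, whereas your explicit forward-chaining contradiction makes the mechanism more transparent and proves the slightly stronger $\bigcap\S=M\cup\{x\}$.
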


\begin{proof}
    Let $Y := \bigcup \E_x\setminus M$. 
    Note that for each $A\in \E_x$ there exists $a\in A$ such that $a\in Y$, as otherwise $A\subseteq M$ which contradicts the fact that $M \in \irr(x)$ as $A$ implies $x$. 
    Thus, $Y$ is a transversal of $\H_x$.
    Let $T\subseteq Y$ be any minimal transversal of $\H_x$ included in $Y$.
    Note that for all $a \in T$, as $M$ is a closed set and $a \notin M$, there exists a set $M_a \in \irr(a)$ such that $M \subseteq M_a$. 
    This defines an irreducible selection $\S=(M_a)_{a\in T}$ for $T$, where each such $M_a$ includes $M$. 
    Therefore, and as $x\notin M$, we have 
    $M \subseteq (\bigcap \S) \setminus \{x\}$.
    
    Now, as every member of $\S$ is closed, so is $\bigcap \S$.  
    Moreover, by definition of $\S$, no premise $A$ of $\E_x$ is contained in such an intersection. 
    Thus the set $M':=(\bigcap \S) \setminus \{x\}$ is closed.
    By the maximality of $M \in \irr(x)$, and since $M\subseteq M'$, we conclude that $M'=M$, hence to the desired equality.
\end{proof}

Note that, by definition, the number of edges in $\H_x$ is exactly $\cdeg(x)$, which is at most $\cdeg(\Sigma)$. 
Moreover, since the ground set of $\H_x$ is a subset of the ancestors of $x$, the selections of Lemma~\ref{lem:conclusion-intersection-characterization} are part of the input we consider for \acsaenum{}.
We use these observations to derive a polynomial-time algorithm for \acsaenum{} whenever the conclusion-degree of $x$ is bounded by a constant.

\begin{theorem}\label{thm:acs-a-input-poly}
    There is an algorithm solving \acsaenum{} in $N^{O(k)}$ time on inputs of size $N$ and maximum conclusion-degree $k$.
\end{theorem}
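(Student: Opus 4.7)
The strategy is to directly apply Lemma~\ref{lem:conclusion-intersection-characterization}: every $M \in \irr(x)$ can be written as $M = (\bigcap \S)\setminus \{x\}$ for some minimal transversal $T$ of $\H_x$ and some irreducible selection $\S \in \S(T)$. Since $\cdeg(x) \leq k$, the hypergraph $\H_x$ has at most $k$ edges, so any minimal transversal has size at most $k$. The algorithm I propose enumerates a candidate set that is guaranteed to contain every $M \in \irr(x)$, and filters out non-solutions and duplicates in polynomial time per candidate.

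First, I would enumerate all subsets $T \subseteq \bigcup \E_x$ of size at most $k$, of which there are at most $N^{O(k)}$, and retain those that are transversals of $\H_x$; this already covers every minimal transversal. For each such $T$ and each $a \in T$, the input provides $\irr(a)$ (since $a$ is an ancestor of $x$), of cardinality at most $N$, so $|\S(T)| \leq N^k$. Hence the number of pairs $(T, \S)$ is bounded by $N^{O(k)}$. For each such pair, I would compute the candidate $M := (\bigcap \S) \setminus \{x\}$; by Observation~\ref{obs:selection-intersection}, $\bigcap \S$ is closed and disjoint from $T$, and since $T$ is a transversal of $\H_x$, no premise of $\E_x$ is included in $\bigcap \S$, so $M$ is closed and does not contain $x$.

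It then remains to check whether $M$ is maximal with these properties, i.e., whether $M \in \irr(x)$. This can be done with a polynomial number of forward chaining calls by testing that $x \in \phi(M \cup \{y\})$ for every $y \in X \setminus (M \cup \{x\})$. Duplicates between different pairs $(T,\S)$ producing the same $M$ are filtered by comparison against the already output sets, using a lookup table; as the output has size at most $N^{O(k)}$, this fits within the budget. Correctness follows from Lemma~\ref{lem:conclusion-intersection-characterization}, which guarantees exhaustiveness, and all routines above are polynomial per candidate, giving an overall $N^{O(k)}$ running time. I do not anticipate major obstacles: the essential structural work is carried out in Lemma~\ref{lem:conclusion-intersection-characterization}, and what remains is a controlled brute-force search over a polynomially-sized candidate space followed by standard closure checks.
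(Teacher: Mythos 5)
Your proposal is correct and mirrors the paper's own proof: both enumerate all size-at-most-$k$ transversals of $\H_x$, range over irreducible selections drawn from the given $\irr(a)$'s, form the candidate intersections provided by Lemma~\ref{lem:conclusion-intersection-characterization}, and filter in polynomial time by checking that each candidate is a closed set attached to $x$. Your minor variations (not restricting to \emph{minimal} transversals, and the explicit duplicate check) do not change the $N^{O(k)}$ bound and match what the paper's argument implicitly does.
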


\begin{proof}
    As $\E_x$ has cardinality at most $k$, all possible minimal transversals $T$ of $\H_x$ can be computed in $|X|^{O(k)}$ time by brute forcing all subsets of size at most $k$, and testing in $\poly(|X|+|\H_x|)$ time, hence $\poly(N)$ time, whether the obtained set is a minimal transversal of $\H_x$.
    Moreover, all possible irreducible selections $\S\in \S(T)$ for each such $T$ can be computed in $(\max_{a\in T} |\irr(a)|)^{O(k)}$, hence in $N^{O(k)}$ time, since every $a\in T\subseteq \bigcup \E_x$ is an ancestor of $x$, thus the families $\irr(a)$ are part of the input.
    As the intersections of Lemma~\ref{lem:conclusion-intersection-characterization} can be computed in $\poly(N)$ time, and as it can be checked in $\poly(N)$ time if they are closed sets attached to $x$, we conclude that all $M\in \irr(x)$ can be identified in $N^{O(k)}$ time, as desired.
\end{proof}

We conclude to the following, as a corollary of Theorems \ref{thm:acsa-to-ics} and \ref{thm:acs-a-input-poly}, and proving the case of bounded-conclusion degree in Theorem~\ref{thm:icsenum}.

\begin{theorem}\label{thm:icsenum-conclusion}
    There is an incremental-polynomial time algorithm solving \icsenum{} in acyclic implicational bases of bounded conclusion-degree. 
\end{theorem}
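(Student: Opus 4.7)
The plan is to derive the theorem as a direct corollary of Theorems~\ref{thm:acsa-to-ics} and~\ref{thm:acs-a-input-poly}. Theorem~\ref{thm:acs-a-input-poly} provides an algorithm for \acsaenum{} running in $N^{O(k)}$ time on instances of conclusion-degree at most $k$, while Theorem~\ref{thm:acsa-to-ics} turns any $f(N,i)$-time algorithm for \acsaenum{} into an algorithm for \icsenum{} whose $j$-th solution is produced in $O(\poly(N') + N'\cdot f(N'+j,j))$ time.

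First I would observe that the top-down procedure underlying Theorem~\ref{thm:acsa-to-ics} invokes \acsaenum{} only on sub-instances of the very implicational base $(X,\Sigma)$ given as input to \icsenum{}, with a distinguished element $x\in X$ and the previously computed families $\irr(y)$ for the ancestors of $x$. In particular, the conclusion-degree of the \acsaenum{} sub-instances does not exceed $\cdeg(\Sigma)$, so if the latter is bounded by a constant $k$, every such sub-instance can be solved by the algorithm of Theorem~\ref{thm:acs-a-input-poly} in time $N^{O(k)}$, which is a polynomial $f(N,i) = N^{O(k)}$ independent of $i$.

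Next I would plug this bound into Theorem~\ref{thm:acsa-to-ics}. The resulting algorithm for \icsenum{} outputs its $j$-th solution in time
\[
O\bigl(\poly(N') + N'\cdot (N'+j)^{O(k)}\bigr),
\]
which is a polynomial in $N'$ and $j$. By definition this is incremental-polynomial time, proving the theorem.

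There is essentially no technical obstacle here, since both ingredients have already been established; the only point worth verifying is that the conclusion-degree is preserved across the sub-instances spawned by the top-down procedure, which holds because the implicational base itself is unchanged throughout the sequential enumeration.
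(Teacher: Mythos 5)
Your proposal is exactly the paper's argument: Theorem~\ref{thm:icsenum-conclusion} is stated there as a direct corollary of Theorems~\ref{thm:acsa-to-ics} and~\ref{thm:acs-a-input-poly}, with the same observation that the sub-instances of \acsaenum{} generated by the top-down procedure keep the original implicational base and hence preserve the conclusion-degree bound. The time bound you derive, a polynomial in $N'$ and $j$ when $k$ is fixed, matches the incremental-polynomial claim.
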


Let us remark that in our algorithm, we actually do not need the knowledge of $\irr(a)$ for \emph{all} ancestors of $x$, but only for those in a premise which has $x$ in conclusion. 
This however makes no asymptotic difference on the final complexity we get for \icsenum{}.

Also, note that in the proof of Theorem~\ref{thm:acs-a-input-poly} we only need a constant upper bound on the size $k$ of a minimal transversal of $\H_x$, to derive an input-polynomial time algorithm solving \acsaenum{}.
We immediately derive the following generalization of Theorem~\ref{thm:icsenum-conclusion} dealing with hypergraphs of bounded dual-dimension (i.e., admitting minimal transversals of bounded size) who define complement of conformal hypergraphs~\cite{berge1984hypergraphs}, which typically arise in geometric settings.

\begin{theorem}\label{thm:bounded-conclusion}
    There is an incremental-polynomial time algorithm solving \icsenum{} if, for any element $x$ of the ground set, the hypergraph of premises having $x$ in their conclusion has bounded dual-dimension.
\end{theorem}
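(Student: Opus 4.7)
The plan is to adapt the proof of Theorem~\ref{thm:acs-a-input-poly} with one observation: the bound on the conclusion-degree was only used to ensure that minimal transversals of $\H_x$ are short and few, and in Lemma~\ref{lem:conclusion-intersection-characterization} only transversals appear. So replacing the bound on the number of edges of $\H_x$ by a bound on the size of its minimal transversals should suffice.

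More precisely, assume the hypergraph $\H_x$ has dual-dimension at most some fixed constant $d$, for every $x \in X$. Then every minimal transversal of $\H_x$ has size at most $d$, so we can enumerate all minimal transversals of $\H_x$ by bruteforcing all subsets of $\bigcup \E_x \subseteq X$ of size at most $d$ and testing minimality in $\poly(N)$ time, where $N$ denotes the size of the \acsaenum{} input. This gives at most $|X|^{O(d)} = N^{O(d)}$ candidates.

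For each such minimal transversal $T$, since $|T| \leq d$, the number of irreducible selections $\S \in \S(T)$ is bounded by $(\max_{a \in T} |\irr(a)|)^{d} \leq N^{O(d)}$, and since every $a \in T$ is an ancestor of $x$, the families $\irr(a)$ are provided as part of the input and these selections can be enumerated in $N^{O(d)}$ time. For each selection $\S$, the set $M := (\bigcap \S) \setminus \{x\}$ is computed in $\poly(N)$ time, and it can be checked in $\poly(N)$ time whether $M$ is a closed set attached to $x$ (computing $\phi(\cdot)$ uses forward chaining). By Lemma~\ref{lem:conclusion-intersection-characterization}, every element of $\irr(x)$ is produced in this way, so after removing duplicates we recover $\irr(x)$ in total time $N^{O(d)}$.

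This gives an input-polynomial time algorithm for \acsaenum{} under the bounded dual-dimension assumption, and plugging it into Theorem~\ref{thm:acsa-to-ics} yields the desired incremental-polynomial time algorithm for \icsenum{}. No additional difficulty arises: the main point is simply to notice that the quantity controlling the proof of Theorem~\ref{thm:acs-a-input-poly} is the size of minimal transversals of $\H_x$, and a conclusion-degree bound is only one way of ensuring this.
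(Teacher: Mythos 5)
Your proposal is correct and matches the paper's reasoning exactly: the paper derives this theorem precisely by observing that the proof of Theorem~\ref{thm:acs-a-input-poly} only needs a constant bound on the size of minimal transversals of $\H_x$, which is what bounded dual-dimension guarantees, and then invokes Theorem~\ref{thm:acsa-to-ics}.
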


We end the section by showing the limitations of this approach in order to get output-polynomial time bounds whenever we drop the condition of bounded conclusion-degree.

First, note that the key step of the approach is to guess the correct $T\in \MHS(\H_x)$ and $\S\in \S(T)$ as in the statement of Lemma~\ref{lem:conclusion-intersection-characterization}, which we do by testing all possibilities in Theorem~\ref{thm:acs-a-input-poly}.
However, there are cases where the number of such possibilities is exponential in $|\irr(\Sigma)|$, which typically occurs if minimal transversals of $\H_x$ are of unbounded size, or if $\MHS(\H_x)$ is of exponential size in $\irr(x)$.

To get one such example, consider $k \in \mathbb{N}$ and the implicational base defined by $X:=\{a_1,b_1,\dots,a_k,b_k, x\}$ and $\Sigma:=\{a_ib_i \to x : 1\leq i \leq k\} \cup \{b_i \to b_{i+1}: 1 \leq i <k \}$; see Figure \ref{fig:dino-modif} for an illustration. 
Then $\irr(x) = \{ 
\{a_1,\dots,a_j\} \cup \{b_{j+1},\dots,b_k\} %(a_i)_{1\leq i\leq j} \cup (b_i)_{j+1 \leq i \leq k} 
: 0\leq j \leq k \}$ so $|\irr(x)| = k+1$, and the other elements in $X\setminus \{x\}$ are easily seen to have a single attached irreducible closed set. Therefore, $\irr(\Sigma)$ is of polynomial size in $|X|=2k+1$ and $|\Sigma|= 2k-1$. 
However, $|\MHS(\H_x)| = 2^k$ and computing this set would fail in providing an output-polynomial time algorithm for enumerating $\irr(x)$. 

\begin{figure}[ht!]
    \centering
    \includegraphics[scale=1.0]{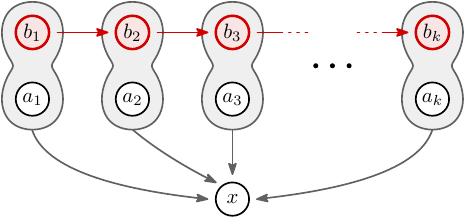}
    \caption{An instance where $|\MHS(\H_x)| = 2^k$ but $|\irr(\Sigma)| = 3k+1$ and $|\Sigma|=2k-1$.}
    \label{fig:dino-modif}
\end{figure}

At last, we note that this example has bounded premise-degree.
We will show in the next section that such instances can actually be efficiently solved by solution graph traversal.

\section{Bounded premise-degree}\label{sec:premise-degree}

We provide an incremental-polynomial time algorithm solving \icsenum{} whenever the given implicational base has bounded premise-degree.
Our algorithm relies on the top-down approach from Section \ref{sec:top-down} and solves \acsaenum{} relying on the solution graph traversal technique.

In the \emph{solution graph traversal} framework, one has to define a digraph called \emph{solution graph}, whose nodes are the solutions we aim to enumerate, and whose arcs $(F, F')$ consists of ways of reconfigurating a given solution $F$ into another solution $F'$.
Then, the goal is not to build the solution graph entirely, but rather to traverse all its nodes efficiently and avoiding repetitions of nodes.
This general technique has been fruitfully applied to numerous enumeration problems, see e.g., \cite{boros2004algorithms,khachiyan2008enumerating,yu2022efficient}, and generalized to powerful frameworks~\cite{avis1996reverse,cohen2008generating,conte2022proximity}.
In the following we will use the next folklore result which states conditions for such a technique to be efficient; see \cite{elbassioni2015polynomial} for a quantitative statement and a proof.

\begin{theorem}[Folklore]\label{thm:solution-graph}
    Let $\F\subseteq \pow{X}$ be a solution set (to be enumerated) over a ground set $X$, and let us denote by $N$ the size of the input.
    Then there is a $\poly(N)$-delay algorithm enumerating $\F$ whenever:
    \begin{enumerate}
        \item\label{it:initial} one member of $\F$ can be found in $\poly(N)$-time
        \item\label{it:function} there is a function $\N: \F\to \pow{\F}$ called \emph{neighboring function} mapping every solution to a set of solutions that can be computed in $\poly(N)$ time given $F \in \F$
        \item\label{it:connected} the directed solution graph $(\F, \{(F,F') : F'\in \N(F)\})$ is strongly connected
    \end{enumerate}
\end{theorem}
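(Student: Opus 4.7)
The plan is to perform a breadth-first traversal of the solution graph starting from the initial solution, maintaining a set of already-output solutions for deduplication. Concretely, I would first obtain $F_0 \in \F$ in $\poly(N)$ time by condition~(\ref{it:initial}), initialize $V := \{F_0\}$ and a queue $Q := (F_0)$, and output $F_0$. Iteratively, I would then pop the next solution $F$ from $Q$, compute $\N(F)$ in $\poly(N)$ time by condition~(\ref{it:function}), and for every $F' \in \N(F)\setminus V$ output $F'$ and add it to both $V$ and $Q$. Membership tests in $V$ can be handled by a balanced search tree indexed by a standard encoding of subsets of $X$, so that each insertion and query takes $\poly(N)$ time.

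Correctness follows from the strong connectivity assumption~(\ref{it:connected}): by induction on the length of a shortest directed path from $F_0$ to any $F \in \F$, every solution is eventually visited, added to $V$, and hence output. Since $V$ prevents repetition, each solution is output exactly once, and the loop terminates when $Q$ becomes empty, at which point $V = \F$.

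For the total running time, each solution is popped at most once and each pop triggers $\poly(N)$ work (one computation of $\N(F)$, plus polynomially many insertions and queries in $V$, using that $|\N(F)| \leq \poly(N)$ follows from the fact that $\N(F)$ is produced in polynomial time). This directly gives a total running time of $O(|\F| \cdot \poly(N))$, hence an output-polynomial algorithm. The main obstacle is strengthening this to the claimed polynomial \emph{delay}: a naive FIFO traversal may pop many solutions whose out-neighborhoods are entirely contained in $V$ before encountering a new one, so the gap between two consecutive outputs may be large in $|\F|$. The standard remedy, which is the technical core of this folklore statement and is carried out in detail in~\cite{elbassioni2015polynomial}, is to reorganize the traversal so that freshly discovered solutions are processed before older ones (a DFS-style interleaving), combined with an amortization that exploits the polynomial out-degree of the solution graph to bound the number of ``barren'' explorations charged to each new output by a polynomial in $N$.
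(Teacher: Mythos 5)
The paper does not prove this theorem: it defers to~\cite{elbassioni2015polynomial} for a quantitative statement and a proof, so your proposal is compared against the standard argument. You set up a BFS and output a solution $F'$ at the moment it is \emph{discovered} (i.e., when it is first added to $V$), and correctly observe that this choice does not yield polynomial delay: the traversal may pop a long stretch of already-saturated nodes between two discoveries, or perform a long final sweep after the last discovery, so the delay can be $\Theta(|\F|\cdot\poly(N))$. That observation is accurate. The problem is that you then only sketch a ``DFS-style interleaving with amortization'' and defer the technical core to the reference; since this is exactly the content of the polynomial-delay claim, the proof as written has a genuine hole. Moreover, switching to a LIFO order while still outputting at discovery time does not fix the issue (the final backtracking sweep remains), so the sketch is also imprecise.

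The fix is simpler than a reorganized traversal with amortization: output $F$ when it is \emph{popped} from $Q$, not when it is discovered. Everything else stays as you describe: find $F_0$ in $\poly(N)$ time by Condition~\ref{it:initial}, initialize $V := \{F_0\}$, $Q := (F_0)$; then repeatedly pop $F$, output $F$, compute $\N(F)$ in $\poly(N)$ time by Condition~\ref{it:function}, and insert every $F' \in \N(F)\setminus V$ into $V$ and $Q$. Correctness is as you argue, via Condition~\ref{it:connected}. For the delay, the time before the first output is $\poly(N)$, and the time between the $i$-th and $(i{+}1)$-th outputs (likewise the time after the last output) is bounded by the processing time of the $i$-th popped node: one computation of $\N(F)$ plus at most $|\N(F)|\leq\poly(N)$ membership tests and insertions, each in $\poly(N)$ time. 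No buffering, alternating-level output, or amortized charging is required.
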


In the following, let us consider an instance of \acsaenum{}, i.e., we fix some acyclic implicational base $(X,\Sigma)$, an element $x\in X$, and we assume that we are given $\irr(y)$ for all ancestors $y$ of $x$.

We first argue that Condition~\ref{it:initial} holds for \acsaenum{}.
Note that there is a function $\GC_x$ which maps any $Y\subseteq X$ such that $x \notin \phi(Y)$ to a maximal set $M\supseteq Y$ with this property.
We call the obtained set the \emph{greedy completion} of $Y$.
Note that $\GC_x(Y)$ can be computed in polynomial time in the size of $(X,\Sigma)$, and that the obtained set lies in $\irr(x)$. 
We derive the following.

\begin{lemma}\label{lem:sol:initial}
    A first solution $\GC_x(\emptyset)\in \irr(x)$ can be obtained in $\poly(|X|+|\Sigma|)$ time.
\end{lemma}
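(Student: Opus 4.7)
The plan is to describe the greedy completion procedure explicitly and then verify that it produces a member of $\irr(x)$ in polynomial time. Given $Y \subseteq X$ with $x \notin \phi(Y)$, fix an arbitrary ordering $z_1, \dots, z_m$ of the elements of $X \setminus \{x\}$, and process them in turn: for each $z_i$, compute $\phi(Y \cup \{z_i\})$ via forward chaining and, if the result does not contain $x$, update $Y \leftarrow \phi(Y \cup \{z_i\})$; otherwise leave $Y$ unchanged. Output the final $Y$ as $\GC_x(Y)$. For the lemma itself we instantiate this with $Y = \emptyset$; note that if $x \in \phi(\emptyset)$ then $\irr(x) = \emptyset$ and nothing needs to be produced, so we may assume $x \notin \phi(\emptyset)$ as a preliminary polynomial-time check.

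The correctness of the procedure is straightforward. By induction on the iterations, the current set $Y$ is always closed (it is either $\emptyset$ or a closure) and satisfies $x \notin Y$. Upon termination, for every $z \in X \setminus (Y \cup \{x\})$, the test must have failed, so $x \in \phi(Y \cup \{z\})$. Hence no closed set properly containing $Y$ avoids $x$, which means $Y$ is inclusion-wise maximal among closed sets not containing $x$, i.e., $Y \in \irr(x)$.

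For the complexity, each iteration performs a single forward chaining computation, which runs in $\poly(|X| + |\Sigma|)$ time, plus a test for membership of $x$. There are at most $m = |X| - 1$ iterations, so the whole procedure runs in $\poly(|X| + |\Sigma|)$ time, as required. There is no real obstacle here; the only subtlety is the degenerate case $x \in \phi(\emptyset)$, handled by a preliminary check.
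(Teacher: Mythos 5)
Your proposal is correct and matches the paper's intent: the paper leaves the greedy-completion procedure implicit (it is defined, and its polynomial computability and membership in $\irr(x)$ are asserted rather than argued), while you spell out the iteration, the invariant, and the maximality argument explicitly; the substance is the same. A minor remark: the degenerate case $x \in \phi(\emptyset)$ you guard against cannot occur here, since the paper assumes $\emptyset \in \C$ throughout (and convex geometries have $\emptyset$ closed by definition), so $\phi(\emptyset)=\emptyset$; also, when you write ``the test must have failed, so $x \in \phi(Y \cup \{z\})$'' you are tacitly invoking monotonicity of $\phi$ to pass from the $Y$ at the time $z$ was examined to the final $Y$---worth making explicit, though it is a one-line fix.
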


\def\EMy{\E_{M,y}} 
\def\VMy{V_{M,y}}
\def\HMy{\H_{M,y}}

We now turn on showing the existence of a polynomial-time computable neighboring function satisfying Conditions \ref{it:function} and \ref{it:connected}.
Given some irreducible closed set $M\in \irr(x)$, and some element $y \in X\setminus M$ 
we define
\begin{align*}
    \EMy & := \{A : A\rightarrow B \in \Sigma\ \text{with}\ A\setminus M=\{y\}, \ B\not\subseteq M\},\\
    \VMy &:= \bigcup \EMy,\\
    \HMy &:= (\VMy,\EMy).
\end{align*}
In other words, $\HMy$ is the hypergraph of premises of $\Sigma$ not included in $M$, triggered by adding $y$, and with a conclusion not in $M$.
Note that $|\EMy| \leq \pdeg(y) \leq \pdeg(\Sigma)$.
Also, because every premise in $\EMy$ implies a $b \notin M$, hence an ancestor of $x$, we derive the following observation by transitivity.

\begin{observation}
Every vertex of $\HMy$ is an ancestor of $x$. 
\end{observation}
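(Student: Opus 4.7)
The plan is to establish, for an arbitrary $v\in\VMy$, that $v\in\anc(x)$. Fix an implication $A\to B\in\EMy$ with $v\in A$; such an implication exists by definition of $\VMy$. Since $B\not\subseteq M$, pick any $b\in B\setminus M$. The arc $v\to b$ then belongs to $G(\Sigma)$. If $b=x$, this single arc already places $v$ in $\anc(x)$, so we may assume $b\neq x$, in which case $b\notin M\cup\{x\}$.

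The second step is to show $b\in\anc(x)$, which combined with $v\to b$ yields $v\in\anc(x)$. Since $M$ is an inclusion-wise maximal closed set not containing $x$, the proper superset $\phi(M\cup\{b\})$ must include $x$. I would apply forward chaining to $M\cup\{b\}$, producing $Y_0:=M\cup\{b\}\subseteq Y_1\subseteq\dots\subseteq Y_k=\phi(M\cup\{b\})$, and prove by induction on $i$ that every element of $Y_i\setminus M$ is reachable from $b$ in $G(\Sigma)$. The base case $i=0$ is immediate since $Y_0\setminus M=\{b\}$. For $i\geq 1$, any implication $A'\to B'$ triggered at step $i$ must satisfy $A'\not\subseteq M$, for otherwise $B'\subseteq\phi(M)=M$ would prevent introducing a new element in $Y_i\setminus M$. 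Choosing $a'\in A'\cap(Y_{i-1}\setminus M)$, the induction hypothesis gives a path from $b$ to $a'$, which extends via the arc $a'\to e$ (present in $G(\Sigma)$ since $a'\in A'$ and $e\in B'$) to any $e\in B'\setminus Y_{i-1}$.

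Applying this claim at the step where $x$ first appears yields a directed path $b\to\dots\to x$ in $G(\Sigma)$, and concatenating with the arc $v\to b$ places $v$ in $\anc(x)$. To complete the verification, $v\neq x$ is needed, which follows from $A\subseteq M\cup\{y\}$ and $x\notin M$: if $v=x$, this would force $y=x$, a case which may be excluded when defining the neighboring function of the solution graph (since adding $x$ to $M$ trivially recovers $x$, no such $y$ is considered). The main technical step is the forward-chaining induction, though nothing surprising arises in it once one observes that implications with premises entirely inside $M$ cannot fire without violating the closedness of $M$.
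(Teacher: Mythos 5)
Your proof is correct and fleshes out the argument the paper leaves implicit (the sentence "by the maximality of $M$, triggering any such implication results into implying $x$"): you pick $b\in B\setminus M$, get the arc $v\to b$ in $G(\Sigma)$, and show $b\in\anc(x)\cup\{x\}$ by a forward-chaining induction from $M\cup\{b\}$ using that implications fired with premise inside $M$ cannot introduce new elements, then concatenate. Your closing caveat on $v\neq x$ is the right reading of the statement: the observation is only ever invoked for $y$ with $x\notin\phi(y)$, which forces $y\neq x$ and hence, since $A\subseteq M\cup\{y\}$ and $x\notin M$, also $v\neq x$.
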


Regarding Condition \ref{it:function}, we first need the following statement, which holds the same idea as Lemma \ref{lem:conclusion-intersection-characterization} for bounded premise-degree.

\begin{lemma}\label{lem:intersection-progress}
    Let $M, M^*$ be two distinct sets in $\irr(x)$. 
    Then there exist $y\in M^* \setminus M$, $T\in \MHS(\HMy)$, and $\S\in \S(T)$ such that 
    \[
        M^* \subseteq \bigcap \S.
    \]
\end{lemma}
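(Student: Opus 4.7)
The plan is to find a well-chosen $y\in M^*\setminus M$ such that $X\setminus M^*$ turns out to be a transversal of $\HMy$. Once this is done, a minimal transversal $T$ lying entirely in $X\setminus M^*$ can be extracted, and a suitable selection $\S$ can be built from irreducibles extending $M^*$.

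First, note that $M$ and $M^*$ are two distinct inclusion-maximal closed sets avoiding $x$, so they are incomparable by inclusion and $M^*\setminus M$ is nonempty. The crucial step is to prove the following claim: there exists $y\in M^*\setminus M$ such that no premise $A\in \EMy$ is entirely contained in $M^*$. I would argue this by contradiction using the acyclicity of $\Sigma$. Suppose that for every $y\in M^*\setminus M$ there is an implication $A_y\to B_y\in \Sigma$ witnessing $A_y\in \EMy$ (so $A_y\setminus M = \{y\}$ and $B_y\not\subseteq M$) while also $A_y\subseteq M^*$. Then $B_y\subseteq M^*$ by closure of $M^*$, and since $B_y\not\subseteq M$ we have $\emptyset \neq B_y\setminus M \subseteq M^*\setminus M$. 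Picking any $y'$ in this intersection, the fact that $y\in A_y$ and $y'\in B_y$ yields an arc $y\to y'$ in $G(\Sigma)$ (with $y\neq y'$ since premises and conclusions are assumed disjoint). Iterating this construction inside the finite set $M^*\setminus M$ produces a directed walk that must eventually revisit a vertex, hence a directed cycle in $G(\Sigma)$, contradicting acyclicity.

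For such a $y$, every edge $A\in \EMy$ satisfies $A\not\subseteq M^*$, i.e.\ meets $X\setminus M^*$. Therefore $(X\setminus M^*)\cap \VMy$ is a transversal of $\HMy$, and extracting an inclusion-minimal transversal from it yields some $T\in \MHS(\HMy)$ with $T\cap M^* = \emptyset$. For each $t\in T$ we have $t\notin M^*$, and since $M^*$ is closed, extending it to an inclusion-maximal closed set avoiding $t$ produces a set $M_t\in \irr(t)$ with $M^*\subseteq M_t$. The family $\S := \{M_t : t\in T\}$ is then an irreducible selection in $\S(T)$ satisfying $M^*\subseteq \bigcap \S$, which yields the lemma.

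The main obstacle is the existence claim in the second step: everything else is a direct consequence of standard properties of closed sets and of the definitions of $\EMy$ and $\S(T)$. The acyclicity of $\Sigma$ is used in an essential way here, which is consistent with the fact that the whole top-down approach relies on this assumption.
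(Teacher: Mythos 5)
Your proof is correct and follows essentially the same strategy as the paper's: identify $y\in M^*\setminus M$ for which no premise in $\EMy$ is contained in $M^*$, conclude that $X\setminus M^*$ hits every edge of $\HMy$, take a minimal transversal $T$ inside it, and lift $M^*$ to an irreducible selection $\S\in\S(T)$. The only difference is the way acyclicity enters to produce such a $y$: the paper simply takes $y$ minimal in $M^*\setminus M$ with respect to the ancestor order and reads the conclusion off directly, whereas you argue by contradiction that a failing witness at every $y$ yields a directed walk, hence a cycle, inside $M^*\setminus M$ — two equivalent extremal arguments.
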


\begin{proof}
    Let $y$ be an element of $M^* \setminus M$ which is minimal, with respect to the ancestor order, for this property. 
    Let $A \in \EMy$ and $A \to B \in \Sigma$ such that $B \not \subseteq M$. As $y \in A$, every $b \in B$ is a descendant of $y$ so $b \notin M^*$ or $ b \in M$ by minimality of $y$.
    Moreover, $B \not \subseteq M$ by definition of $\EMy$ so there is a $b \in B$ such that $ b \notin M^*$.

    Therefore, for all $A \in \EMy$, $A \not \subseteq M^*$ ($M^*$ is a closed set). So $\VMy\cap M^*$ is an independent set of $\HMy$ thus its complement is a transversal. 
    Let $T$ be any minimal transversal of $\HMy$ included in $\VMy \setminus M^*$.
    Then, for all $a \in T$, $a \notin M^*$ so there exists $M_a \in \irr(a)$ such that $M^* \subseteq M_a$. For the irreducible selection $\S=(M_a)_{a\in T}$, one gets $M^* \subseteq \bigcap \S$.
\end{proof}

% such that $x\not\in \phi(y)$

We are now ready to define the aforementioned neighboring function.
Given $M \in \irr(x)$ and $y \notin M$, we define
\begin{align*}
    \mathcal{N}(M,y) &:= \left\{\, \GC_x\left( \left(M \cap \bigcap \S \right)\cup \{y\} \right) :
    \begin{tabular}{l}
         $\S\in \S(T) \text{ and} \ T\in \MHS(\HMy)$ \\
         $\text{such that } x \notin \phi\big((M \cap \bigcap \S) \cup \{y\}\big)$
    \end{tabular}
    \right\}
\end{align*}
and let $\mathcal{N}(M) := \bigcup_{y\not\in M} \mathcal{N}(M,y)$.
We prove that Conditions~\ref{it:function} and \ref{it:connected} hold for this function.%

\begin{observation}\label{obs:sol:neighbor-codomain}
    By definition, the greedy completion in the definition of $\mathcal{N}(M,y)$ is well defined and $\mathcal{N}(M)\subseteq \irr(x)$.
\end{observation}

\begin{lemma}\label{lem:sol:neighbor-computation}
    Let $N$ be the sum of the sizes of $X$, $\Sigma$, and $\irr(a)$ for all ancestors $a$ of $x$.
    Then, for all $M \in \irr(x)$, $\mathcal{N}(M)$ can be computed in $N^{O(k)}$ time where $k$ is the premise-degree of $\Sigma$.
\end{lemma}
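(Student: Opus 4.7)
The plan is to bound, for each choice of $y \notin M$ with $x \notin \phi(y)$, the number of minimal transversals of $\HMy$ and the number of irreducible selections, and then show that each element of $\N(M,y)$ can be produced in polynomial time. The outer union over $y$ contributes only a linear factor.

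First I would fix such a $y$ and construct $\HMy$ from $(X,\Sigma)$ and $M$ in polynomial time. Since $|\EMy| \leq \pdeg(y) \leq k$, the hypergraph $\HMy$ has at most $k$ edges, and consequently every minimal transversal $T \in \MHS(\HMy)$ has size at most $k$ (one element per edge suffices). Enumerating all subsets of $\VMy \subseteq X$ of size at most $k$ thus gives at most $|X|^{O(k)}$ candidates, each of which can be tested for being a minimal transversal in polynomial time in $N$.

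Next, for each such $T$, the family $\S(T)$ of irreducible selections has cardinality $\prod_{a \in T} |\irr(a)|$. Since every vertex of $\HMy$ is an ancestor of $x$, the families $\irr(a)$ for $a \in T$ are part of the input; hence $|\S(T)| \leq N^{|T|} \leq N^{k}$, and $\S(T)$ can be enumerated in $N^{O(k)}$ time. For each $\S \in \S(T)$, the intersection $\bigcap \S$, its intersection with $M$, the union with $\{y\}$, and the greedy completion $\GC_x$ are all computable in $\poly(N)$ time (and $\GC_x$ is well-defined on such inputs by Observation~\ref{obs:sol:neighbor-codomain}).

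Summing over the at most $|X|$ choices of $y$, the total running time is $|X| \cdot |X|^{O(k)} \cdot N^{O(k)} \cdot \poly(N) = N^{O(k)}$, which establishes the claimed bound. The main obstacle is essentially just a careful accounting of the combinatorial explosion, which is kept under control precisely by the premise-degree bound that limits $|\EMy|$ and hence the size and number of minimal transversals of $\HMy$.
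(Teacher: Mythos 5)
Your proof is correct and follows essentially the same approach as the paper: bounding $|\EMy|$ by the premise-degree $k$, bruteforcing minimal transversals of size at most $k$, bruteforcing irreducible selections using the fact that vertices of $\HMy$ are ancestors of $x$, computing each greedy completion in polynomial time, and multiplying by the linear factor over choices of $y$.
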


\begin{proof}
    Let us first note that $\GC_x\left( \left(M \cap \bigcap \S \right)\cup \{y\} \right)$ can be computed in $\poly(|X|+|\Sigma|+|\bigcap \S|)$ time, for any $y$ and $\S$. 
    Note that $\EMy$ has cardinality at most $k$, hence that all possible minimal transversals $T$ of $\HMy$ can be computed in $|X|^{O(k)}$ time by brute forcing all subsets of size at most $k$, and testing whether the obtained set is a minimal transversal of $\HMy$ in $\poly(|X|+|\HMy|)$ time, hence in $\poly(|X|+|\Sigma|)$ time.
    Moreover, all possible irreducible selections $\S\in \S(T)$ for each such $T$ can be computed in $(\max_{a\in T} |\irr(a)|)^{O(k)}$ time, hence in $N^{O(k)}$ time as families $\irr(a)$ are part of the input.
    Finally, the last union over $y$ multiplies the total time by a factor at most $|X| \leq N$, hence the result.
\end{proof}

Note that Condition~\ref{it:function} is ensured by Observation~\ref{obs:sol:neighbor-codomain} and Lemma~\ref{lem:sol:neighbor-computation}.
We are thus left to prove Condition~\ref{it:connected}.

\begin{lemma}\label{lem:sol:neighbor-proximity}
    Let $M, M^*$ be two distinct sets in $\irr(x)$.
    Then there exists $M'\in \mathcal{N}(M)$ such that $M\cap M^*\subset M'\cap M^*$. 
\end{lemma}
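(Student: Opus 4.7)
The plan is to reduce the statement directly to Lemma~\ref{lem:intersection-progress}, which already furnishes a transversal and selection witnessing a containment of $M^*$. The neighboring function $\mathcal{N}(M)$ is essentially designed so that one step of this construction, completed greedily, produces the desired $M'$. So the work is to carefully check (i) that the ingredients from Lemma~\ref{lem:intersection-progress} are legal inputs to $\mathcal{N}(M,y)$, and (ii) that the element $y$ witnesses strict progress.

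First, I would invoke Lemma~\ref{lem:intersection-progress} on the pair $M, M^*$ to obtain an element $y \in M^* \setminus M$, a minimal transversal $T \in \MHS(\HMy)$, and a selection $\S \in \S(T)$ with $M^* \subseteq \bigcap \S$. I would then verify that $y$ satisfies the two conditions required in the definition of $\mathcal{N}(M)$: one has $y \notin M$ since $y \in M^* \setminus M$, and $x \notin \phi(y)$ because $y \in M^*$ and $M^*$ is a closed set not containing $x$ (so $\phi(y) \subseteq M^*$). Thus the set
\[
M' \;:=\; \GC_x\!\left(\bigl(M \cap \textstyle\bigcap \S\bigr) \cup \{y\}\right)
\]
is a well-defined element of $\mathcal{N}(M,y) \subseteq \mathcal{N}(M)$, and belongs to $\irr(x)$ by Observation~\ref{obs:sol:neighbor-codomain}.

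Next, I would establish the containment $M \cap M^* \subseteq M' \cap M^*$. Since $M^* \subseteq \bigcap \S$, we have $M \cap M^* \subseteq M \cap \bigcap \S$, and by definition of the greedy completion, $M \cap \bigcap \S \subseteq M'$. Combining these with $M \cap M^* \subseteq M^*$ gives the inclusion. For strictness, observe that $y \in M^*$ (so $y \in M^* \cap M'$ since $y \in M'$ by construction), while $y \notin M$ (so $y \notin M \cap M^*$). Hence $M \cap M^* \subsetneq M' \cap M^*$, as required.

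The argument is short and routine once Lemma~\ref{lem:intersection-progress} is available; no serious obstacle is expected. The only subtlety to double-check is that the element $y$ chosen in Lemma~\ref{lem:intersection-progress} really does survive into $M'$ — this follows because $y$ is explicitly added before taking the greedy completion, which can only grow the set while staying within closed sets avoiding $x$.
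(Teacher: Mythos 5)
Your proposal is correct and follows the same line as the paper's own proof: invoke Lemma~\ref{lem:intersection-progress}, plug the resulting $y$, $T$, $\S$ into the neighboring function, and observe that adding $y$ makes the intersection with $M^*$ strictly larger. The one small point you add that the paper leaves implicit is the explicit check that $x\notin\phi(y)$ (needed for $y$ to be admissible in the definition of $\mathcal{N}(M)$), which follows cleanly from $y\in M^*$ and $M^*$ being a closed set avoiding $x$.
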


\begin{proof}
    Let $y \in M^*\setminus M$, $T \in \MHS(\HMy)$ and $\S \in \S(T)$ such that $M^* \subseteq \bigcap \S$, as provided in Lemma \ref{lem:intersection-progress}.
    Note that $T$ may intersect $M\setminus M^*$, but not $M^*$.
    Then 
    \begin{align*}
        M \cap M^* &\subseteq M \cap \bigcap \S\\
        &\subseteq \left( M \cap \bigcap \S \right) \cup \{y\}\\
        &\subseteq \GC_x\left( \left(M \cap \bigcap \S \right)\cup \{y\} \right) \in \mathcal{N}(M).
    \end{align*}
    So, by setting $M' := \GC_x\left( \left(M \cap \bigcap \S \right)\cup \{y\} \right)$ and intersecting with $M^*$, one gets $M\cap M^*\subseteq M'\cap M^*$.

    Finally, let us recall that $y \notin M$ by definition but $y \in M' \cap M^*$ by construction and definition. Hence the inclusion is strict.
\end{proof}

We derive the following, namely Condition~\ref{it:connected}, which is easily proved by induction since $M'\in \mathcal{N}(M)$ has greater intersection with $M^*$ than $M$ has with $M^*$.

\begin{corollary}\label{cor:sol:neighbor-proximity}
    The solution digraph $(\irr(x), \{(M,M') : M'\in \N(M)\})$ is strongly connected.
\end{corollary}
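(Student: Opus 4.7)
The plan is to derive strong connectivity from Lemma \ref{lem:sol:neighbor-proximity} by a straightforward potential-function argument, as suggested by the parenthetical remark preceding the corollary. Fix two distinct solutions $M, M^* \in \irr(x)$; I want to construct a directed walk from $M$ to $M^*$ in the solution digraph.

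Setting $M_0 := M$, I iteratively apply Lemma \ref{lem:sol:neighbor-proximity}: as long as the current vertex $M_i$ differs from $M^*$, the lemma provides some $M_{i+1} \in \N(M_i)$ such that $M_i \cap M^* \subsetneq M_{i+1} \cap M^*$. By construction this produces arcs $(M_i, M_{i+1})$ in the solution digraph, hence a directed walk starting from $M$. The potential $|M_i \cap M^*|$ strictly increases at every step and is bounded above by $|X|$, so the procedure must terminate after at most $|X|$ steps at some vertex $M_k$.

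It remains to check that the terminal vertex is precisely $M^*$. Either the construction stops by definition (i.e.\ $M_k = M^*$), or the strict growth of the intersection eventually forces $M^* \subseteq M_k$. In the latter case, since both $M_k$ and $M^*$ belong to $\irr(x)$ and are therefore inclusion-wise maximal closed sets not containing $x$ (Proposition~\ref{prop:cg-meet-partition}), the inclusion $M^* \subseteq M_k$ yields $M_k = M^*$. In either case the walk reaches $M^*$, proving the existence of a directed path from $M$ to $M^*$, and hence strong connectivity.

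There is essentially no obstacle here: the heavy lifting has been done by Lemma \ref{lem:sol:neighbor-proximity}, and the only minor point to verify is that the intersection cannot saturate strictly before hitting $M^*$, which follows from the pairwise incomparability of members of $\irr(x)$.
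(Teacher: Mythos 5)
Your proof is correct and follows exactly the argument the paper sketches (the paper merely states the corollary ``is easily proved by induction since $M'\in \mathcal{N}(M)$ has greater intersection with $M^*$ than $M$ has with $M^*$''). Your potential-function iteration, together with the observation that members of $\irr(x)$ are pairwise incomparable so that $M^*\subseteq M_k$ forces $M_k=M^*$, is precisely the intended filling-in of that sketch.
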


Combining Theorem~\ref{thm:solution-graph}, Lemma \ref{lem:sol:initial}, 
%Observation \ref{obs:sol:neighbor-codomain}, 
Lemma \ref{lem:sol:neighbor-computation}, and Corollary~\ref{cor:sol:neighbor-proximity}, we conclude to a polynomial-delay algorithm for \acsaenum{} by solution graph traversal for acyclic implicational bases of bounded premise-degree. 
This, together with Theorem~\ref{thm:acsa-to-ics}, in turn implies the following theorem, proving the remaining case of bounded premise-degree in Theorem~\ref{thm:icsenum}.

\begin{theorem}\label{thm:icsenum-premise}
   There is an incremental-polynomial time algorithm solving \icsenum{} on acyclic implicational bases of bounded premise-degree. 
\end{theorem}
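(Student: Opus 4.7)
The plan is to combine the solution graph traversal framework of Theorem~\ref{thm:solution-graph} with the top-down reduction of Theorem~\ref{thm:acsa-to-ics}. First I would attack \acsaenum{}: given an acyclic implicational base $(X,\Sigma)$ of bounded premise-degree, an element $x \in X$, and the families $\irr(y)$ for every ancestor $y$ of $x$, I would enumerate $\irr(x)$ with polynomial delay by traversing the solution digraph whose vertex set is $\irr(x)$ and whose arcs are specified by the neighboring function $\N$ defined in this section.

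To invoke Theorem~\ref{thm:solution-graph}, I would verify its three hypotheses in turn. Condition~\ref{it:initial} follows immediately from Lemma~\ref{lem:sol:initial}, which produces $\GC_x(\emptyset)$ as an initial element of $\irr(x)$ in polynomial time. Condition~\ref{it:function} is supplied by Observation~\ref{obs:sol:neighbor-codomain}, which guarantees $\N(M)\subseteq \irr(x)$, together with Lemma~\ref{lem:sol:neighbor-computation}, which, owing precisely to the bound on premise-degree, gives polynomial-time computability of $\N(M)$ in the size of the \acsaenum{} input. Finally, Condition~\ref{it:connected} is exactly the content of Corollary~\ref{cor:sol:neighbor-proximity}. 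Theorem~\ref{thm:solution-graph} then yields a polynomial-delay algorithm for \acsaenum{} on acyclic implicational bases of bounded premise-degree.

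To finish, I would plug this \acsaenum{} algorithm into Theorem~\ref{thm:acsa-to-ics} with $f$ polynomial. This conversion turns the polynomial-delay bound for \acsaenum{} into an incremental-polynomial bound for \icsenum{}: although the \acsaenum{} instances consulted during the top-down procedure grow as the ancestors' solutions accumulate, the delay between the $j$-th and $(j+1)$-th output of \icsenum{} remains polynomial in $|X|+|\Sigma|+j$, as required.

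The only potential subtlety is conceptual rather than technical, and has already been resolved in the preceding lemmas: strong connectivity of the solution digraph rests on Lemma~\ref{lem:intersection-progress}, which produces, given a current solution $M$ and any target $M^\ast \in \irr(x)$, a neighbor $M' \in \N(M)$ with strictly larger intersection with $M^\ast$; this forces termination of the induction underlying Corollary~\ref{cor:sol:neighbor-proximity}. Beyond that, the proof is a routine composition of polynomial factors through Theorem~\ref{thm:acsa-to-ics}.
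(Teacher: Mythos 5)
Your proposal is correct and follows essentially the same route as the paper: verify the three conditions of Theorem~\ref{thm:solution-graph} via Lemma~\ref{lem:sol:initial}, Observation~\ref{obs:sol:neighbor-codomain} with Lemma~\ref{lem:sol:neighbor-computation}, and Corollary~\ref{cor:sol:neighbor-proximity}, then compose with Theorem~\ref{thm:acsa-to-ics}. One small misattribution: the ``strictly larger intersection with $M^\ast$'' step is Lemma~\ref{lem:sol:neighbor-proximity}, not Lemma~\ref{lem:intersection-progress} (which only supplies the witnesses $y$, $T$, $\S$ used inside it), but this does not affect the validity of the argument.
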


As in the previous section, notice that we only needed to bound the size of a minimal transversal of $\HMy$ to derive an input-polynomial time algorithm solving \acsaenum{}.
This gives the following result, analogous to Theorem~\ref{thm:bounded-conclusion}.

\begin{theorem}\label{thm:bounded-premise}
    There is an incremental-polynomial time algorithm solving \icsenum{} if, for any irreducible closed set $M$ and $y \notin M$, the hypergraph $\HMy$ has bounded dual-dimension.
\end{theorem}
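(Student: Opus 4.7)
The plan is to adapt the proof of Theorem~\ref{thm:icsenum-premise} essentially verbatim, observing that the bound on the premise-degree was used in one place only, namely to bound the size of minimal transversals of $\HMy$ in Lemma~\ref{lem:sol:neighbor-computation}. Since the new hypothesis directly controls that same quantity, all arguments involving the solution graph traversal should carry over without structural change, and the combination with Theorem~\ref{thm:acsa-to-ics} should then yield the claimed complexity for \icsenum{}.

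More concretely, I would first reuse, without any modification, the same neighboring function $\mathcal{N}(M,y)$ and $\mathcal{N}(M)$ defined in Section~\ref{sec:premise-degree}, together with Lemma~\ref{lem:sol:initial} (initial solution), Observation~\ref{obs:sol:neighbor-codomain} (codomain), Lemma~\ref{lem:intersection-progress} (existence of a good $y$, $T$, $\S$), Lemma~\ref{lem:sol:neighbor-proximity} and Corollary~\ref{cor:sol:neighbor-proximity} (proximity and strong connectivity). None of these invoke the premise-degree: they only rely on properties of $\HMy$ as a hypergraph of premises and on irreducible selections. So under the new hypothesis their statements hold unchanged, and Conditions~\ref{it:initial} and~\ref{it:connected} of Theorem~\ref{thm:solution-graph} are satisfied.

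The only adaptation concerns Lemma~\ref{lem:sol:neighbor-computation}, and this is the step to spell out carefully. Let $k$ now denote the uniform upper bound on the dual-dimension of $\HMy$ over all $(M,y)$. By definition every $T \in \MHS(\HMy)$ has $|T| \leq k$, so enumerating all candidate minimal transversals of $\HMy$ is done by bruteforcing subsets of $\VMy$ of size at most $k$ and testing minimality in $\poly(|X|+|\Sigma|)$ time, for a total of $|X|^{O(k)}$ time. For each such $T$, since $|T| \leq k$ and each $a \in T$ is an ancestor of $x$ with $\irr(a)$ provided in the input to \acsaenum{}, the family $\S(T)$ has size at most $(\max_{a \in T}|\irr(a)|)^{k} \leq N^{O(k)}$ and can be listed within the same bound. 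Computing $\GC_x((M \cap \bigcap \S) \cup \{y\})$ takes polynomial time, and iterating over $y \notin M$ with $x \notin \phi(y)$ adds only a linear factor. The whole computation of $\mathcal{N}(M)$ therefore runs in $N^{O(k)}$ time, exactly as before.

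Combining Theorem~\ref{thm:solution-graph} with these ingredients yields a polynomial-delay algorithm for \acsaenum{} under the bounded dual-dimension hypothesis, and Theorem~\ref{thm:acsa-to-ics} then lifts this to an incremental-polynomial time algorithm for \icsenum{}. The only point that requires a brief justification is that the hypothesis is preserved along the top-down procedure: each call to \acsaenum{} is on the same implicational base $(X,\Sigma)$, so the assumption on the dual-dimension of $\HMy$ is unaffected. There is no real obstacle in this proof; the main content is noticing that the premise-degree condition in Section~\ref{sec:premise-degree} is precisely a convenient sufficient condition, via the trivial inequality \emph{dual-dimension of $\HMy$ is at most $|\EMy| \leq \pdeg(\Sigma)$}, and that what was actually used throughout was the weaker bound on minimal transversal sizes.
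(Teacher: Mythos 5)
Your proposal is correct and matches the paper's intended argument: the paper derives Theorem~\ref{thm:bounded-premise} precisely by observing that only the bound on minimal transversal sizes of $\HMy$ (not $|\EMy|$ itself) was used in Lemma~\ref{lem:sol:neighbor-computation}, and that all other ingredients of the solution graph traversal in Section~\ref{sec:premise-degree} are independent of the premise-degree. You have simply made this observation explicit and verified that each lemma carries over unchanged, which is exactly what the paper leaves implicit.
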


\section{Generating the critical implicational base} \label{sec:mibgen}

We provide an algorithm which constructs the critical base of an acyclic convex geometry given by its irreducible closed sets, namely solving \critbase{}.
Our algorithm relies on the top-down approach of Section~\ref{subsec:top-down-crit}, and solves \cgeagen{} by a saturation procedure which either finds a missing critical generator, or conclude that all critical generators have been found.
It performs in polynomial time whenever the aggregated critical base of the closure system has bounded premise- or conclusion-degree, relying on our algorithms from Sections~\ref{sec:conclusion-degree} and \ref{sec:premise-degree}, proving Theorem~\ref{thm:critgen}. 

In the \emph{saturation technique}, the goal is to decide, starting from the empty set as an initial partial solution set, whether a solution is missing, and to produce one such missing solution if it is the case.
This analogue functional problem is usually referred to as the \emph{decision version with counter example} of the enumeration problem.
It is folklore and easily seen that there is an incremental-polynomial time algorithm for an enumeration problem whenever its decision version with counterexample can be solved in polynomial time; see e.g.,~\cite{strozecki2019survey} for more details and \cite{fredman1996complexity,boros2003algorithms,mary2019efficient} for typical applications of this technique.

More formally, in the remaining of the section, we are interested in solving the following decision version with counterexample of \cgeagen{}.
Let us recall that $\critanc$, the ancestor relation associated to the critical base, can be computed in polynomial time from the irreducible closed sets by Corollary~\ref{cor:anc-poly}.

\begin{problemgen}
  \problemtitle{\makecell[l]{Critical Generators of an Element with Ancestors' solutions,\\ Decision version with counterexample (\cgeadec)}}
  \probleminput{A ground set $X$, the irreducible closed sets $\irr(\C)$ of an acyclic convex geometry $(X,\C)$, an element $x\in X$, the set $\critgen(y)$ for every $y\in \critanc(x)$, and a family $\F\subseteq \critgen(x)$.}
  \problemquestion{Yes if $\F=\critgen(x)$, a set $A\in \critgen(x)\setminus \F$ otherwise.}
\end{problemgen}

In the following, let us consider an instance of this problem.
Let us put
\[
    \Sigma':=\left\{A\to x : A\in \F\right\}\cup \left\{A\to y : A\in \critgen(y),\ y\in\critanc(x)\right\}.
\]
Note that $\Sigma'$ defines a subset of the critical base of $(X,\C)$.
We will note $(X,\C')$ the closure system represented by $\Sigma'$, and $\phi'$ its associated closure operator.
Since we are now dealing with two distinct closure systems, we shall write $\irr_{\C}(x)$ and $\irr_{\C'}(x)$ instead of $\irr(x)$ to distinguish between the two closure systems, and 
we do analogously for $\critgen$. 

As $\Sigma' \subseteq \Sigma_\crit$, one can notice that $\C \subseteq \C'$. This little fact will be a key element in the following proofs.

To solve \cgeadec{}, our strategy is to find a set $M \in \irr_{\C'}(x) \setminus \irr_{\C}(x)$ using an algorithm for \acsaenum{}. 
If such a $M$ exists, we argue that it provides a counterexample $A\in \critgen(x)\setminus \F$ (Lemma \ref{lem:extract-gen-from-irr} alongside Proposition \ref{prop:critical-poly}). 
If not, we prove that $\F= \critgen(x)$ (Lemma~\ref{lem:gen-crits-finished}).

\begin{lemma}\label{lem:extract-gen-from-irr}
    If there exists $M\in \irr_{\C'}(x)$ such that $M\not\in \irr_{\C}(x)$, then there exists $A\in \critgen_{\C}(x)$ such that $A\subseteq M$ and $A\not\in \F$.
\end{lemma}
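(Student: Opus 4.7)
The plan is to apply forward chaining to $M$ using $\Sigma_\crit$ and argue that the only way an implication of $\Sigma_\crit$ can have its premise contained in $M$ while its conclusion lies outside $M$ is to be of the form $A\to x$ with $A\in \critgen_\C(x)\setminus \F$, which directly yields the required counterexample.

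First, I would show that $M\notin \C$. Since $\Sigma' \subseteq \Sigma_\crit$ we have $\C\subseteq \C'$, hence any $M^*\in \C$ with $M\subsetneq M^*$ and $x\notin M^*$ would also lie in $\C'$, violating the maximality of $M$ in $\irr_{\C'}(x)$. If $M$ itself were in $\C$, this would force $M\in \irr_\C(x)$, contradicting the hypothesis.

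Next, I would establish the structural fact $X\setminus M \subseteq \critanc(x)\cup \{x\}$. Every implication $B\to y$ of $\Sigma'$ satisfies $B \cup \{y\} \subseteq \critanc(x)\cup \{x\}$: either $y=x$ and $B\in \F \subseteq \critgen_\C(x)$, so $B\subseteq \critanc(x)$; or $y\in \critanc(x)$ and $B\in \critgen_\C(y)$, giving $B\subseteq \critanc(y) \subseteq \critanc(x)$ by transitivity of $\critanc$. Therefore, if some $z\notin \critanc(x)\cup \{x\}$ were missing from $M$, the set $M\cup \{z\}$ would remain $\Sigma'$-closed and still avoid $x$, contradicting the maximality of $M$ in $\irr_{\C'}(x)$.

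Finally, I would run the forward chaining of $M$ with respect to $\Sigma_\crit$. Since $M\notin \C$, some implication $B\to y\in \Sigma_\crit$ must satisfy $B\subseteq M$ and $y\notin M$. Such an implication cannot lie in $\Sigma'$, since $M$ is $\Sigma'$-closed, so it belongs to $\Sigma_\crit\setminus \Sigma'$. These implications split into two types: (a) those of the form $A\to x$ with $A\in \critgen_\C(x)\setminus \F$, and (b) those of the form $B\to y$ with $y\notin \critanc(x)\cup \{x\}$. Type (b) is ruled out by the structural fact above, which forces $y\in M$, a contradiction. So we are in case (a), and setting $A:=B$ delivers $A\in \critgen_\C(x)\setminus \F$ with $A\subseteq M$, as required.

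The main obstacle is the structural observation $X\setminus M\subseteq \critanc(x)\cup \{x\}$: it captures the idea that the elements outside $M$ are precisely those whose inclusion is blocked by $\Sigma'$, and it crucially relies on the fact that $\Sigma'$ only involves elements of $\critanc(x)\cup \{x\}$. Once this is in hand, the forward chaining step isolates a missing critical generator of $x$ in a single round.
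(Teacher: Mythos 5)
Your proof is correct, and it follows the same overall strategy as the paper (show $M\notin\C$, find an implication of $\Sigma_\crit$ violated by $M$, argue it must be of the form $A\to x$ with $A\notin\F$), but you package the argument differently through the structural fact $X\setminus M\subseteq \critanc(x)\cup\{x\}$. The paper instead establishes that $x\in\phi(M)$ and then, in a somewhat terse step invoking acyclicity, extracts an implication $A\to b$ violated by $M$ with $A\cup\{b\}\subseteq\critanc(x)\cup\{x\}$ (essentially by restricting forward chaining to ancestors of $x$). Your intermediate fact is cleaner and self-contained: since every implication of $\Sigma'$ lives inside $\critanc(x)\cup\{x\}$, any $z$ outside that set could be added to $M$ while remaining $\Sigma'$-closed and $x$-free, contradicting maximality. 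Once this is in hand, the conclusion step becomes immediate: \emph{any} implication of $\Sigma_\crit$ violated by $M$ must lie in $\Sigma_\crit\setminus\Sigma'$ and have its conclusion in $\critanc(x)\cup\{x\}$, which by your classification forces it to be $A\to x$ with $A\in\critgen_\C(x)\setminus\F$. This avoids having to reason explicitly about $x\in\phi(M)$ or about which chain of forward-chaining derives $x$, at the minor cost of a separate lemma-like observation; both proofs rely on the same facts about $\Sigma'$ being contained in $\Sigma_\crit$ and supported entirely on $\critanc(x)\cup\{x\}$.
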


\begin{proof}
First, we argue that $M \notin \C$ and $x \in \phi(M)$.
Assume for contradiction that $M \in \C$.
Because $x \notin M$ and $M \notin \irr_\C(x)$ by assumption, there exists $y \neq x$ such that $M \cup \{y\} \in \C$, as $(X, \C)$ is a convex geometry.
However, as $\Sigma' \subseteq \Sigma_\crit$, we have $\C \subseteq \C'$.
Therefore, $M \cup \{y\} \in \C'$ holds, which contradicts $M \in \irr_{\C'}(x)$.
We deduce $M \notin \C$.
Again as $\C \subseteq \C'$, $\phi(M) \in \C'$ holds.
From $M \in \irr_{\C'}(x)$ and $M \subset \phi(M)$, we derive $x \in \phi(M)$ as required.

%Now, we consider the computation of $\phi(M)$ with $(X, \Sigma_\crit)$.
Since $x \notin M$ but $x \in \phi(M)$ and $(X, \Sigma_\crit)$ is acyclic, there exists an implication $A \to b \in \Sigma_\crit$ with $A \subseteq M$ but $b \notin M$ such that $A \cup \{b\} \subseteq \critanc_\C(x) \cup \{x\}$.
If $b \neq x$, $A \to b \in \Sigma'$ by definition of $\Sigma'$.
As $M \in \C'$, $A \subseteq M$ thus implies $b \in M$, a contradiction.
We deduce that $b = x$ must hold, which concludes the proof.
\end{proof}

\begin{lemma}\label{lem:gen-crits-finished}
    If $\irr_{\C'}(x)\subseteq \irr_{\C}(x)$, then $\irr_{\C'}(x) = \irr_{\C}(x)$ and so $\F= \critgen(x)$.
\end{lemma}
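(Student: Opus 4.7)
The plan is to first establish $\irr_{\C'}(x) = \irr_\C(x)$ (strengthening the hypothesis to an equality), and then, for each $A \in \critgen(x)$, to recognize $A$ as the premise of an implication of $\Sigma'$ by combining Proposition~\ref{prop:critical-poly} (to extract a critical generator in $\C'$) with Lemma~\ref{lem:crit-pseudo} (applied to $\Sigma'$ as an implicational base of $\C'$).

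For the equality of irreducibles, notice first that $\Sigma' \subseteq \Sigma_\crit$ yields $\C \subseteq \C'$. Take $M \in \irr_\C(x)$: then $M \in \C'$ with $x \notin M$, so there exists $M^* \in \irr_{\C'}(x)$ with $M \subseteq M^*$. By the hypothesis $M^* \in \irr_\C(x)$, and by maximality of $M$ in $\C$, we get $M = M^* \in \irr_{\C'}(x)$. Combined with the assumed inclusion, $\irr_{\C'}(x) = \irr_\C(x)$, and by the hypergraph duality recalled in Section~\ref{sec:preliminaries}, also $\mingen_{\C'}(x) = \mingen_\C(x)$.

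To show $\critgen(x) \subseteq \F$, fix $A \in \critgen(x)$. I would first observe that $x \in \phi'(A)$, as otherwise some $M \in \irr_{\C'}(x) = \irr_\C(x)$ would satisfy $A \subseteq M \in \C$, contradicting $x \in \phi_\C(A)$. Then applying Proposition~\ref{prop:critical-poly} in $\C'$ with $Y = A$ produces $A^* \in \critgen_{\C'}(x)$ with $\phi'(A^*) \subseteq \phi'(A)$. Lemma~\ref{lem:crit-pseudo} applied to the implicational base $\Sigma'$ of $\C'$ and the critical generator $A^*$ then yields $D \to B \in \Sigma'$ with $A^* \subseteq D$, $x \in B$ and $\phi'(A^*) = \phi'(D)$. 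By the definition of $\Sigma'$ (and since $x \notin \critanc(x)$ by acyclicity of $\Sigma_\crit$), the only implications of $\Sigma'$ with $x$ in the conclusion are $A' \to x$ for $A' \in \F$; hence $B = \{x\}$ and $D \in \F$. Because $D \in \F \subseteq \critgen_\C(x) \subseteq \mingen_\C(x) = \mingen_{\C'}(x)$, and $A^* \subseteq D$ already generates $x$ in $\C'$, minimality of $D$ forces $D = A^*$.

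It remains to identify $D$ with $A$, which I expect to be the main obstacle, since closures in $\C$ and $\C'$ may differ a priori. From $D = A^* \subseteq \phi'(A) \subseteq \phi_\C(A)$ we get $\phi_\C(D) \subseteq \phi_\C(A)$. Both $\phi_\C(A)$ and $\phi_\C(D)$ lie in the antichain $\min_\subseteq\{C \in \C : x \in C,\ x \notin \ex_\C(C)\}$ by Proposition~\ref{prop:carac-crit}, forcing $\phi_\C(D) = \phi_\C(A)$. Invoking the convex-geometry property of $\C$ (Proposition~\ref{prop:cg-meet-partition}), $A$ and $D$ both coincide with the unique minimal spanning set of this common closed set, hence $A = D \in \F$. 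The delicate point is exactly this round-trip: travel from $\C$ to $\C'$ via Proposition~\ref{prop:critical-poly} to produce the candidate $D$ from $\Sigma'$, then use the minimality property characterizing criticality in $\C$ (Proposition~\ref{prop:carac-crit}) to recognize $D$ as the original $A$.
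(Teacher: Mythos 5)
Your proposal is correct, and it takes a genuinely different route from the paper. The first half (establishing $\irr_{\C'}(x)=\irr_{\C}(x)$ from antichain maximality and $\C\subseteq\C'$) matches the paper. For the second half, the paper argues by contradiction: it takes a hypothetical $A\in \critgen_\C(x)\setminus\F$, shows $x\notin\phi'(A)$ by running forward chaining with $\Sigma'\subseteq\Sigma_\crit\setminus\{A\to x\}$ and invoking the unit-minimality of $\Sigma_\crit$ (Theorem~\ref{thm:crit-optim}), and then derives a contradiction from the equality of the two irreducible families. Your argument is direct and constructive: you first observe $x\in\phi'(A)$ for \emph{every} $A\in\critgen_\C(x)$ (which is precisely the paper's contradiction step, presented in forward gear), then use the existential content of Proposition~\ref{prop:critical-poly} to produce a $\C'$-critical generator $A^*$ with $\phi'(A^*)\subseteq\phi'(A)$, invoke Lemma~\ref{lem:crit-pseudo} on $\Sigma'$ to land on a premise $D$ with $A^*\subseteq D$ and conclusion $x$ (hence $D\in\F$), force $D=A^*$ via the duality $\mingen_\C(x)=\mingen_{\C'}(x)$ and minimality of $D$, and finally identify $D$ with $A$ through the antichain $\min_\subseteq\{C\in\C : x\in C,\ x\notin\ex(C)\}$ of Proposition~\ref{prop:carac-crit} and uniqueness of minimal spanning sets in a convex geometry. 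The paper's route is shorter and leans on unit-minimality of the critical base, whereas yours trades that for the structural Lemma~\ref{lem:crit-pseudo} and explicitly exhibits the matching implication $D\to x$ in $\Sigma'$, which makes visible exactly which member of $\F$ accounts for a given critical generator. Both are sound; yours is more machinery-heavy but more informative about the correspondence.
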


\begin{proof}
First, we prove that $\irr_\C'(x) = \irr_\C(x)$ indeed holds.
Let $M \in \irr_\C(x)$.
Then, $M \in \C'$ as $\C \subseteq \C'$ and there exists $M' \in \irr_{\C'}(x)$ such that $M \subseteq M'$.
As $\irr_{\C'}(x) \subseteq \irr_\C(x)$ and the members of $\irr_\C(x)$ are pairwise incomparable, we deduce $M = M'$ and $M \in \irr_{\C'}(x)$.

Then, we argue that $\F = \critgen_\C(x)$ by contradiction.
Recall that $\F \subseteq \critgen_\C(x)$ by definition of $\F$.
Hence, suppose that $\F \subset \critgen_\C(x)$ and let $A \in \critgen_\C(x) \setminus \F$.
We first show that $x \notin \phi'(A)$. %, where $\phi'$ is the closure operator associated with $(X, \Sigma')$. 
Again, suppose for contradiction that $x \in \phi'(A)$. 
Then, applying forward chaining on $A$ with $\Sigma'$ will reach $x$.
Since $\Sigma' \subseteq \Sigma_\crit$ and $A \to x \notin \Sigma'$, applying the forward chaining on $A$ with $\Sigma_\crit \setminus \{A \to x\}$ will reach $x$ too.
Hence, the two IBs $(X, \Sigma_\crit)$ and $(X, \Sigma_\crit \setminus \{A \to x\})$ are equivalent.
However, this contradicts the fact that $\Sigma_\crit$ is unit-minimum by Theorem~\ref{thm:crit-optim}.
We conclude that $x \notin \phi'(A)$ must hold as expected.
Now, $x \notin \phi'(A)$ implies that there exists $M \in \irr_{\C'}(x)$ such that $\phi'(A) \subseteq M$.
By assumption, $M \in \irr_\C(x)$, so that $A \subseteq \phi(A) \subseteq \phi(M) = M$.
However, $x \in \phi(A)$ holds as $A \in \critgen_\C(x)$, which entails $x \in M$, a contradiction with $M \in \irr_\C(x)$.
We conclude that such $A$ cannot exist, and hence that $\F = \critgen_\C(x)$, which was to be proved.
\end{proof}

Before concluding to an algorithm for \cgeadec{} based on \acsaenum{}, there is one last subtlety to handle: we need to guarantee that for each $y \in \critanc(x)$, the irreducible closed sets $\irr(y)$ we provide as an input to \acsaenum{} along with $\Sigma'$ indeed correspond to the irreducible closed sets attached to $y$ in $(X, \C')$, i.e., that $\irr_{\C}(y) = \irr_{\C'}(y)$ for each such $y$.

\begin{lemma} \label{lem:same-irreducibles-ancestor}
For every $y \in \critanc(x)$, $\irr_{\C}(y) = \irr_{\C'}(y)$.
\end{lemma}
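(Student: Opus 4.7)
The plan is to exploit the transitivity of the $\critanc$ relation together with the acyclicity of $\Sigma_\crit$. The main preliminary observation is that, for any $y \in \critanc(x)$, we have $\critanc(y) \subseteq \critanc(x)$ by transitivity. Hence, every implication $A \to z$ of $\Sigma_\crit$ with $z \in \critanc(y) \cup \{y\}$ satisfies $z \in \critanc(x)$, and thus belongs to $\Sigma'$ by definition. In other words, $\Sigma'$ contains all implications of $\Sigma_\crit$ whose conclusion is $y$ or an ancestor of $y$.

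The key auxiliary claim I will prove is that, for every $z \in \critanc(x)$ and every $S \subseteq X$, we have $z \in \phi(S)$ if and only if $z \in \phi'(S)$. The backward implication is immediate from $\Sigma' \subseteq \Sigma_\crit$, which yields $\phi'(S) \subseteq \phi(S)$. For the forward implication, I proceed by induction on the height of $z$ in the acyclic implication graph $G(\Sigma_\crit)$. If $z \in S$ the claim is trivial; otherwise some implication $A \to z \in \Sigma_\crit$ satisfies $A \subseteq \phi(S)$. Since $z \in \critanc(x)$, this implication also lies in $\Sigma'$ by the preceding observation. Each $a \in A$ is a strict ancestor of $z$, hence belongs to $\critanc(z) \subseteq \critanc(x)$ and has smaller height, so by the induction hypothesis $a \in \phi'(S)$. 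Applying $A \to z$ within $\Sigma'$ then yields $z \in \phi'(S)$.

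Finally, I conclude by using the characterization that, for any closure system, $M \in \irr(y)$ if and only if $M$ is a maximal subset of $X$ satisfying $y \notin \phi(M)$: any such maximal $M$ must satisfy $\phi(M) = M$, since otherwise $\phi(M)$ would form a strictly larger set with the same property. Because we have just shown that, for $y \in \critanc(x)$, the conditions $y \notin \phi(M)$ and $y \notin \phi'(M)$ are equivalent over all $M \subseteq X$, the associated families of inclusion-wise maximal witnesses coincide, and each such witness is closed both in $\C$ and in $\C'$. This gives the desired equality $\irr_{\C}(y) = \irr_{\C'}(y)$.

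The main (minor) obstacle is formalizing the inductive step about closure operators: one must carefully invoke the acyclicity of $\Sigma_\crit$ to recurse on strict ancestors of $z$, and transitivity of $\critanc$ to ensure that the implication $A \to z$ used in the derivation falls inside $\Sigma'$, so that the argument does not accidentally call upon the missing implications $A \to x$ with $A \in \critgen(x) \setminus \F$.
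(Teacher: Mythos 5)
Your proof is correct. It rests on the same core insight as the paper's proof, namely that because $\Sigma'$ contains every implication of $\Sigma_\crit$ whose conclusion lies in $\critanc(x)$, the closure operators $\phi$ and $\phi'$ agree on all the information relevant to $y$ and its ancestors. Where you diverge is in the packaging: the paper first reduces via the duality $\irr(y) = \MIS(\mingen(y) \cup \{\{y\}\})$ to showing $\mingen_\C(y) = \mingen_{\C'}(y)$, and then observes that for any $Z \subseteq \critanc(y)$ the forward chaining deriving $y$ only uses implications already present in $\Sigma'$. You instead establish a slightly more general claim --- for every $z \in \critanc(x)$ and every $S \subseteq X$, $z \in \phi(S) \Leftrightarrow z \in \phi'(S)$ --- by a clean induction on height in $G(\Sigma_\crit)$, and then conclude directly from the characterization of $\irr(y)$ as the inclusion-maximal sets $M$ with $y \notin \phi(M)$. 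Your route bypasses the duality with minimal generators, which makes it a bit more self-contained; the paper's route has the advantage of aligning with the tools (duality, minimal generators) already used throughout. Both are valid, and the inductive formulation of the closure-agreement step is a nice way to make explicit what the paper states more tersely.
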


\begin{proof}
Because attached irreducible closed sets and minimal generators define dual hypergraphs, it is sufficient to show that $\mingen_\C(y) = \mingen_{\C'}(y)$.
Since $(X, \Sigma_\crit)$ is acyclic, so is $(X, \Sigma')$ and $\critanc_\C(y) = \anc_{\Sigma'}(y)$ follows by definition of $\Sigma'$.
We deduce that for every $A \in \mingen_\C(x) \cup \mingen_{\C'}(x)$, $A \subseteq \critanc_\C(x)$.
Again using acyclicity, for each $Z \subseteq \critanc_\C(y)$, $y \in \phi(Z)$ if and only if there exists a subset $\Sigma'' = \{A_1 \to b_1, \dots, A_k \to b_k\}$ of implications of $\Sigma_\crit$ where $A_i \cup \{b_i\} \subseteq \critanc_\C(y) \cup \{y\}$ for all $1 \leq i \leq k$ and $y \in \phi''(Z)$, where $\phi''$ is the closure operator associated to $(X, \Sigma'')$.
Given that any such $\Sigma''$ satisfies $\Sigma'' \subseteq \Sigma' \subseteq \Sigma_\crit$, we deduce that $y \in \phi(Z)$ if and only if $y \in \phi'(Z)$.
Therefore, $A \in \mingen_\C(x)$ if and only if $A \in \mingen_{\C'}(x)$, which concludes the proof.
\end{proof}

We conclude to the following as a consequence of Lemmas \ref{lem:extract-gen-from-irr} and~\ref{lem:gen-crits-finished}, and derive its corollary by the saturation technique.

\begin{theorem}\label{thm:icsenum-to-cgeadec}
     There is a polynomial-time algorithm solving \cgeadec{} in acyclic convex geometries whenever there is an incremental-polynomial time algorithm solving \acsaenum{} within the same class.
\end{theorem}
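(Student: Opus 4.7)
The plan is to enumerate the irreducible closed sets attached to $x$ in a suitably defined closure system via the assumed algorithm for \acsaenum{}, and to interpret the outputs through Lemmas~\ref{lem:extract-gen-from-irr}, \ref{lem:gen-crits-finished}, and~\ref{lem:same-irreducibles-ancestor}. Concretely, I would build from the input the acyclic implicational base
\[
  \Sigma' \;:=\; \{A \to x : A \in \F\} \,\cup\, \{A \to y : y \in \critanc(x),\ A \in \critgen(y)\},
\]
contained in the critical base $\Sigma_\crit$ and whose closure system $\C'$ satisfies $\C \subseteq \C'$. By Lemma~\ref{lem:same-irreducibles-ancestor}, $\irr_\C(y) = \irr_{\C'}(y)$ for every $y \in \critanc(x)$, so feeding $(X, \Sigma')$, $x$, and these families to \acsaenum{} constitutes a valid instance and enumerates $\irr_{\C'}(x)$ in incremental-polynomial time.

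For each $M$ produced, I would test in polynomial time whether $M \in \irr_\C(x)$ using the given $\irr(\C)$. If every output passes the test, then $\irr_{\C'}(x) \subseteq \irr_\C(x)$ and Lemma~\ref{lem:gen-crits-finished} yields $\F = \critgen(x)$, so the algorithm returns YES. Otherwise, at the first $M \in \irr_{\C'}(x) \setminus \irr_\C(x)$, I would apply Proposition~\ref{prop:critical-poly} with $Y = M$ (note $x \in \phi(M)$, since $M \notin \C$ forces $\phi(M) \supsetneq M$ and maximality of $M$ in $\C'$ prevents $\phi(M)$ from being a proper closed extension of $M$ avoiding $x$), extracting in polynomial time a critical generator $A$ of $x$ with $\phi(A) \subseteq \phi(M)$. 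Acyclicity of $\Sigma_\crit$ gives $A \subseteq \critanc(x)$; moreover, since $\Sigma'$ contains every implication of $\Sigma_\crit$ with conclusion in $\critanc(x)$, a forward chaining argument restricted to ancestors of $x$ shows that $\phi(M) \cap \critanc(x) \subseteq M$ for $M \in \C'$. Combining both, $A \subseteq M$, and hence $A \notin \F$: had $A \in \F$, the implication $A \to x$ would lie in $\Sigma'$ and fire on $M$, forcing $x \in M$ and contradicting $M \in \irr_{\C'}(x)$.

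For the complexity, the number of \acsaenum{} outputs consumed is at most $|\irr_\C(x)| + 1$: in the YES case Lemma~\ref{lem:gen-crits-finished} yields $\irr_{\C'}(x) = \irr_\C(x)$, so exactly $|\irr_\C(x)|$ outputs are produced; in the NO case the outputs preceding the first bad $M$ lie in $\irr_\C(x) \cap \irr_{\C'}(x) \subseteq \irr_\C(x)$. Since $|\irr_\C(x)| \leq |\irr(\C)|$ is part of the input, the incremental-polynomial bound $\poly(N + i)$ remains polynomial in the input size of \cgeadec{} for every index considered. The remaining steps---building $\Sigma'$, membership tests, evaluating $\phi$ from $\irr(\C)$, and extracting the critical generator---all run in polynomial time, yielding an overall polynomial-time procedure.

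The main obstacle is the extraction step: Proposition~\ref{prop:critical-poly} only guarantees $\phi(A) \subseteq \phi(M)$, whereas one actually needs $A \notin \F$, which we obtain via $A \subseteq M$. The resolution hinges on the structural inclusion $\phi(M) \cap \critanc(x) \subseteq M$ for $M \in \C'$, itself a consequence of the specific design of $\Sigma'$ together with the acyclicity of the critical base.
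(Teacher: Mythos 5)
Your proof is correct and follows essentially the same route as the paper's: build $\Sigma'$ from $\F$ and the ancestors' critical generators, run the assumed \acsaenum{} algorithm on $(X,\Sigma')$, $x$, and the inherited families $\irr(y)$ (validity justified by Lemma~\ref{lem:same-irreducibles-ancestor}), then decide via Lemmas~\ref{lem:extract-gen-from-irr} and~\ref{lem:gen-crits-finished}, using Proposition~\ref{prop:critical-poly} to extract a counterexample, with the same $|\irr_\C(x)|+1$ bound on the number of outputs consumed. The one place you go beyond the paper is in the extraction step: you observe that Proposition~\ref{prop:critical-poly} on $Y=M$ only promises $\phi(A)\subseteq\phi(M)$, and you close the gap by proving $\phi(M)\cap\critanc(x)\subseteq M$ for any $M\in\C'$ (via acyclicity and the fact that $\Sigma'$ contains every critical implication with conclusion in $\critanc(x)$), which yields $A\subseteq M$ and hence $A\notin\F$; the paper leans on Lemma~\ref{lem:extract-gen-from-irr} for existence but does not make explicit that the $A$ actually returned by the procedure inherits these properties, so your version is a touch more careful. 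Your omission of the aggregation $\Sigma'^*$ is harmless for this theorem as stated (it only matters later when arguing the instance stays in the bounded-degree class).
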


\begin{proof}
    Let us assume the existence of an incremental-polynomial time algorithm $\algoa$ solving \acsaenum{} on acyclic implicational bases, and let $p: \mathbb{N}^2 \to \mathbb{N}$ be the polynomial such that $\algoa$ outputs the $i^\text{th}$ solution in time ${p(N,i)}$ on input $(X,\Sigma)$, $x\in X$ and $\bigcup_{y\in \anc(x)} \irr(y)$, whose total size is $N$.
    Let us further consider an instance $(X,\, \irr(\C),\, x,\, \{\critgen(y):y\in \critanc(x)\},\, \F)$ of \cgeadec{}, and $\Sigma'$ as defined above and its aggregation $\Sigma'^*$.

    In order to solve \cgeadec, we start simulating $\algoa$ on input $(X,\Sigma'^*)$, $x$ and $\{ \irr_\C(y) : y \in \critanc(x) \}$. 
    Note that is indeed a valid input of \cgeadec{} as the ancestor relation induced by $\Sigma'^*$ on $\critanc(x)$ is $\critanc$ and $\irr_\C(y)=\irr_{\C'}(y)$ for all $y \in \critanc(x)$, according to Lemma \ref{lem:same-irreducibles-ancestor}.
    Each time $\algoa$ outputs a set $M \in \irr_{\C'}(x)$, we check whether $M \in \irr_\C(x)$. 
    
    If not, we stop $\algoa$: there exists $A \in \critgen(x) \setminus \F$ included in $M$ according to Lemma~\ref{lem:extract-gen-from-irr}. 
    Thanks to Proposition \ref{prop:critical-poly}, we can retrieve such a $A$ in polynomial time and return it as a counterexample to \cgeadec. 
    
    On the other hand, if $\algoa$ terminates only having output sets in $\irr_\C(x)$, we derive that $\irr_{\C'}(x) \subseteq \irr_\C(x)$. 
    According to Lemma \ref{lem:gen-crits-finished}, it means in particular that $\critgen(x)=\F$.
    Hence we can answer \texttt{yes} for \cgeadec.
   
    Note that our simulation of $\algoa$ may output at most $|\irr_{\C}(x)|+1$ sets before it either terminates, or produces a counterexample.
    By assumption this takes a total time of ${p(N, |\irr_{\C}(x)|+1)}$ which is polynomial in the input size of \cgeadec.
\end{proof}

\begin{remark}
    Notice that the hypothesis on incremental-polynomial time for \acsaenum{} is fundamental, as one could not guarantee that we get the counter example early enough if it performed only in output-polynomial time.
\end{remark}

\begin{corollary}\label{cor:icsenum-to-cgeagen}
    There is an incremental polynomial-time algorithm solving \cgeagen{} in acyclic convex geometries whenever there is an incremental polynomial-time algorithm solving \acsaenum{} within the same class.
\end{corollary}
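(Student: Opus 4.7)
The plan is to apply the standard saturation technique (alluded to in the discussion preceding Theorem~\ref{thm:icsenum-to-cgeadec}) to lift the decision-with-counterexample algorithm for \cgeadec{} into an incremental-polynomial enumeration algorithm for \cgeagen{}. Let us fix an instance of \cgeagen{} consisting of $X$, $\irr(\C)$, an element $x \in X$ and the family $\critgen(y)$ for every $y \in \critanc(x)$.

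The enumeration procedure maintains a family $\F \subseteq \critgen(x)$ of already-output critical generators, initialized to $\F := \emptyset$. At each iteration, it invokes the polynomial-time algorithm of Theorem~\ref{thm:icsenum-to-cgeadec} on the instance $(X,\, \irr(\C),\, x,\, \{\critgen(y) : y \in \critanc(x)\},\, \F)$ of \cgeadec{}. If the call returns \texttt{yes}, then $\F = \critgen(x)$ by definition of \cgeadec{}, and the procedure halts. Otherwise, the call returns a counterexample $A \in \critgen(x) \setminus \F$, which the procedure outputs and adds to $\F$, before repeating. Correctness is immediate: each output is a new member of $\critgen(x)$, no element is output twice since we always ask for a counterexample outside $\F$, and termination with $\F = \critgen(x)$ is detected by a \texttt{yes} answer.

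For the complexity analysis, let $N$ denote the size of the input to \cgeagen{}. After $i$ solutions have been output, $\F$ has cardinality $i$, and each of its members is a subset of $X$, so the overall encoding size of the instance of \cgeadec{} handed to Theorem~\ref{thm:icsenum-to-cgeadec} is bounded by $N + i \cdot |X| = O(N + i \cdot N)$. By Theorem~\ref{thm:icsenum-to-cgeadec}, the underlying algorithm for \cgeadec{} runs in time polynomial in this quantity, hence in time polynomial in $N + i$. Consequently, the delay between the $i^\text{th}$ and $(i+1)^\text{th}$ outputs, as well as the time preceding the first output and following the last, is bounded by a polynomial in $N + i$, which is precisely the definition of incremental-polynomial time.

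The proof is essentially mechanical once Theorem~\ref{thm:icsenum-to-cgeadec} is in place; there is no real obstacle, only the routine verification that the growth of $\F$ in the input to \cgeadec{} remains polynomially controlled across iterations, which it does since $|\critgen(x)| \leq |\irr_\C(x)|$ is itself part of the output and each critical generator has size at most $|X|$.
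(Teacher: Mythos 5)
Your proof is correct and follows the same saturation-technique approach that the paper invokes to derive this corollary from Theorem~\ref{thm:icsenum-to-cgeadec}; the main complexity analysis (delay polynomial in $N + i$ because the \cgeadec{} instance has size $O(N + i\cdot|X|)$) correctly yields incremental-polynomial time. One caveat on your final remark: the claim that $|\critgen(x)| \leq |\irr_{\C}(x)|$ is false in general --- the second construction of Example~\ref{ex:exponential} gives an acyclic convex geometry with $|\critgen(x)| = 2^k$ while $|\irr(x)| = k$ --- but this stronger input-polynomial bound is not needed, since incremental-polynomial time only requires the delay to be polynomial in $N + i$, which your earlier delay analysis already establishes without it.
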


Note that, in this paper, we did \emph{not} provide an incremental-polynomial time algorithm for \acsaenum{} \emph{in general}, but managed to do so for implicational bases having bounded conclusion- or premise-degree.
It thus remains to show that our approach still holds for such class of closure systems in order to prove Theorem~\ref{thm:critgen}, which we do now.

First, note that in the proof of Theorem~\ref{thm:icsenum-to-cgeadec}, given the irreducible closed sets of a closure system $\C$, we need the existence of an incremental-polynomial time algorithm $\algoa$ solving \acsaenum{} on instances $\Sigma'^*$ obtained by considering an aggregated subset of its critical base.
Thus, in order to derive an algorithm solving \cgeadec{} in acyclic closure systems of bounded (premise- or conclusion-) degree, we need to argue that such an aggregation $\Sigma'^*$ represents an acyclic closure system that also has bounded \mbox{(premise- or conclusion-)} degree.
This follows from the fact the aggregated critical base of an acyclic convex geometry with bounded degree has bounded degree by Corollary~\ref{cor:crit-degree}, and that any subset $\Sigma'$ of the critical base $\Sigma_\crit$ of an acyclic convex geometry defines an acyclic convex geometry $\C'$.

Now, concerning the complexity, we note that the number of critical generators is bounded by $d \cdot |X|$ if the aggregated critical base has (premise- or conclusion-) degree $d$.
In addition, any incremental-polynomial time algorithm performs in (input) polynomial time on an instance with polynomially many solutions.
We derive Theorem~\ref{thm:critgen}, that we restate here, as a corollary of Theorem~\ref{thm:icsenum} together with these observations, Lemma \ref{lem:crit-degree}, Corollary~\ref{cor:icsenum-to-cgeagen}, and Theorem~\ref{thm:cgeagen-to-mib}.

\restatethmmibgen*

\section{Discussion}\label{sec:discussion}

In this paper, we provided incremental-polynomial time algorithms for the two translation tasks \icsenum{} and \critbase{} in acyclic convex geometries of bounded (premise- or conclusion-) degree. 
We note that the delay of our algorithms is intrinsically not polynomial since our algorithm rely on a sequential top-down procedure which reduces the enumeration to an auxiliary enumeration problem whose instance grows with the total number of solutions output so far. 
This suggests the following question.

\begin{question}\label{qu:delay}
    Can \icsenum{} and \critbase{} be solved with polynomial delay for acyclic convex geometries of bounded degree?
\end{question}

Toward this direction, we now argue that the classic flashlight search framework may not be used in a straightforward way to obtain a positive answer to this question.
In the \emph{flashlight search technique}, the goal is to construct solutions one element at a time, according to a linear order $x_1,\dots,x_n$ on the ground set, and deciding at each step whether $x_i$ should or should not be included in the solution.
For this approach to be tractable, the framework requires to solve the so-called ``extension problem'' in polynomial time: given two disjoint subsets $K, F$ of the ground set, decide whether there exists a solution including $K$ and avoiding $F$.
We refer to e.g.,~\cite{boros2004algorithms,mary2019efficient} for a more detailed description of this folklore technique.
More formally, this technique reduces the enumeration to the following decision problem. 

\begin{problemdec}
  \problemtitle{Irreducible Closed Set Extension (\icsext)}
  \probleminput{An implicational base $(X,\Sigma)$ and $K,F\subseteq X$.}
  \problemquestion{Does there exist $M\in \irr(\C)$ such that $K\subseteq M$ and $M\cap F=\emptyset$?}
\end{problemdec}

We show that this problem is \NP-complete even for acyclic implicational bases of bounded degree, even for ${K=\emptyset}$, suggesting that this framework may not be used in a straightforward way to improve our results.

\begin{theorem}\label{thm:extension}
    The problem \icsext{} is \NP-complete even for acyclic implicational bases with degree, premise-degree, conclusion-degree and dimension (i.e., the maximum size of a premise) simultaneously bounded by 4, 4, 2 and 2, respectively.
\end{theorem}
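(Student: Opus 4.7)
Membership in \NP{} is straightforward: given a candidate $M\subseteq X$, one verifies in polynomial time using forward chaining with $\Sigma$ that $M$ is closed, contains $K$, is disjoint from $F$, and is irreducible by exhibiting some $y\notin M$ such that $y\in \phi(M\cup\{z\})$ for every $z\in X\setminus(M\cup\{y\})$. The heart of the proof is in establishing \NP-hardness with the stated bounds, which I would do by reduction from a bounded-occurrence version of 3-SAT; specifically, from 3-SAT in which each literal appears in at most three clauses, a variant known to be \NP-complete through standard padding from Tovey's $(3,4)$-SAT variant.

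The reduction builds upon the graph-plus-chain construction already used in the proofs of Theorems~\ref{thm:ics-mib-height2-hardness} and~\ref{thm:conclusion-hard}. Given $\varphi=c_1\wedge\dots\wedge c_m$ on variables $v_1,\dots,v_n$, the ground set $X$ consists of a pair $T_i,F_i$ per variable, one element $C_j$ per clause, and a chain of auxiliary elements $z_1,\dots,z_{N+1}$ whose length $N$ equals the total number of ``edges'' defined next. The implication set $\Sigma$ contains the variable-edges $T_iF_i\to z_k$ (one per $i$, with fresh $k$), the literal-edges $T_iC_j\to z_{k'}$ whenever $v_i$ occurs positively in $c_j$ (resp.\ $F_iC_j\to z_{k'}$ for negative occurrences), and the chain implications $z_k\to z_{k+1}$ up to the sink $z_{N+1}$. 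I put $K:=\emptyset$ and $F:=\{C_1,\dots,C_m,z_{N+1}\}$.

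Correctness follows the template of Theorem~\ref{thm:ics-mib-height2-hardness}. Since $z_{N+1}$ is a sink in $G(\Sigma)$, it must lie in every $M\in\irr(y)$ for $y\neq z_{N+1}$ by maximality, so every $M\in\irr(\C)$ avoiding $z_{N+1}$ is attached to $z_{N+1}$; the chain then forces every $z_k$ out of $M$, and $M\cap V$, with $V:=\{T_i,F_i,C_j\}$, is a maximal independent set of the auxiliary ``edge graph'' by the maximality of $M$. The additional requirement $M\cap\{C_j\}=\emptyset$ forces each $C_j$ to be dominated by $M$, which translates to $c_j$ being satisfied under the assignment $v_i\mapsto (T_i\in M)$; the variable edges $T_iF_i$ further ensure this assignment is well-defined, since if neither $T_i$ nor $F_i$ were in $M$, then adding $T_i$ would trigger no implication (as $F_i,C_j\notin M$), contradicting maximality. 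The main obstacle is simultaneously meeting all four parameter bounds: dimension~$2$ is clear from the shapes $ab\to z_k$ and $z_k\to z_{k+1}$; conclusion-degree~$2$ is precisely the effect of the chain trick, each $z_k$ being a conclusion of only its incoming chain and edge implications; premise-degree~$4$ and degree~$4$ are achieved via the literal-occurrence restriction, each of $T_i,F_i$ being a premise in the variable-edge implication plus at most three literal-edge implications, and each $C_j$ in exactly three. Acyclicity is immediate from the layered structure.
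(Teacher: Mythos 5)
Your proof is correct and follows essentially the same strategy as the paper: reduce from a degree-bounded variant of SAT, encode variables by a true/false pair, encode clauses by an element whose exclusion from the target irreducible set is enforced via $F$, and thread all ``edge'' implications through a long chain of auxiliary conclusion elements so that the conclusion-degree stays at $2$. The only substantive difference is a mild streamlining of the gadgetry: the paper uses two separate chains (one collecting the variable-consistency implications via elements $z_i$, one collecting the clause-incidence implications via elements $u_j^k$) meeting at a fresh sink $t$, and accordingly places \emph{all} chain elements in $F$; you merge everything into a single chain $z_1,\dots,z_{N+1}$, let $z_{N+1}$ itself play the role of the sink, and then observe that putting only $z_{N+1}$ in $F$ already forces the entire chain out of any closed $M$, which gives a slightly leaner instance and a cleaner argument that $M$ must be attached to the sink. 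The parameter accounting (dimension $2$, conclusion-degree $2$, premise- and total-degree $4$) goes through identically. Two small remarks: the paper obtains the ``each literal in at most three clauses'' bound directly from the Garey--Johnson variant where each \emph{variable} occurs in at most three clauses, which avoids the padding step you invoke from Tovey's $(3,4)$-SAT; and each $C_j$ sits in \emph{at most} three premises (clauses may have fewer than three literals), not exactly three, though this does not affect the bound.
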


\begin{proof}
    A polynomial-size certificate for \NP{} is a subset $M \subseteq X$ for which it is easily checked, in polynomial time, whether $M\in \irr(\C)$, $K\subseteq M$, and $M\cap F=\emptyset$. 
    
    Regarding completeness, it is known that SAT is \NP-complete even for instances where each clause contains at most $3$ literals, and each variable appears (negatively or positively) in at most $3$ clauses~\cite[Problem L01]{garey2002computers}.
    Let us consider an instance $\psi$ of this problem, with variables %$(x_i)_{i \in I}$ and clauses $(C_j)_{j\in J}$.
    $v_1,\dots,v_n$ and clauses $C_1,\dots,C_m$.
    For clause $C_j$, $1\leq j\leq m$, let us denote by $\ell_j^1,\dots,\ell_j^k$ its $k\leq 3$ literals, i.e., $\ell_j^i$ is either a variable or its negation.

    \begin{figure}[t]
        \centering
        \includegraphics[scale=1.0]{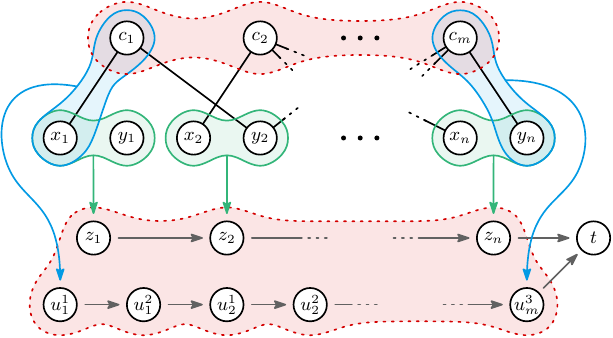}
        \caption{An illustration of the construction of Theorem~\ref{thm:extension}.
        Black edges correspond to the incidence bipartite graphs of $\psi$ and are represented for intuition only: they are not part of the \icsext{} instance. Rather, to each such edge corresponds a implication having an element of $U$ as conclusion; for readability, only two of these implications are represented.} 
        \label{fig:extension}
    \end{figure}

    We reduce $\psi$ to an instance $((X,\Sigma), \emptyset, F)$ of \icsext{} as follows.
    See Figure~\ref{fig:extension} for an illustration.
    Let us first describe the ground set $X$.
    First, for every variable $v_i$, $1\leq i\leq n$ we create three elements $x_i,y_i,z_i$ with $x_i$ representing $v_i$, $y_i$ representing $\overline{v_i}$, and $z_i$ being a gadget representing a conflicting selection of both $v_i$ and $\overline{v_i}$ in a non-valid assignment.
    Then, for every clause $C_j$, we create one element $c_j$ representing the clause, and $k=|C_j|\leq 3$ additional elements $u_j^1,\dots, u_j^k$ representing the literals of the clause.
    Finally, we add another gadget element $t$.

    We now describe $\Sigma$.
    For every $1\leq i\leq n$, we create an implication $\{x_i,y_i\} \to z_i$.
    For every clause $C_j$, and each variable $v_i$ appearing in position $k$ in $C_j$, $1\leq k\leq |C_j|$, we add the implication $\{c_j,x_i\} \to u_j^k$ if it appears positively, i.e., if $\ell_j^k=v_i$, and $\{c_j,y_i\} \to u_j^k$ if it appears negatively, i.e., if $\ell_j^k=\overline{v_i}$.
    Let $U := \bigcup_{1\leq j\leq m} \{ u_j^k: 1\leq k\leq |C_j|\}$ and $u_1,\dots,u_p$ be an arbitrary labeling of $U$.
    We complete the construction of $\Sigma$ by adding the implications $z_i\to z_{i+1}$ for every $1\leq i<n$, the implications $u_i\to u_{i+1}$ for every $1\leq i<p$, and the two implications $z_n\to t$ and $u_p\to t$. Notice that the premise of any implication is a minimal generator of $t$ by construction.

    Finally, we put $F=\{c_1,\dots,c_m\}\cup\{z_1,\dots,z_n\}\cup\{u_1,\dots,u_p\}$.

    Let us consider a positive instance of \icsext{}, i.e., some set $M\in \irr(\C)$ such that $M\cap F=\emptyset$.
    Let $a\in X$ such that $M\in \irr(a)$. As $z_n \in F$, it must be an ancestor of $a$. But note that $z_n$ implies a single element : $t$. Thus, $a=t$ hence $M\in \irr(a)$.
    
    % Version 1 : gen min
    %Note that $z_n$ and $u_p$ %participate to a single minimal generator each, being $z_n\to t$ and $u_p\to t$.
    %Since these elements are in $F$, it must be that $M \in \irr(t)$: indeed, otherwise, if $M \in \irr(a)$ for some $a\neq t$, then one of $z_n$ and $u_p$ can be add to $M$ without implying $a$. 

    Consider the assignment defined by
    \begin{equation*}
    v_i \mapsto 
    \begin{cases}
      1 & \text{if}\ x_i\in M, \\
      0 & \text{if}\ y_i\in M.
    \end{cases}
    \end{equation*}
    Note that this assignment is valid since $\{x_i,y_i\}$ is a minimal generator of $t$, hence not both $x_i$ and $y_i$ may belong to $M$.
    Let us argue that it satisfies $\psi$. 
    For every $1\leq j\leq m$, one of $\ell_j^k$, $1\leq k\leq |C_j|$ must be in $M$, as otherwise we can add $c_j$ to $M$ without implying $t$, a contradiction to $M\in \irr(t)$.
    Hence each clause is satisfied.
    This concludes the first direction.

    Conversely, suppose that $\psi$ is satisfiable.
    Consider a satisfying assignment and the set $M$ defined by taking $x_i$ if $v_i$ is set to true, and $y_i$ otherwise, for every $1\leq i\leq m$.
    Then $t\notin M$ and $M$ is closed, as it does not contain any premise. %By construction, $M$ is closed and $z\not\in\phi(M)$.
    Let us argue that it is maximal with that property, hence that $M\in \irr(t)\subseteq \irr(\C)$.
    Clearly, none of $z_1,\dots,z_n$ can be added to~$M$, and similarly for the elements of $U$ without adding $t$.
    Suppose toward a contradiction that an element $c_j$, $1\leq j\leq m$ can be added to $M$.
    Since $\psi$ is satisfiable, $M$ contains either $x_i$ or $y_i$ corresponding to a variable $v_i$ appearing either positively or negatively in $C_j$.
    Hence $M$ triggers one of the corresponding implication $\{c_j,\alpha\}\to u$ for $\alpha=x_i$ or $\alpha=y_i$ and $u\in U$.
    We derive that $M$ implies $t$, a contradiction.
    This concludes the proof of \NP{}-completeness.
    
    Lastly, notice that the implications in $\Sigma$ all have a premise of size 1 or 2 and that every element $x \in X$ is in at most 2 conclusions (exactly 2 for $t$, $z_i$ and $u_i$ with $i \geq 2$).
    Moreover, as a literal participates in at most 3 clauses, each of $x_i,y_i$, $1\leq i\leq n$ has total- and premise-degree at most 4, and the other elements have degree at most 3, hence the result.
\end{proof}

Among other directions, we wonder whether our results extend to broader classes of closure systems or to parameters that generalize the degree.
As a first step towards this avenue, we show that when we simply drop acyclicity, then the degree is helpless as far as \icsenum{} is concerned.
More precisely, we show in the next theorem that an algorithm solving \icsenum{} for closure systems with bounded degree can be used to solve \icsenum{} in the general case.

\begin{theorem} \label{thm:cycles-degree}
There is an output-polynomial time algorithm solving \icsenum{} in general whenever there is an output-polynomial time algorithm solving \icsenum{} on closure systems with bounded degree. 
\end{theorem}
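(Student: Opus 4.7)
\medskip

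\textbf{Plan of proof.} The plan is to reduce a general instance of \icsenum{} to an instance of bounded degree via a classical \emph{copy-and-equate} construction.

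First, given an implicational base $(X,\Sigma)$ as input, I will build a new implicational base $(X',\Sigma')$ as follows. For every $x\in X$, let $d_x$ denote the number of implications of $\Sigma$ in which $x$ occurs, and introduce $d_x$ fresh copies $x_1,\dots,x_{d_x}$ of $x$, so that $X':=\bigcup_{x\in X}\{x_1,\dots,x_{d_x}\}$. For every implication $A\to B$ of $\Sigma$, form an implication $A'\to B'$ of $\Sigma'$ by replacing each occurring element $y\in A\cup B$ by one of its copies $y_j$, in a way that every copy of every element is used in exactly one such replaced implication. Finally, for every $x\in X$ with $d_x\geq 2$, add to $\Sigma'$ the cyclic chain of unit implications $x_1\to x_2,\, x_2\to x_3,\,\dots,\,x_{d_x}\to x_1$.

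The second step will be to verify that $(X',\Sigma')$ has degree at most $3$ and is of polynomial size in the size of $(X,\Sigma)$. Indeed, each copy $x_j$ appears in exactly one replaced implication together with at most two chain implications (once as a premise and once as a conclusion), and $|X'|=\sum_x d_x\leq |X|\cdot|\Sigma|$, while $|\Sigma'|\leq |\Sigma|+2|X'|$. The entire construction is carried out in polynomial time.

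The third and main step is to establish a lattice isomorphism between the closure systems $(X,\C)$ and $(X',\C')$ associated to the two implicational bases. By the cyclic chain, any $C'\in\C'$ contains, for every $x\in X$, either all or none of the copies $x_1,\dots,x_{d_x}$; hence setting $\pi(C'):=\{x\in X:x_1\in C'\}$ defines a map $\pi\colon\C'\to\pow{X}$. Conversely, for $C\in\C$, let $\iota(C):=\bigcup_{x\in C}\{x_1,\dots,x_{d_x}\}$. I will show by direct verification, using the fact that a replaced premise $A'$ is contained in $C'$ if and only if the original premise $A$ is contained in $\pi(C')$, that $\pi$ and $\iota$ are mutually inverse inclusion-preserving bijections between $\C'$ and $\C$. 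Being a lattice isomorphism, $\pi$ restricts to a bijection between $\irr(\C')$ and $\irr(\C)$, with sizes related by a polynomial factor.

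It then suffices to run the assumed output-polynomial time algorithm for \icsenum{} on $(X',\Sigma')$ and translate each produced irreducible closed set through $\pi$. Since $|\irr(\C')|=|\irr(\C)|$ and the sizes are polynomially related, this yields an output-polynomial time algorithm for \icsenum{} in general, proving the theorem. The main technical nuisance lies in the careful bookkeeping when verifying that $\pi$ and $\iota$ define the sought lattice isomorphism; every other step is a straightforward polynomial-time manipulation.
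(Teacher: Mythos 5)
Your proposal is correct and is essentially identical to the paper's own proof: both replace each occurrence of an element by a fresh copy, glue the copies of the same element together with a cyclic chain of unit implications to force them to be equivalent, observe that the resulting base has degree at most $3$, and conclude via the induced bijection between the two closure systems.
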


\begin{proof}
Let $(X, \Sigma)$ be an instance of \icsenum{} with $X = \{x_1, \dots, x_n\}$ and $\Sigma = \{A_1 \to B_1, \dots, A_m \to B_m\}$.
We build an implicational base $(X', \Sigma')$ of polynomial size and that has bounded degree.
We first describe $X'$.
For each $x_i \in X$ and each implication $A_j \to B_j \in \Sigma$, we create a gadget element $x_i^j$ if $x_i \in A_j\cup B_j$ and we call $X_i$ the set of resulting elements associated with $x_i$, i.e., $X_i = \{x_i^j : A_j \to B_j \in \Sigma, \ x_i \in A_j \cup B_j\}$.
Recall that we consider that $B_j$ and $A_j$ are disjoint, so that an element can occur only once in an implication.
Therefore, $\card{X_i} = \deg(x_i)$.
We put $X' = \bigcup_{i = 1}^n X_i$.
We now describe $\Sigma'$.
For each implication $A_j \to B_j$ of $\Sigma$ we create an implication where each element $x_i \in A_j \cup B_j$ is replaced by its corresponding gadget $x_i^j$, that is, we build the implication $A'_j \to B'_j$ where $A'_j = \{x_i^j : x_i \in A_j\}$ and $B'_j = \{x_i^j : x_i \in B_j\}$.
Finally, for each $i$, we consider an arbitrary labeling $y_1, \dots ,y_k$ of the elements of $X_i$ and make all of them equivalent by putting in $\Sigma'$ the cycle of implications $\{y_1 \to y_2,\ \dots, \ y_{k-1} \to y_k, y_k \to y_1\}$.
This completes the description of $\Sigma'$.
Observe that $(X', \Sigma')$ can be built in polynomial time in the size of $(X, \Sigma)$ and that each element in $(X', \Sigma')$ has degree at most 3.
The reduction is illustrated on an example in Figure~\ref{fig:cycles}.

Note that the mapping $\psi \colon \C \to \C'$, $C \mapsto \bigcup_{x_i \in C} X_i$ is a polynomial-time computable bijection between the two closure systems $(X, \C)$ and $(X', \C')$.
Hence, with this bijection, an output-polynomial time algorithm solving \icsenum{} for $(X', \Sigma')$ can be used to solve \icsenum{} for $(X, \Sigma)$ within the same time bound.
\end{proof}

\begin{figure}[th!]
    \centering
    \includegraphics[scale=0.9]{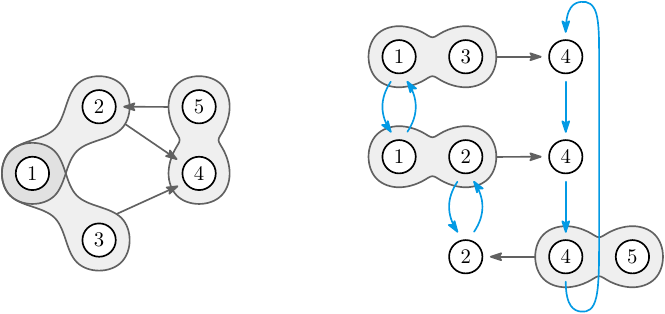}
    \caption{On the left, the IB $\Sigma = \{13 \to 4, 12 \to 4, 45 \to 2\}$. 
    On the right, we apply the reduction of Theorem~\ref{thm:cycles-degree}. 
    For readability, we repeat the label of each element in each implication, and we highlight in blue the implications of $\Sigma'$ encoding the equivalence between each repetition of an element.}
    \label{fig:cycles}
\end{figure}

Theorem~\ref{thm:cycles-degree} thus suggests to investigate instead more constrained classes of closure systems.
The top-down strategy we used for \icsenum{} requires to devise an ordering on the elements of the ground set that provides a meaningful ancestor relation.
Lower-bounded closure systems~\cite{freese1995free, adaricheva2013ordered} are a well-known generalization of acyclic convex geometries that appears as a natural candidate to investigate.
They are characterized by an acyclic $D$-relation, a subset of the $\delta$-relation (see Section~\ref{sec:acyclic}), which suggests that a topological ordering of the elements can be set and used as in our approach.
Besides, closure systems arising from graphs, also known as graph convexities~\cite{farber1986convexity}, may be a source of closure systems where natural orders may be found based on the properties on the underlying graphs.
We note, however, that these closure systems are in general not acyclic.

The algorithm we rely on to solve \mibgen{}, through \critbase{}, seems slightly more demanding. 
On top of a natural ordering of the vertices, it requires the critical base not only to be well defined, but also to be valid.
Critical generators are initially defined in convex geometries using extreme points, but in arbitrary closure systems, closed sets need not have extreme points, and critical generators may thus not exist.
However, as observed by different authors~\cite{nakamura2013prime, wild2017joy}, the critical generators of an element could instead be defined as its minimal generators that have the smallest closure inclusion-wise.
This definition makes critical generators of an element exist in any closure system as long as this element admits minimal generators, and it coincides with the original definition in convex geometries.
Nevertheless, even under this definition, the corresponding critical base of an arbitrary closure system needs not to be a valid IB (as it already fails in convex geometries), which prevents the strategy of finding critical generators to answer \mibgen{} from applying in general.
Identifying classes of closure systems, or convex geometries, where the critical base is valid is thus an interesting question to investigate in order to study the tractability of \mibgen{}.

A possible alternative is to use the lattice theoretic counterpart of critical generators, being $E$-generators~\cite{freese1995free, adaricheva2013ordered}.
Still, the corresponding $E$-base, much as the critical base, does not always constitute a valid IB of its closure system.
However, in the class of lower bounded closure systems, the $E$-base turns out to be valid.
Together with the fact that lower bounded closure systems are $D$-acyclic, this suggests our approach to \mibgen{} based on the problem \critbase{}, adapted to $E$-generators, might succeed.

As for parameters, an analogue of the notion of degeneracy coming from (hyper)graph theory would be a natural candidate for the generalization of the degree.
Degeneracy in implicational bases can be defined as the smallest left-degree among all possible left-to-right ordering of the vertices.
Dimension, also known as arity in the language of abstract convexity~\cite{van1993theory} is another intriguing parameter incomparable to degree.
It is the least integer $k$ such that the closure system admits an IB with premises of size at most $k$, which makes it particularly interesting for the translation task as \misenum{} is tractable for hypergraphs of bounded dimension.
Other parameters coming from convexity such as the Caratheodory number---for which some tractable algorithms can be obtained~\cite{wild2017joy}---, the Radon number, or the Helly number might also be avenues for future research.

Finally, we note that our algorithms for \icsenum{} have an \XP{} dependence on the degree $k$, i.e., it runs in $N^{f(k)}$ time for some computable function $f$ where $N$ is the size of the input plus the output.
This is not due to the top-down approach, that is, Theorem~\ref{thm:acsa-to-ics}, which preserves \FPT{}, i.e., running times of the form $f(k)\cdot N^{O(1)}$ for some computable function.
This is neither due to the computation of minimal transversals in Theorem~\ref{thm:acs-a-input-poly}, as this can be achieved with \FPT{} delay parameterized by the number of edges~\cite{elbassioni2008fpt}.
Rather, the bottleneck to \FPT{} time is the guess of irreducible selections from Lemma \ref{lem:conclusion-intersection-characterization}.
The same arises in Lemma~\ref{lem:intersection-progress} for the premise-degree.
Consequently, and since our algorithm for \critbase{} relies on the ones we devised for \icsenum{}, they also have an \XP{} dependence on the parameter.
Thus, it is natural to ask whether any of these algorithms can be improved to run in \FPT{} times parameterized by the degree.

% \paragraph{Acknowledgements.} 

\bibliographystyle{alpha}
\bibliography{arXiv} % unused biblio to remove after

\end{document}